\documentclass{article}
\usepackage[utf8]{inputenc}
\usepackage{amsmath, amsthm, amssymb}
\usepackage{xcolor}
\usepackage[round]{natbib}
\usepackage{paavomacros}
\usepackage[a4paper, total={6in, 8in}]{geometry}
\usepackage{graphicx}
\usepackage{placeins}

\newtheorem{assumption}{Assumption}
\newtheorem{lem}{Lemma}

\newtheorem{example}{Example}

\newcommand{\rnc}{\renewcommand}
\newcommand{\nc}{\newcommand}
\newcommand{\mrm}{\mathrm}

\nc{\mb}{\mathbb}
\rnc{\mc}{\mathcal}
\rnc{\E}{\mb{E}}
\rnc{\N}{\mb{N}}
\rnc{\R}{\mb{R}}
\nc{\Q}{\mb{Q}}
\nc{\Qn}{\mb{Q}_n}
\rnc{\P}{\mb P}
\nc{\Pn}{\P_n}
\nc{\RP}{\tilde{\P}}
\nc{\RPn}{\tilde{\P}_n}
\nc{\PnG}{\Pn^{\mc G}}
\nc{\PG}{\P^{\mc G}}
\rnc{\d}{\mrm d}
\nc{\C}{\mc{C}}
\nc{\D}{\mc{D}}
\nc{\B}{\mc{B}}
\nc{\oPo}{\stackrel{\mrm p}{\rightarrow}}
\nc{\oWo}{\stackrel{w}{\rightarrow}}
\nc{\oDo}{\rightsquigarrow}
\nc{\eff}{\|F\|}
\nc{\cred}{\color{red}}
\nc{\tcb}{\textcolor{black}}
\nc{\tcr}{\textcolor{red}}
\DeclareMathOperator{\im}{im}

\title{Quasi-randomization-based inference for multivariate Mann-Whitney effects under random missingness}
\author{Dennis Dobler\footnote{Institute of Statistics, RWTH Aachen University, Aachen, Germany}, 
Jörg-Tobias Kuhn\footnote{Department of Rehabilitation Sciences, TU Dortmund University, 44227 Dortmund, Germany}, 
Lubna Amro\footnote{Department of Statistics, TU Dortmund University, 44227 Dortmund, Germany}, and Paavo Sattler$^{\ddagger,*}$}
\date{\today}
\newcommand{\bqan}{\begin{eqnarray}}
	\newcommand{\eqan}{\end{eqnarray}}
	
\newtheorem{cor}{Corollary}	
\newtheorem{thm}{Theorem}	
\newtheorem{rem}{Remark}	
\newtheorem{prop}{Proposition}

\begin{document}

\maketitle

\begin{abstract}
Marginal Mann-Whitney effects are widely used across various fields of research, and extensions of this estimand have been developed in many directions in statistical methodology. In this paper, we focus on an extensions for repeated measurements and factorial designs subject to randomly missing data. 
In a previous work by \cite{rubarth22}, asymptotically correct tests were developed under the assumption of deterministic missing indicators. In contrast, the approach in the present paper accounts for the stochastic nature of missing values under realistic mechanisms.
Thus,  the involved covariance matrix incorporates the true variability of missing data. The combination with a quasi-randomization procedure using random permutations within each data point yields asymptotically exact tests and a generally improved type-I error control.
Additionally, the tests control the type-I error for finite sample sizes in the special case of exchangeable sampling distributions.

Simulations across a wide range of settings demonstrate the benefits of the proposed method in small samples, also for different missingness mechanisms.
A real data analysis about school children learning math illustrates several practical aspects of the tests' application.
\end{abstract}
\textbf{Keywords:} asymptotic statistics, exact tests, factorial designs, missing data, relative effects, repeated measurements, resampling, robust statistics.

\section{Introduction}

Mann-Whitney effects, also known as `relative effects', are popular and easy-to-interpret parameters for comparing distributions.
They enable more robust statistical approaches than $t$-tests.
Originally introduced by \cite{Mann1947} in two-sample comparisons,
they are additionally applicable to ordinal data. Generalizations to multiple samples or multivariate settings with repeated measures were for example investigated by \cite{brunner2017rank}, \cite{dobler2020b},  and \cite{thiel2025}.
However, these studies assume complete observability---for many practical applications an unrealistic restriction.
Missing values frequently arise due to non-response, dropouts, or measurement limitations in applied research \citep{Schafer1997,little20}.

Methods for dealing with missing values have been developed for the analysis of means in various settings, e.g., \cite{Bhoj1978, Shao96, looney2003method, Amro2017, amro2024wildbootstrap}. 
These approaches address different scenarios, such as paired data, repeated measures, or factorial designs, and are tailored to handle missing observations in a statistically valid manner. In \cite{rubarth22}, tests for the multivariate Mann-Whitney effect were proposed specifically to handle missing values in a particular setting. 
This was extended in \cite{rubarth22a} to the multiple group case and further generalized to clustered data settings, where individual observations may be recorded repeatedly.
Both of these approaches primarily relied on distributional approximations, for example with F‑approximations, instead of  asymptotically exact testing.
Also, deterministic rather than random missingness schemes were considered.
These assumptions and approximations might be too unrealistic or simplistic in many fields of application.
That is why, in the present work, we develop statistical methodology which allows for randomly missing data and tests that are asymptotically exact.

To this end, we investigate alternative inferential approaches and analyze their statistical large-sample properties. 
As a means to improve the type-I error control, especially for small samples, we employ a resampling technique based on randomly re-arranging the entries of each data point, thereby creating an artificial regime of exchangeability.

In accordance with \cite{zhang23}, we classify the present approach as a \emph{quasi-randomiza\-tion test}, since the randomization is not based on a physical randomization mechanism in the experiment. The procedure used in the present paper is based on randomly transforming the data points; see e.g.\ \cite{romano89}. Such tests exhibit the potential for asymptotic guarantees \citep{Hoeffding52,romano90}, while controlling the type-I error probability for finite sample sizes in special cases of certain sampling distributional invariance, with related procedures being referred to as \emph{permutation tests} in parts of the more recent literature \citep{hemerik18,Goemann2021,hemerik24}.

The test statistics employed in the present paper are inspired by the Wald-type tests of \cite{rubarth22} and \cite{rubarth22a}.
Tests about multivariate Mann-Whitney effects in repeated measures designs under complete observability and for hypotheses formulated in terms of distribution functions were earlier developed by \cite{konietschke10}.
For the approach to test hypotheses in terms of multiple Mann-Whitney effects instead of multiple cumulative distribution functions (in a multiple samples framework), we refer to \cite{brunner2017rank}.
Parts of the methodological setting considered in the present work were also considered in the PhD thesis of \cite{jerdack87} who developed quadratic-form and rank-based tests for interchangeability, also under random missingness of some data.

Giving an overview, it is the aim of the present paper to develop statistical methods for combining
\begin{itemize}
    \item possibilities to treat randomly missing data in one or multiple components;
    \item a probabilistic estimand-based focus which avoids common paradoxes;
    \item analyses of the large sample properties;
    \item improving the reliability of inference methods by means of quasi-randomization while
    \item ensuring the finite type-I error control of tests under exchangeability;
    \item additionally tailoring the quasi-randomization technique to guarantee finite type-I error control under the weaker interchangeability, e.g., in testing for interaction effects in a general factorial design framework.
\end{itemize}

The paper is structured as follows:
In Section~\ref{Secmod}, we introduce the statistical model and its corresponding estimators, and we examine their asymptotic behavior under appropriate assumptions.
Next, Section~\ref{sec:rand_inference} presents the framework for quasi-randomization-based inference.
In Section~\ref{simulation}, we investigate the finite-sample performance of the proposed methods through a comprehensive simulation study, focusing on control of the type-I error probability and power under various alternatives.
Section~\ref{application} returns to the example from the beginning of the paper and provides an illustrative application based on the empirical data on children learning mathematics.
Finally, Section~\ref{conclusion} offers a summary and discusses potential directions for future research.
All proofs are included in an appendix, which also contains additional details and further simulation results.

  \section{Model, estimators, and their asymptotic behavior} \label{Secmod}
  \subsection{Model and estimands}
  \textcolor{black}{Let $(\Omega, \mathcal{A},P)$ be the underlying probability space that we will consider throughout the paper. We}
consider the following nonparametric repeated measures model with $n$ independent and identically distributed (i.i.d.) random vectors
\begin{align*}
    (\boldsymbol{X}_k, \boldsymbol{\lambda}_k)=& \ (X_{1k}, ... X_{dk},\lambda_{1k}, \dots, \lambda_{dk})^\top : \Omega \to \mathbb{R}^{2d}, \text{ with}\\ \lambda_{ik}= & \begin{cases}
	 1, \quad \text{if $X_{ik}$ is observed}	\\
	 	 0, \quad \text{if $X_{ik}$ is non-observed}
 \end{cases}
 i=1, ...,d; \ \ k=1, ...,n;
\end{align*} 
 of dimension $2d \geq 4$. For the most part, we assume the missingness mechanism `missing completely at random' (MCAR):
 \begin{assumption}
 \label{ass:MCAR}
  $(\boldsymbol{X}_1, \dots, \boldsymbol{X}_n )$ and $(\boldsymbol{\lambda}_1, \dots, \boldsymbol{\lambda}_n )$ are stochastically independent.
 \end{assumption}
 Under this formulation, the missingness indicators corresponding to the individual components may be mutually dependent. 
 Such dependence among the $\lambda$'s might naturally arise when strongly correlated components of the observation vector lead to a corresponding dependence in the missingness mechanism.

 The stochastic independence of the missing indicators from the values of the vector that the analysis will be conducted conditionally on the indicators and treated as fixed. However, we will also study MAR mechanisms (`missing at random')  through simulations in the appendix. 
 This framework allows for some influence of the observed values of $X_{ik}$ on the missingness mechanism \citep{little20}.
 
 We denote the normalized marginal cumulative distribution functions by
  $$ 
 X_{ik} \sim F_{i}(x)=\frac{1}{2}[F^+_{i}(x)+F^-_{i}(x)], \ i=1, ...,d, \ k=1, ...,n,
 $$ 
 where $F^+_{i}(x)=P(X_{i1}\leq x)$ and $F^-_{i}(x)=P(X_{i1}<x)$ are their right- and left-continuous versions, respectively. The normalized versions $F_{i}(x)$ and the corresponding estimators will allow for a straightforward inclusion of ties in the data \citep{Ruymgaart80, brunner18}. Thus, we require no assumptions regarding the discreteness or continuity of data generating mechanism. 
By $N=\sum_{i=1}^{d}\sum_{k=1}^{n}\lambda_{ik} =: \sum_{i=1}^d \lambda_{i\cdot}$ we denote the total number of observations.
 
Although the model considered in the present paper is nonparametric, there still exist different approaches for quantifying the stochastic superiority in the presence of multivariate data: for instance, \cite{brunner1999rank} and \cite{domhof2002rank} considered the weighted  marginal Mann-Whitney effects,
$ 
 q_{i}=\int{G_N(x)dF_{i}(x)},
 $ 
  where $G_N(x)=N^{-1}\sum_{i=1}^{d}\lambda_{i\cdot}F_{i}(x)$ is the weighted average of all distribution functions in the experiment. However, the use of $q_{i}$ as a measure of the causal differences between the distributions might not always be appropriate because it strongly depends on the observability of the data. 
  It is therefore not constant in the model and testing hypotheses based on the weighed Mann-Whitney effect might be questionable. To address this issue, we follow \cite{rubarth22} and \cite{brunner2017rank} and use unweighted Mann-Whitney effects as follows: 
  $$
 p_{i}=\int{H(x)dF_{i}(x)} = P(Z \preceq X_{i1}) := P(Z < X_{i1}) + \tfrac12 P(Z = X_{i1}).
 $$
  Here, the random variable $Z\sim  H$ is stochastically independent of $\vX_1, \dots, \vX_n$, and 
  $ H(x)=d^{-1}\sum_{i=1}^{d}F_{i}(x)$ is the unweighted average of all distribution functions. The Mann-Whitney effect $ p_{i}$ models the relative relationship of the distribution $F_{i}$ to the average distribution $H$. If $p_{i}<p_{j}$, the observations from distribution $F_{i}$ tend to be smaller than the observations from distribution $F_{j}$ \textit{relative to the average distribution $H$}, and vice versa if  $p_{i}>p_{j}$. In the case of $p_i = p_j$, none of the observations tends to result in smaller or larger values, even though the underlying distributions may differ considerably and a direct comparison of $F_i$ and $F_j$ might suggest otherwise.

{Denote by $\mathbb{R}^d_1 = \{\vx  = (x_1,\dots, x_d)^\top \in \mathbb R^d : x_1, \dots, x_d \geq 0,  \sum_{i=1}^d x_i =1 \}$ the unit simplex. To formulate null hypotheses for testing in this nonparametric setup,  let $\boldsymbol{p}=(p_{1}, ...,p_{d})^\top \in \mathbb{R}^d_1$ denote the true vector of the marginal Mann-Whitney effects.
 Also, let $\boldsymbol{C} \in \mathbb{R}^{r \times d}$ denote some non-trivial hypothesis matrix.
Null hypotheses take the form $H_0 = H_0(\vC, \vc) :\{\vC\boldsymbol{p}=\boldsymbol{c}\}$, which are tested against alternative hypotheses  $H_1 = H_1(\vC, \vc):\{\vC\boldsymbol{p}\neq \boldsymbol{c}\}$, for some $\boldsymbol{c} \in \mathbb{R}^r$.
For instance, if $\boldsymbol C= \boldsymbol I_d \in \mathbb{R}^{d \times d}$, the identity matrix, then any fixed hypothetical vector can be tested.
This choice will also be relevant for confidence regions for $\boldsymbol{p}$.
Alternatively, $\boldsymbol{C}$ may be chosen to be a contrast matrix, i.e., $\boldsymbol{C1}_d=\boldsymbol{0}_r$ where $ \boldsymbol{1}_d=(1, ..., 1)^\top \in \mathbb{R}^d$ and  $\boldsymbol{0}_r=(0, ..., 0)^\top \in \mathbb{R}^r$, if one is interested in certain differences of the Mann-Whitney effects.
Particular examples of usual hypothesis matrices in one- or two-way layouts will be given in subsequent section.}

\begin{rem}
In the case that $\vc=\vnull_r$, we assume w.l.o.g.\ that $\vC$ is a projection matrix because the unique projection matrix $\boldsymbol{P} =  \boldsymbol{C}^\top (\boldsymbol{C}\boldsymbol{C}^\top)^+ \boldsymbol{C}$ based on the Moore-Penrose pseudo-inverse formulates the same hypotheses since $\vP \vp = \vnull_d$ if and only if $\vC \vp = \vnull_r$. Also note that $\boldsymbol{P}$ is a contrast matrix if and only if $\boldsymbol{C}$ is one.

For hypotheses with $\vc\neq\vnull_r$, no convention for the hypothesis matrix is available. Although different pairs $(\vC,\vc)$ may describe the same hypothesis, there is in general no guarantee that an equivalent representation can be obtained through the projection matrix $\boldsymbol{P}$. Therefore, in this case the particular pair $(\vC,\vc)$ used in the analysis should be reported explicitly for reproducibility; see \cite{sattler2023} and \cite{sattler2025}.
\end{rem}

\begin{rem}
\label{rem:fay}
Using the same arguments as in \cite{fay18} combined with some straightforward algebra, it is easy to see that contrasts in the marginal effects can be written as follows:
$$ p_i - p_\ell =  E(H(X_{j1}) - H(X_{\ell 1})). $$
As a consequence, such contrasts also constitute causal estimands with an interpretation similar to the ``quantile difference causal effect'', denoted by $\phi$ in \cite{fay18}.

\textcolor{black}{Alternatively, one could consider the estimands $$ \theta_i = \frac1d \sum_{s=1}^d P(X_{s1} \preceq X_{i1}), \quad  i=1,\dots,d, $$
which might offer a more straightforward causal interpretation compared to $p_i$.
However, estimating $\theta_i$ would require a sufficiently large number of simultaneously observable pairs $(X_{sk}, X_{ik})$, i.e., $\lambda_{sk}=\lambda_{ik}=1$.
Hence, estimation of $\theta_i$ is more challenging (from the perspective of data availability) than the estimation of $p_i$; cf.\ Section~\ref{ssec:estimation} below.
Tests about $\theta_i$ would be close in spirit to multivariate extensions of the sign test.}
\end{rem}

\begin{example}
\label{ex:2sample}
Let us consider the simplest case, $d=2$, and let $X_1\sim F_1$ and $X_2\sim F_2$ be stochastically independent random variables. Then, for $Z \sim \tfrac12 (F_1 + F_2)$ stochastically independent of $(X_1,X_2)$,
    $$\boldsymbol{p} = \big(P(Z \preceq X_1), P(Z \preceq X_2)\big)^\top = \big(\tfrac14 + \tfrac12 P(X_2 \preceq X_1), \tfrac14 + \tfrac12 P(X_1 \preceq X_2)\big)^\top .$$
    Testing whether this vector equals $(\tfrac12, \tfrac12)^\top$ boils down to a test about the Mann-Whitney effect, i.e., $\tilde H_0: P(X_2 \preceq X_1) = \tfrac12$; see e.g.\ \cite{brumu00} in the case of two independent sample groups and \cite{dobler23} for a quasi-randomization test in the case of paired randomly right-censored time-to-event data.
    It is rather obvious how to derive one- and two-sided tests and confidence intervals for the Mann-Whitney effect in the context of the present example.
    These will result from Remark~\ref{rem:CI} below in a straightforward manner.

    Furthermore, 
suppose that $\vlam_k \stackrel{i.i.d.}{\sim} \eta \cdot \delta_{(0,1)^\top} + (1-\eta)\cdot \delta_{(1,0)^\top}, \eta \in (0,1) $,  where $\delta_{\vx}$ be the Dirac measure in $\vx \in \mathbb R^d$.
That is, exactly one value of each pair of outcome variables is observable, the other is not.
In this case, $\boldsymbol{p}$ is still identifiable.
Estimation of $\vp$ may essentially be based on the usual Mann-Whitney effect estimator in two independent sample groups with random sample sizes. 
\end{example}

Because the case of $d=2$ (without missingness) has already been thoroughly investigated in the literature, as we have seen in Example~\ref{ex:2sample} above, we will focus on the case $d>2$ (and of course $n \geq 2$) in the remainder of this paper.

\subsection{Estimation}
\label{ssec:estimation}

Throughout, we will try to suppress the notion of the sample size $n$ in the notation of all estimators for a lighter presentation. 
All estimators are presented with a `hat' above their symbol.
 The marginal Mann-Whitney effects can be estimated by plugging-in the empirical versions of $F_{i}$ and $H$, i.e.,  for $x \in \mathbb{R}$, 
\begin{align*}
\hat{F}_{i}(x)&:=\frac{1}{\lambda_{i.}}\sum_{k=1}^{n}\lambda_{ik}\hat{F}_{ik}(x)=\frac{1}{\lambda_{i.}}\sum_{k=1}^{n}\lambda_{ik}c(x-X_{ik}), 
\\
               \hat{H}(x)&:= \frac1d \sum_{i=1}^d \hat F_i(x) =\frac{1}{d}\sum_{i=1}^{d}\frac{1}{\lambda_{i.}}\sum_{k=1}^{n}\lambda_{ik}c(x-X_{ik}).
\end{align*}
Here, 
$c$ is the normalized version of the counting function, i.e., $c(u)=0, 1/2$, or $1$ according to whether $u<0, u=0$, or $u> 0$. 
Note that $\hat F_i(x)$ is well-defined only if there is at least one observation in each component.
Otherwise, we define $\hat p_i := \tfrac12$ because it seems impossible to favor the $i^{th}$ component over any other.
For $\lambda_{i\cdot}>0$, we estimate the
marginal Mann-Whitney effects $p_{i}$ by
\begin{align*}
\hat{p}_{i}&:=\int\hat{H}d\hat{F}_{i}=\frac{1}{\lambda_{i.}}\sum_{k=1}^{n}\lambda_{ik}\hat{H}(X_{ik})
=\frac{1}{\lambda_{i.}}\sum_{k=1}^{n}\frac{\lambda_{ik}}{d}\sum_{s=1}^{d}\frac{1}{\lambda_{s.}}\sum_{l=1}^{n}\lambda_{sl}c(X_{ik}-X_{sl}).
\end{align*}

In the following, some asymptotic theory  for the above estimators is developed. 
These results will be necessary for justifying the use of hypothesis tests derived from $\hat{\vp}=(\hat p_1, \dots, \hat p_d)^\top$.
To prepare this, we denote by $\stackrel p \rightarrow$ and $\stackrel d \rightarrow$ convergence in probability and in distribution, respectively.
For the asymptotics, we will rely on the following observability assumption:
\begin{assumption}\label{Assump1} 
$P(\lambda_{i1} = 1) > 0 $ \  for each \ $i=1, \dots, d$. 
\end{assumption}
{Under this assumption, $\vp$ is identifiable.}
As a consequence of Assumption~\ref{Assump1}, for some $\kappa_i \in (0,1]$, we have $\frac{\lambda_{i.}}{n} \stackrel p \rightarrow \kappa_{i} $ as $n\to \infty$ by the law of large numbers,  $i=1,..., d$.
 This ensures that, eventually, `sufficiently many' subjects will be observable in each component.

\begin{prop}\label{prop:bias}
 Under Assumptions~\ref{ass:MCAR} and~\ref{Assump1}, for all $i =1, \dots, d$, $\hat F_i,\hat H$, and $\hat p_i$ are asymptotically unbiased estimators for $F_i, H$, and $p_i$, respectively. 
 The biases of $\hat F_i$ and $\hat H$ decrease exponentially fast; the bias of $\hat p_i$ is of the order $o(n^{-1})$.
\end{prop}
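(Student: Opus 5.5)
The plan is to compute $E(\hat p_i)$ exactly by conditioning on the missingness indicators, as Assumption~\ref{ass:MCAR} permits, and then to bound what remains. I would treat $\hat F_i$ and $\hat H$ first, since their bias comes only from the degenerate event $\{\lambda_{i\cdot}=0\}$, and then turn to $\hat p_i$.

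\emph{Step 1 ($\hat F_i$ and $\hat H$).} On $\{\lambda_{i\cdot}>0\}$, Assumption~\ref{ass:MCAR} gives $E(\hat F_i(x)\mid \vlam_1,\dots,\vlam_n)=\lambda_{i\cdot}^{-1}\sum_k\lambda_{ik}F_i(x)=F_i(x)$. Hence the bias of $\hat F_i$ is bounded by $P(\lambda_{i\cdot}=0)=(1-P(\lambda_{i1}=1))^n$. By Assumption~\ref{Assump1} this decays geometrically, i.e.\ exponentially fast. The bias of $\hat H$ is an average of these $d$ biases, so it also decays exponentially, and this holds uniformly in $x$.

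\emph{Step 2 (exact conditional mean of $\hat p_i$).} Work on the event $A_n=\{\lambda_{s\cdot}>0 \text{ for all } s\}$. Its complement has exponentially small probability, and there $\hat p_i-p_i$ is bounded by $1$, so it contributes only $o(n^{-m})$ for every $m$. On $A_n$, split the double sum in $\hat p_i$ into pairs with $l\neq k$ and pairs with $l=k$. For $l\neq k$, the independence of subjects gives $E\,c(X_{ik}-X_{sl})=w_{si}:=P(X_{s1}\preceq X_{i2})$. For $l=k$ we get $v_{si}:=P(X_{s1}\preceq X_{i1})$, with $v_{ii}=w_{ii}=\tfrac12$. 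Since $p_i=d^{-1}\sum_s w_{si}$, this yields
\begin{equation*}
E(\hat p_i\mid \vlam_1,\dots,\vlam_n)-p_i=\frac1d\sum_{s\neq i}\frac{\sum_{k}\lambda_{ik}\lambda_{sk}}{\lambda_{i\cdot}\lambda_{s\cdot}}\,(v_{si}-w_{si}) \quad\text{on } A_n .
\end{equation*}

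\emph{Step 3 (rate).} I would take expectations over $\vlam$ by ratio expansions. One uses $\lambda_{s\cdot}/n\to\kappa_s$, with Chernoff bounds on $\{\lambda_{s\cdot}<n\kappa_s/2\}$. This reduces the bias to $n^{-1}d^{-1}\sum_{s\neq i}\frac{P(\lambda_{i1}=\lambda_{s1}=1)}{\kappa_i\kappa_s}(v_{si}-w_{si})+O(n^{-2})$.

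\emph{Main obstacle.} The stated $o(n^{-1})$ rate requires that this leading $n^{-1}$ coefficient vanish. I would try first to show that the within-subject terms $v_{si}-w_{si}$ cancel after averaging over $s$ with these weights, for instance by pairing $(s,i)$ with $(i,s)$ and using $v_{si}+v_{is}=w_{si}+w_{is}=1$. I expect this to be the crux of the argument. The antisymmetry disposes of the analogous pairs only when the weights are symmetric, so this step decides whether the cancellation goes through. If the coefficient does not vanish in general, the computation above identifies it explicitly, and it is zero in particular when the components within a subject are independent.
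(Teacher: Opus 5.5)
Your Steps 1--3 are correct and follow the paper's own route. The paper also conditions on the $\sigma$-algebra generated by the missingness indicators and uses MCAR to reduce to $F_i$ on $\{\lambda_{i\cdot}>0\}$. It ends up with exactly your same-subject remainder, $E(\lambda_{i\cdot}^{-1}\lambda_{s\cdot}^{-1}\sum_k\lambda_{ik}\lambda_{sk})(v_{si}-w_{si})$.

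The gap is the step you flag as the crux, and it cannot be closed. Put $\kappa_{is}:=P(\lambda_{i1}=\lambda_{s1}=1)$. The coefficient $\frac1d\sum_{s\neq i}\frac{\kappa_{is}}{\kappa_i\kappa_s}(v_{si}-w_{si})$ is not zero in general. Pairing $(s,i)$ with $(i,s)$ sends a term of the bias of $\hat p_i$ to a term of the bias of $\hat p_s$, i.e.\ to a different coordinate. The weights are symmetric and $v_{is}-w_{is}=-(v_{si}-w_{si})$, so the pairing only shows that these leading coefficients sum to zero over $i$. That is already implied by $\boldsymbol{1}_d^\top\hat{\boldsymbol{p}}\equiv d/2$. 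For a fixed $i$ nothing cancels.

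Here is a concrete counterexample. Take complete data ($\lambda_{ik}\equiv 1$, so both assumptions hold), $d=3$, and $X_{1k}=Y_k$, $X_{2k}=Y_k+1$, $X_{3k}=Y_k+2$ with $Y_1,\dots,Y_n$ i.i.d.\ $\mathcal{N}(0,1)$. Then $v_{21}=v_{31}=0$ and $w_{s1}=1-\Phi((s-1)/\sqrt2)$. For every $n$ this gives $E(\hat p_1)-p_1=-\frac{1}{3n}\{2-\Phi(1/\sqrt2)-\Phi(\sqrt2)\}\approx-0.106/n$, which is not $o(n^{-1})$.

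Hence the $o(n^{-1})$ part of the statement is not provable as written. What your computation establishes is asymptotic unbiasedness with bias $O(n^{-1})$. The paper's proof establishes the same thing: it records this within-subject term as an $O(n^{-1})$ remainder without claiming it vanishes. The bias is of exact order $n^{-1}$ unless the coefficient above vanishes, as it does, e.g., for independent components within each subject.

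You should drop the cancellation attempt and conclude with $O(n^{-1})$. This matches the remark after the proposition: only removing the within-vector comparisons $l=k$, i.e.\ passing to a $U$-statistic-type estimator, would make the bias exponentially small.
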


\begin{rem}
Note that the biases in Proposition~\ref{prop:bias} are only due to the hypothetical possibility that $ \lambda_{i\cdot} = 0$, $i=1,\dots, n$.
The proof of Proposition~\ref{prop:bias} reveals that the bias of $\hat p_i$ could be reduced to an exponentially decreasing bias if $\hat p_i$ is re-defined in a way such that component comparisons within the same random vectors are avoided. This estimator would be more in the spirit of a $U$-statistic rather than a $V$-statistic. For simplicity, also in view of a computationally more efficient implementation, we continue with the originally definition of $\hat p_i$.    
\end{rem}

 \subsection{Asymptotics}
 \cite{rubarth22} derived a central limit theorem for $\sqrt{n}(\hat{\vp}-\vp)$ and tests about $\vp$ in the case of \emph{deterministic missingness indicators}.
They proved that, as $n \to\infty$, $\sqrt{n}(\hat{\vp}-\vp)$ asymptotically follows a multivariate normal distribution with expectation $\boldsymbol{0}_d$ and some covariance matrix $\vV_0$. This matrix reflects both the stochastic variability and the deterministic constraints of the estimator. The latter imply that $\vV_0$ is necessarily singular; below, we will generally assume that this singularity is minimal. In contrast to \cite{rubarth22}, where the missing indicators are assumed deterministic, in our setting they are explicitly incorporated into the model and, thus, substantially contribute to the asymptotic covariances. 

We arrive at our first main result, a central limit theorem for $\hat{\boldsymbol{p}}$.
Due to the present randomness of the $\lambda$'s, the asymptotic covariance matrix is different from $\vV_0$.

\begin{thm}
    \label{thm:clt}
    Under Assumptions~\ref{ass:MCAR} and~\ref{Assump1}, we have
    $$\sqrt{n}(\h \vp - \vp) \stackrel d \longrightarrow \mathcal{N}_d(\boldsymbol 0_d, \boldsymbol V)$$
    as $n \to \infty$, where $\vV \in \mathbb{R}^{d\times d}$ is a singular covariance matrix.
\end{thm}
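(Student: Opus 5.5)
The approach is an asymptotic linearization (Hájek projection) of $\hat{\vp}$ into a sum of i.i.d.\ terms in the pairs $(\vX_k,\vlam_k)$, followed by the multivariate central limit theorem and Slutsky's lemma.

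\textbf{Step 1: reduce to the event where all components are observed.} By Assumption~\ref{Assump1}, $P(\min_i \lambda_{i\cdot}=0)\le \sum_i (1-P(\lambda_{i1}=1))^n \to 0$ exponentially fast. The convention $\hat p_i=\tfrac12$ on this event is therefore asymptotically irrelevant, and we may work on $\{\lambda_{i\cdot}>0\ \forall i\}$.

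\textbf{Step 2: decompose $\hat p_i - p_i$.} Write
$$\hat p_i - p_i = \int H\,\d(\hat F_i - F_i) + \int (\hat H - H)\,\d F_i + \int (\hat H-H)\,\d(\hat F_i - F_i).$$
The cross term is handled by the standard argument for normalized empirical distribution functions, as in \cite{brunner18}. Its expectation conditional on $\vlam$ is a double sum whose diagonal terms are $O(\lambda_{i\cdot}^{-1})$ and whose off-diagonal terms are centred. Its conditional second moment is therefore $O_P(n^{-2})$ (diagonal pairs within the same $k$ contribute $O(n^{-1})$ at most), so the cross term is $o_P(n^{-1/2})$. Assumption~\ref{ass:MCAR} is used here: conditional on $\vlam$, the $\vX_k$ are still i.i.d.\ with marginals $F_i$, and the indicators act as fixed weights.

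Integrating by parts, $\int(\hat H - H)\,\d F_i = -\int F_i\,\d(\hat H-H)$. Both linear terms then become
$$\frac{1}{\lambda_{i\cdot}}\sum_k \lambda_{ik}\big(H(X_{ik})-p_i\big) - \frac1d\sum_{s}\frac{1}{\lambda_{s\cdot}}\sum_k \lambda_{sk}\big(F_i(X_{sk}) - \textstyle\int F_i\,\d F_s\big).$$

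\textbf{Step 3: replace the random denominators.} Replace $\lambda_{s\cdot}/n$ by $\kappa_s=P(\lambda_{s1}=1)$. The summands $\lambda_{sk}(g(X_{sk})-E g(X_{s1}))$ are centred, because $\vlam_k$ is independent of $\vX_k$. Hence each sum is $O_P(\sqrt n)$, and $\sqrt n\,(\hat{\vp}-\vp) = n^{-1/2}\sum_k \vY_k + o_P(1)$ with
$$Y_{ik} = \frac{\lambda_{ik}}{\kappa_i}\big(H(X_{ik})-p_i\big) - \frac1d\sum_{s=1}^d \frac{\lambda_{sk}}{\kappa_s}\Big(F_i(X_{sk})-\int F_i\,\d F_s\Big).$$
The $\vY_k$ are i.i.d., bounded and centred. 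The multivariate CLT and Slutsky's lemma then give the limit $\mathcal N_d(\vnull_d,\vV)$ with $\vV=\mathrm{Cov}(\vY_1)$. This matrix depends on the joint law of $\vlam_1$, not only on the $\kappa_i$.

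\textbf{Step 4: singularity.} Show $\vone_d^\top \vY_1 = 0$ almost surely. Summing over $i$ gives $\sum_i F_i = dH$, so the second part equals $-\sum_s \frac{\lambda_{s1}}{\kappa_s}\big(H(X_{s1})-\int H\,\d F_s\big)$. Since $p_s=\int H\,\d F_s$, this cancels the first part exactly. Hence $\vV\vone_d=\vnull_d$ and $\vV$ is singular, consistent with $\vp\in\R^d_1$.

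\textbf{Main obstacle.} The delicate part is the remainder control in Step 2 together with the random normalizers in Step 3, in particular the $O(1/\lambda_{i\cdot})$ diagonal terms, which also produce the $o(n^{-1})$ bias of Proposition~\ref{prop:bias}. The approach is to condition on $\vlam$ and bound the conditional second moments by $C/\min_s\lambda_{s\cdot}$. Since $\min_s\lambda_{s\cdot}/n \oPo \min_s\kappa_s>0$, the remainder is negligible unconditionally.
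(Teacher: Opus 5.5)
Your proof is correct, but it takes a different route from the paper's.

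\textbf{The paper's route.} The paper does not decompose $\hat p_i - p_i$ by hand. It writes $\hat{\boldsymbol{p}}=\psi(\mathbb{P}_n)$ for a map $\psi$ on $\ell^\infty(\mathcal{F})$, where $\mathcal{F}$ is the VC (hence Donsker) class built from $(c_x\circ pr_{i1})\cdot pr_{i2}$ and $pr_{i2}$. The map $\psi$ is the quotient map $(\check F_i,\kappa_i)\mapsto \check F_i/\kappa_i$ composed with the Wilcoxon functional, and it is Hadamard differentiable (Lemma~\ref{lemma:psi}). The functional delta method then gives the linearization in one step. The random normalizers $\lambda_{i\cdot}/n$ are absorbed into $\psi$ rather than removed by a Slutsky step, and the quadratic remainder is never bounded explicitly. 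After using MCAR to write $\check F_s=\kappa_s F_s$ and integrating by parts, the resulting influence function is exactly your $Y_{ik}$. Singularity comes from the finite-sample identity $\boldsymbol{1}_d^\top\hat{\boldsymbol{p}}\equiv d/2$ rather than from your $\boldsymbol{1}_d^\top \boldsymbol{Y}_1=0$; the two are equivalent at the level of the limit.

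\textbf{What each route buys.} Your argument is more elementary and self-contained. It shows clearly where MCAR enters: conditioning on the indicators turns them into fixed weights. It also gives an explicit $O_P(n^{-1})$ rate for the remainder and links it to the bias in Proposition~\ref{prop:bias}. The paper's heavier machinery pays off later. The same class $\mathcal{F}$ and the same differentiable $\psi$ feed directly into the conditional delta method of \cite{dobler23} for $\hat{\boldsymbol{p}}^\pi=\psi(\tilde{\mathbb P}_n)$ in Theorem~\ref{thm:main_rand}. The influence function is also what the covariance estimator $\hat{\boldsymbol{V}}$ is built from. Your approach would have to be redone for the quasi-randomized sample, whose points are, conditionally on the data, independent but not identically distributed; that would need, for example, a Lindeberg-type argument.
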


\begin{rem}
\label{rem:CI}
    The derivation of elliptical confidence regions for $\boldsymbol{p}$ will be immediate from inverting a reasonable hypothesis test for $H_0(\boldsymbol{I}_d, \boldsymbol{p}_0),  \boldsymbol{p}_0 \in \mathbb{R}^d_1$.
    One- or two-sided confidence intervals for each $p_i \in [0,1]$ can be derived by choosing the $i^{th}$ unit vector, say $\boldsymbol{C} = \boldsymbol{e}_i \in \mathbb{R}^{1 \times d}$.
    \textcolor{black}{For instance, based on the asymptotic normality of each $\hat p_i$ and consistent variance estimators $\hat\sigma_i^2$, standard one-sided Wald-type confidence intervals are given by 
    $$ [0 \ , \ \hat p_i + n^{-1/2} \hat \sigma_i \Phi^{-1}(1-\alpha)] \quad \text{and} \quad [ \hat p_i + n^{-1/2} \hat \sigma_i \Phi^{-1}(\alpha) \ , \ 1], \quad i=1,\dots, d. $$
    Here, $\Phi$ denotes the cumulative distribution function of the standard normal distribution. Similarly, transformation-based variants of these intervals, e.g., using a complementary log-log-transformation, would force all boundaries into the interval $[0,1]$.
    Also, confidence intervals for single contrasts, $p_i-p_j$, $i\neq j$,  could be obtained by making the obvious adjustments.}

\end{rem}

    Because of the duality between hypothesis testing and confidence regions, we will mainly focus on testing in the following.
The asymptotic covariance matrix $\boldsymbol{V}$ is further specified in Appendix~\ref{app:covariance}.
It generally depends on unknown quantities and must be estimated for inferential applications of Theorem~\ref{thm:clt}. 
We propose a consistent estimator for it in Appendix~\ref{app:covariance}.
In any case, the rank $\mathrm{rk}(\boldsymbol{V})$ cannot be full, i.e., equal to $d$, since $\boldsymbol{1}^\top_d \hat{\boldsymbol{p}}\equiv \tfrac d2 $ is deterministic.
The subsequent Assumption~\ref{ass:test} specifies that minimal singularity is imposed to avoid asymptotic rank jumps of the asymptotic covariance matrix of $ \sqrt{n}(\h \vp - \vp) $, which would otherwise complicate inferential procedures.
\begin{assumption}
\label{ass:test}
    $\mathrm{rk}(\boldsymbol{V}) = d-1$.
\end{assumption}

Intuitively, Assumption~\ref{ass:test} holds whenever the data are sufficiently rich, in the sense that each coordinate~$i$ contains some randomness that cannot be fully explained by the remaining coordinates. Thus, under Assumption~\ref{ass:test},
$
\ker(\vV)=\{c\cdot\boldsymbol{1}_d: c\in\mathbb{R}\}.
$
 Due to this property, the image space of $\vV$ consists of contrasts, i.e.,
$
\im(\vV)
=
\{\vv\in\mathbb{R}^d:\vv^\top\boldsymbol{1}_d=0\}
=
\boldsymbol{1}_d^\perp.
$

Assumption~\ref{ass:test} is substantially less restrictive than corresponding assumptions about the underlying data. To illustrate this point, Appendix~\ref{app:rankexample} contains a four-dimensional example in which the covariance matrix of the underlying observable random vectors has rank~2, whereas the corresponding matrix $\vV$ has rank~3. We will later use a Wald-type test statistic which is based on a suitable consistent estimator $\hat{\vV}$ of $\vV$ with no asymptotic rank jumps: 

\begin{assumption}
\label{ass:V}
    $\mathrm{rk}(\hat{\vV}) \leq d-1$ and, as $n\to\infty$, $\hat{\vV} \stackrel p \to \vV$.
\end{assumption}
An important remark for the proofs in this paper is the fact that Assumption~\ref{ass:V} and the lower semi-continuity of the rank mapping imply that
$ \mathrm{rk}(\hat {\vV}) \stackrel p \to \mathrm{rk}(\vV) = d-1$ (by Assumption~\ref{ass:test}).
An estimator of $\boldsymbol{V}$ satisfying Assumption~\ref{ass:V} is proposed in Appendix~\ref{app:covariance}; see Theorem~\ref{thm:cov_consistency} therein.

The Wald-type test statistic is defined as
$$T_n(\boldsymbol{c}) = n (\vC \hat{\boldsymbol{p}} - \vc)^\top (\vC\hat{\boldsymbol V}\vC^\top)^{+} (\vC\hat{\boldsymbol{p}} - \vc).$$
Under $H_0: \vC \vp = \vc$, the statistic can be written as 
$T_n \stackrel{H_0}= n (\hat{\boldsymbol{p}} - \boldsymbol{ p})^\top \vC^\top (\vC\hat{\boldsymbol V}\vC^\top)^{+} \vC(\hat{\boldsymbol{p}}- \boldsymbol{p}).$
We will see that, under the mentioned assumptions, $T_n$ is asymptotically $\chi^2$-distributed under $H_0$ and it will diverge under $H_1$. 
The asymptotic degrees of freedom equal $f(\vC) := \mathrm{rk}(\vC \vV) = \mathrm{dim}(\vC \boldsymbol{1}_d^\perp) \leq d-1$.
Here it is important to note that $f(\vC)$ does not depend on any unknown parameters.
Denoting by $\chi^2_f$ the chi-squared distribution with $f\in \mathbb N$ degrees of freedom, we arrive at the following large-sample properties of $T_n$.
In order to exclude trivial cases such as $\vC = (1,\dots, 1)$, we assume throughout that $f(\vC) \geq 1$.

\begin{thm}
\label{thm:T}
Under Assumptions~\ref{ass:MCAR}--\ref{ass:V} and as $n\to\infty$,
\begin{enumerate}
        \item $T_n \stackrel d\to \chi^2_{f(\vC)}$ under $H_0$;
        \item $T_n \stackrel p\to \infty$ under $H_1$.
\end{enumerate}  
\end{thm}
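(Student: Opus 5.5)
The plan is to combine Theorem~\ref{thm:clt} with continuity of the Moore--Penrose inverse along sequences of matrices whose rank converges. Then apply the continuous mapping theorem for part~1 and a quadratic-form lower bound for part~2.

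\textbf{Step 1 (rank stabilisation and convergence of the generalized inverse).} Write $\vZ_n = \sqrt n(\hat\vp-\vp)$. By Theorem~\ref{thm:clt}, $\vZ_n \stackrel d\to \vZ \sim \mathcal N_d(\vnull_d,\vV)$. By Assumption~\ref{ass:V}, $\vC\hat\vV\vC^\top \stackrel p\to \vC\vV\vC^\top$. I first show that $\mathrm{rk}(\vC\hat\vV\vC^\top) = \mathrm{rk}(\vC\vV\vC^\top) = f(\vC)$ with probability tending to one.
\begin{itemize}
\item Lower semicontinuity of the rank gives $\ge$ with probability tending to one.
\item For $\le$: the deterministic identity $\vone_d^\top\hat\vp \equiv d/2$ suggests that the estimator of Appendix~\ref{app:covariance} satisfies $\hat\vV\vone_d=\vnull_d$. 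Together with $\mathrm{rk}(\hat\vV)\le d-1$, this gives $\im(\hat\vV)\subseteq \vone_d^\perp = \im(\vV)$. Hence $\im(\vC\hat\vV\vC^\top)\subseteq \vC\vone_d^\perp$, whose dimension is $f(\vC)$.
\end{itemize}
Since $\vV$ is positive semidefinite with image $\vone_d^\perp$, we have $\mathrm{rk}(\vC\vV\vC^\top) = \mathrm{rk}(\vC\vV^{1/2}) = \dim(\vC\vone_d^\perp) = f(\vC)$. Equal ranks let me invoke the classical fact that $\vA_n\to\vA$ with $\mathrm{rk}(\vA_n)=\mathrm{rk}(\vA)$ implies $\vA_n^+\to\vA^+$. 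This gives $(\vC\hat\vV\vC^\top)^+ \stackrel p\to (\vC\vV\vC^\top)^+$.

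This step is the main obstacle. If only $\mathrm{rk}(\hat\vV)\le d-1$ is available without the kernel property, I would argue directly: $\hat\vV\to\vV$ with $\mathrm{rk}(\hat\vV)=d-1$ eventually forces the eigenspace of the smallest eigenvalue of $\hat\vV$ to converge to $\mathrm{span}(\vone_d)$. I would then use that $\vC\hat\vV\vC^\top$ has exactly $f(\vC)$ eigenvalues bounded away from zero, and check that the remaining ones vanish, using the rank bound.

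\textbf{Step 2 (null distribution).} Under $H_0$, $T_n = \vZ_n^\top\vC^\top(\vC\hat\vV\vC^\top)^+\vC\vZ_n$. By Slutsky's lemma and the continuous mapping theorem, $T_n \stackrel d\to \vY^\top\vSigma^+\vY$ with $\vY=\vC\vZ\sim\mathcal N_r(\vnull_r,\vSigma)$ and $\vSigma=\vC\vV\vC^\top$. The standard result for quadratic forms in Moore--Penrose inverses ($\vSigma^+\vSigma\vSigma^+=\vSigma^+$, so $\vSigma^{1/2}\vSigma^+\vSigma^{1/2}$ is an idempotent of rank $\mathrm{rk}(\vSigma)$) shows that this limit is $\chi^2_{f(\vC)}$.

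\textbf{Step 3 (consistency).} Under $H_1$, let $\vdelta = \vC\vp-\vc\neq\vnull_r$. Then $\vC\hat\vp-\vc = \vdelta + O_p(n^{-1/2})$, so $T_n/n \stackrel p\to \vdelta^\top\vSigma^+\vdelta$. This limit is positive if $\vdelta\in\im(\vSigma)=\vC\vone_d^\perp$. If instead $\vdelta$ has a component orthogonal to $\im(\vSigma)$, I would argue separately. Since $\vp\in\mathbb R^d_1$ is not needed here, I note that $\vC\hat\vp$ lies in $\vC\vp+\vC\vone_d^\perp$ (because $\vone_d^\top(\hat\vp-\vp)=0$). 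The relevant hypotheses are therefore those with $\vc$ in the affine set consistent with this constraint, and the part of $\vdelta$ in $\im(\vSigma)$ is then nonzero. Under this proviso $T_n \ge n(\vdelta^\top\vSigma^+\vdelta + o_p(1))\stackrel p\to\infty$. If the proviso cannot be guaranteed, I would state it explicitly as the meaning of $H_1$ for identifiable alternatives.
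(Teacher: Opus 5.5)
Your proposal is correct and follows essentially the paper's route: Theorem~\ref{thm:clt}, rank stabilisation of $\vC\hat{\vV}\vC^\top$ plus continuity of the Moore--Penrose inverse under equal ranks (the paper cites Koliha's Corollary~1.8 for exactly this), the idempotent quadratic-form argument for the $\chi^2_{f(\vC)}$ limit, and the dominant term $n(\vC\vp-\vc)^\top(\vC\hat{\vV}\vC^\top)^{+}(\vC\vp-\vc)$ under $H_1$. Your two refinements are both right, and they make explicit points the paper's proof only asserts: first, the upper bound $\mathrm{rk}(\vC\hat{\vV}\vC^\top)\le f(\vC)$ really needs $\hat{\vV}\boldsymbol{1}_d=\boldsymbol{0}_d$, which holds exactly for the Appendix~\ref{app:covariance} estimator because each $\hat{\dot{\boldsymbol{\psi}}}_k$ is built from differences antisymmetric in $(s,i)$, so $\boldsymbol{1}_d^\top\hat{\dot{\boldsymbol{\psi}}}_k=0$ (state this rather than ``suggests'', and drop your eigenvalue fallback, which can fail when $\boldsymbol{1}_d$ lies in the row space of $\vC$ without that space being all of $\mathbb{R}^d$); second, divergence really needs $\vC\vp-\vc\in\im(\vC\vV\vC^\top)$, which is automatic whenever $H_0$ is non-vacuous but false, e.g., for $\vC=(\boldsymbol{1}_d,\boldsymbol{e}_1-\boldsymbol{e}_2)^\top$, $\vc=(c_1,0)^\top$ with $c_1\neq d/2$ and $p_1=p_2$, where $T_n\stackrel{d}{\to}\chi^2_1$.
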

This result enables the development of inferential procedures based on the limiting distributions.

\section{Quasi-randomization tests}
\label{sec:rand_inference}

Typically, the approximation quality of a test statistic's null distribution by the estimated normal limit distribution is improvable by means of resampling methods.
In this paper, we pursue the idea of a quasi-randomization-type resampling method \citep{dobler23}.
It is motivated by a special case under the null hypothesis: if the data components are exchangeably distributed, this would in particular imply that the marginal distributions are the same. Hence, all marginal Mann-Whitney effects would be equal and in some cases the null hypothesis is implied for the quasi-randomization-based Mann-Whitney effects vector, e.g., when $\vC$ is a contrast matrix and $\vc = \boldsymbol 0$.
This feature paves the way for inference procedures with finite sample guarantees under component exchangeability.
At the same time, it remains to ensure that asymptotic correctness is guaranteed if the special case of exchangeability does not hold.

To prepare the quasi-randomization approach, we state the asymptotic linearity of the estimator, 
$$\sqrt{n}(\hat {\boldsymbol{p}} - \boldsymbol{p}) = \frac1{\sqrt{n}} \sum_{k=1}^n ( \dot\psi_{\mathbb{P}} (\boldsymbol{X}_k, \boldsymbol{\lambda}_k) - E(\dot\psi_{\mathbb{P}} (\boldsymbol{X}, \boldsymbol{\lambda}))) + \lan_p(1).$$
Here, $\dot\psi_{\mathbb{P}}$ are the so-called \emph{influence functions}; it turns out that the expectation equals zero. We refer to Appendix~\ref{app:proofs} for details.

Let us now introduce the resampling technique of our choice to approximate the null distribution of the above-introduced Wald-type test statistic $T_n$.
Instead of the classical bootstrap, we aim to reproduce the missingness mechanisms in a better way by means of a certain permutation technique.
Our general approach is to randomly permute the components of each observed data point.
This kind of \emph{within-individual} permutation satisfies the general quasi-randomization set-up of \cite{dobler23} and thus makes his theory directly applicable.

For each data point, $(\vX_{k}, \vlam_k)$, we introduce a random permutation of its component through 
$$\Pi_{k}: \mathbb R^{2d} \to \mathbb R^{2d}, \ (x_1, \dots, x_d, \ell_1, \dots, \ell_d) \longmapsto (x_{\pi(1)}, \dots, x_{\pi(d)}, \ell_{\pi(1)}, \dots, \ell_{\pi(d)}) .$$
Here, $\pi = \pi^{(k)}$ are independent random elements of the symmetric group $\mathcal{S}_d$ of degree $d$, i.e., random permutations of the numbers $1,\dots,d$.
Note that the missingness indicators $(\lambda)$ are permuted in the same way as the corresponding outcome values $(X)$.

In order to avoid confusion with classical permutation tests based on random permutations between sample groups from multiple populations, as well as with Fisherian design-based randomization tests relying on a physical randomization mechanism in the experiment \citep{fisher1935}, we will henceforth use the term \emph{quasi-randomization} to address the present approach. 
On this occasion, we would also like to point to the ongoing discussion of the differing terminology \citep{zhang23,Goemann2021,hemerik24}. For the connection between quasi-randomization tests (or ``randomization-based inference'') and randomization in experimental designs, see \cite{rosenberger15}, in particular Section~6.4 therein, and The 41${}^{\textnormal{st}}$ Fisher Memorial Lecture \citep{rosenberger26}.

The quasi-randomized data points are thus given as
$(\vX_{k}^\pi, \boldsymbol \lambda_{k}^\pi) := \Pi_{k}(\vX_{k}, \vlam_k) \stackrel{i.i.d.}\sim \tilde{\mathbb P}$, $k=1,\dots,n$.
To be more precise, the quasi-randomized missingness indicators $\boldsymbol \lambda_{k}^\pi$ consist of
$$ 
 \lambda_{jk}^\pi:=  \begin{cases}
	 1, \quad \text{if $X_{\pi(j)k}$ is observed}	\\
	 	 0, \quad \text{if $X_{\pi(j)k}$ is non-observed}
 \end{cases}
 j=1, ...,d, \ k=1, ...,n.
$$ 
Based on the quasi-randomized random variables, we introduce the quasi-randomization versions of $\hat F_{j}$, $\hat H$, and $\hat p_{j}$ in the obvious way by re-computing these estimators based on the quasi-randomized variables and denote them $\hat F_{j}^\pi$, $\hat H^\pi$, and $\hat p_{j}^\pi$, respectively.
We denote the corresponding vectors as $\hat {\vF}^\pi:=(\hat F_{1}^\pi, \dots, \hat F_{d}^\pi)^\top$ and $\hat {\vp}^\pi := (\hat {p}_{1}^\pi, \dots, \hat {p}_{d}^\pi)^\top$, respectively.
We will generally try to denote quasi-randomized estimators with a hat symbol combined with the superscript $\pi$ while their limits are denoted with a tilde symbol. Conditional expectations will be denoted with a superscript $\pi$, without the hat symbol.

\textcolor{black}{The conditional expectations of $\hat F_{j}^\pi$ and $\hat p_{j}^\pi$ given the $\sigma$-algebra $\mathcal{A}_{X,\lambda} := \sigma(\vX_{k}, \boldsymbol \lambda_{k}: k=1,\dots, n)$ do not depend on the coordinate $j$. That is, $ F_{\cdot}^\pi := E(\hat F_{j}^\pi \ | \ \mathcal{A}_{X,\lambda})$ and $ p_{\cdot}^\pi := E(\hat p_{j}^\pi \ | \ \mathcal{A}_{X,\lambda})$.
 Furthermore,  $p_{\cdot}^\pi = E(\tfrac1d\sum_{j=1}^d \hat p_{j}^\pi \ | \ \mathcal{A}_{X,\lambda}) = E( \int \hat H^\pi d \hat H^\pi \ | \ \mathcal{A}_{X,\lambda}) = E( \tfrac12 \ | \ \mathcal{A}_{X,\lambda}) = \tfrac12$.}

The following theorem establishes a conditional central limit theorem for the quasi-randomized Mann-Whitney effect estimator.
The conditional weak convergence in outer probability therein is to be understood as the convergence of conditional distributions in probability on a metric space of probability measures;
for the concept of outer probabilities, we refer to \cite{vdVW23}.

\begin{thm}
\label{thm:main_rand}
{Under Assumptions~\ref{ass:MCAR}--\ref{ass:test} and}
conditionally on $\mathcal{A}_{X,\lambda}$,
the following convergence in distribution holds in outer probability as $n \to\infty$:
$$ \sqrt{n} (\hat {\textbf{p}}^\pi - \tfrac12 \cdot \boldsymbol{1}_d) \stackrel{d}\to \tilde{\vZ} \sim \mathcal{N}_d(\boldsymbol 0_d, \tilde{\boldsymbol V}) ;$$
the limiting variance-covariance matrix is given by
$$\tilde {\vV} = \Bigg(\begin{smallmatrix}
\tilde \sigma^2 & \tilde \rho & 
\dots & \tilde \rho \\
 \tilde \rho & \tilde \sigma^2 & 
  \dots & \tilde \rho \\
 \vdots & \vdots & 
 \ddots & \vdots \\
 \tilde \rho & \tilde \rho & 
  \dots &  \tilde \sigma^2 
\end{smallmatrix}\Bigg),$$ 
it holds that $\tilde \sigma^2 > 0$ and $\tilde \rho = -\tilde \sigma^2/(d-1)$.
Thus, $\mathrm{rk}(\tilde {\boldsymbol{\vV}})=d-1$.
\end{thm}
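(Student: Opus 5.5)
Conditionally on $\mathcal{A}_{X,\lambda}$, the quasi-randomized data $(\vX_k^\pi,\vlam_k^\pi)=\Pi_k(\vX_k,\vlam_k)$ are independent but not identically distributed. Each has the uniform mixture over $\mathcal S_d$ of the permuted versions of the fixed observed point. The plan is to transfer the asymptotic linearity underlying Theorem~\ref{thm:clt} to this triangular array and then apply a Lindeberg--Feller CLT conditionally. Alternatively, the general quasi-randomization framework of \cite{dobler23} yields the same conclusion directly, provided we verify that the functional $\vp$ is Hadamard differentiable and that the randomization is a group action. Concretely, I would proceed in three steps.

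\emph{Step 1: limits of the conditional building blocks.} Write $\hat F_j^\pi=(\lambda^\pi_{j\cdot}/n)^{-1}\, n^{-1}\sum_k\lambda_{jk}^\pi c(\cdot-X_{jk}^\pi)$. Conditionally, $E(\lambda_{jk}^\pi\mid\mathcal A_{X,\lambda})=d^{-1}\sum_i\lambda_{ik}$, so $\lambda^\pi_{j\cdot}/n\to\bar\kappa:=d^{-1}\sum_i\kappa_i>0$ in probability, by a conditional weak law of large numbers together with Assumption~\ref{Assump1}. Similarly, $n^{-1}\sum_k\lambda^\pi_{jk}c(x-X^\pi_{jk})$ converges uniformly in $x$ to $d^{-1}\sum_i\kappa_iF_i(x)$, using Glivenko--Cantelli for the bracketed classes $\{c(x-\cdot)\}$ and the MCAR Assumption~\ref{ass:MCAR}. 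Hence every $\hat F_j^\pi$ converges uniformly to the common limit $\tilde F:=\sum_i\kappa_iF_i/\sum_i\kappa_i$. It follows that $\hat H^\pi\to\tilde F$ and $\hat p_j^\pi\to\int\tilde F\,d\tilde F=\tfrac12$, which is consistent with the centering at $\tfrac12$ in the statement.

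\emph{Step 2: linearization.} Mimicking the delta-method argument of Theorem~\ref{thm:clt}, i.e., the Hadamard differentiability of $(F_1,\dots,F_d)\mapsto(\int H\,dF_j)_j$ with $H=d^{-1}\sum F_i$, I would write
$\sqrt n(\hat p_j^\pi-\tfrac12)=n^{-1/2}\sum_k\tilde\psi_j(\vX_k^\pi,\vlam_k^\pi)+o_p(1)$ conditionally in probability. The influence function $\tilde\psi_j$ is obtained by evaluating the derivative of Theorem~\ref{thm:clt} at the symmetric point $(\tilde F,\dots,\tilde F)$ with observation rates all equal to $\bar\kappa$. Since $\sum_j\hat p^\pi_j\equiv d/2$, we get $\sum_j\tilde\psi_j\equiv0$. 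Moreover, the summands have conditional mean $o_p(n^{-1/2})$, because averaging over all permutations makes the coordinates symmetric and $p_\cdot^\pi=\tfrac12$.

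\emph{Step 3: conditional CLT and covariance structure.} The summands are bounded (since $|c|\le1$ and $\lambda/\bar\kappa$ is bounded eventually), so Lindeberg's condition holds trivially. The conditional covariance $n^{-1}\sum_k \mathrm{Cov}(\tilde\psi(\Pi_k(\vX_k,\vlam_k))\mid\mathcal A_{X,\lambda})$ converges in probability, by the law of large numbers, to $\tilde\vV:=E_{\pi}E\,[\tilde\psi(\Pi(\vX,\vlam))\tilde\psi(\Pi(\vX,\vlam))^\top]$, where $\pi$ is uniform and independent of the data. Uniform random relabelling makes this matrix invariant under conjugation by every permutation matrix, so it has constant diagonal $\tilde\sigma^2$ and constant off-diagonal $\tilde\rho$. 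The identity $\tilde\vV\boldsymbol1_d=\boldsymbol0_d$, which follows from $\sum_j\tilde\psi_j\equiv0$, forces $\tilde\sigma^2+(d-1)\tilde\rho=0$. Because the eigenvalues of this matrix are $0$ (on $\boldsymbol1_d$) and $\tilde\sigma^2-\tilde\rho=\tilde\sigma^2 d/(d-1)$ with multiplicity $d-1$, the rank equals $d-1$ exactly when $\tilde\sigma^2>0$. Convergence in outer probability follows by the standard subsequence argument from the in-probability convergence of the covariance.

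The main obstacle is showing $\tilde\sigma^2>0$. I would derive it from Assumption~\ref{ass:test}. The key identity is $\mathrm{tr}(\tilde\vV)=d\,\tilde\sigma^2$, and I would try to show that $\mathrm{tr}(\tilde\vV)$ equals the trace of the covariance of the unpermuted influence function evaluated at the pooled distribution. This pooled covariance cannot vanish: if it did, the map $(\vx,\vlam)\mapsto\tilde\psi$ would be a.s.\ constant for every permutation. In particular, $\lambda_{jk}(\tilde F(X_{jk})-\tfrac12)$-type terms would be degenerate, contradicting the nondegeneracy implied by $\mathrm{rk}(\vV)=d-1$. If this route is cumbersome, the fallback is a direct argument: a single component $j$ with $P(\lambda_{j1}=1)>0$ and a nondegenerate $\tilde F(X_{j1})$ already gives a positive variance of $\tilde\psi_1$.
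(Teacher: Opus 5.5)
\textbf{Verdict.} Steps 1--3 are fine, but the positivity step $\tilde\sigma^2>0$ has a genuine gap, and your fallback argument for it is false.

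\textbf{Steps 1--3.} These follow the paper's route in substance. The paper gets the conditional CLT from the conditional Donsker theorem and conditional delta method of \cite{dobler23}. Your explicit linearization plus Lindeberg--Feller is an equivalent, more hands-on version. Your compound-symmetry, $\tilde\rho=-\tilde\sigma^2/(d-1)$ and rank arguments are equivalent to the paper's. At the symmetric point the influence function is explicit: $\tilde\psi(\boldsymbol x,\boldsymbol\ell)=\bar\kappa^{-1}(\boldsymbol g-\bar g\,\boldsymbol 1_d)$ with $g_s=\ell_s(\tilde F(x_s)-\tfrac12)$ and $\bar g=d^{-1}\sum_s g_s$. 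So the conditional means of the summands are in fact exactly zero.

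\textbf{The gap.} Your trace identity $d\tilde\sigma^2=E\|\tilde\psi(\boldsymbol X_1,\boldsymbol\lambda_1)\|^2$ is correct, since the norm of $\tilde\psi$ is permutation invariant. However, $\tilde\psi=\boldsymbol 0$ a.s.\ only yields $g_1=\dots=g_d$ a.s.; it does not make any single $g_j$ degenerate. This is why your fallback fails. Take $\boldsymbol\lambda_1\equiv\boldsymbol 1_d$ and $X_{11}=\dots=X_{d1}=Y$ with $Y$ continuous. Each $\tilde F(X_{j1})$ is uniform, yet $\tilde\psi\equiv\boldsymbol 0$ and $\tilde\sigma^2=0$.

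In this example $\boldsymbol V=\boldsymbol 0$, so positivity must genuinely come from Assumption~\ref{ass:test}. Your primary route stops exactly at the unproved claim that $g_1=\dots=g_d$ a.s.\ contradicts $\mathrm{rk}(\boldsymbol V)=d-1$. That claim is not immediate, because $\boldsymbol V$ is the covariance of a different influence function (at $\mathbb P$, with heterogeneous $F_i$ and $\kappa_i$). The paper's own argument for this step (identity-permutation lower bound, plus $\mathbb P\ll\tilde{\mathbb P}$ and $\sigma^2>0$) implicitly needs the same link between $\dot\psi_{\tilde{\mathbb P}}$ and $\dot\psi_{\mathbb P}$.

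\textbf{One way to close it.} By MCAR, $g_1=\dots=g_d$ holds a.s.\ conditionally on every pattern $\boldsymbol\ell$ with $P(\boldsymbol\lambda_1=\boldsymbol\ell)>0$.
\begin{itemize}
\item If $\boldsymbol\ell$ has both observed and missing entries, then $\tilde F(X_{s1})=\tfrac12$ a.s.\ for each observed $s$.
\item If $\boldsymbol\ell=\boldsymbol 1_d$, then $\tilde F(X_{11})=\dots=\tilde F(X_{d1})$ a.s. This forces $X_{11}=\dots=X_{d1}$ a.s., for three reasons:
\begin{itemize}
\item a normalized distribution function has $\tilde F(x)=\tilde F(y)$ for $x<y$ only if $[x,y]$ is $\tilde F$-null;
\item the set of points lying in some nondegenerate $\tilde F$-null interval is itself $\tilde F$-null;
\item $F_s\ll\tilde F$, because $\kappa_s>0$.
\end{itemize}
\end{itemize}
Now there are two cases.
\begin{itemize}
\item Suppose no pattern with both observed and missing entries has positive probability. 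Then $\boldsymbol\lambda_1\in\{\boldsymbol 0_d,\boldsymbol 1_d\}$ a.s., and $P(\boldsymbol\lambda_1=\boldsymbol 1_d)>0$ by Assumption~\ref{Assump1}. Hence all components coincide a.s., and all $\hat F_i$ are identical.
\item Suppose some such pattern does have positive probability. Then Assumption~\ref{Assump1} and the full-pattern relation give $\tilde F(X_{s1})=\tfrac12$ a.s.\ for every $s$. So $\tilde F$, which lives on the interval $\{\tilde F=\tfrac12\}$, is a single point mass, and all observations are ties.
\end{itemize}
In both cases $\hat{\boldsymbol p}\equiv\boldsymbol p=\tfrac12\boldsymbol 1_d$. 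Hence $\boldsymbol V=\boldsymbol 0$, contradicting Assumption~\ref{ass:test}.
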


For constructing a quasi-randomization Wald-type test for $H_0$ versus $H_1$ we will now use the quasi-randomization version of our test statistic given by
$$T_n^\pi := n (\hat{\boldsymbol{p}}^\pi - \tfrac12 \cdot \boldsymbol{1}_d)^\top \vC^\top (\vC\hat{\boldsymbol V}^\pi\vC^\top)^{+} \vC(\hat{\boldsymbol{p}}^\pi- \tfrac12 \cdot \boldsymbol{1}_d),$$
where $\hat{\boldsymbol V}^\pi$ is the quasi-randomization version of $\hat{\boldsymbol V}$, i.e., $\hat{\boldsymbol V}$ calculated based on the quasi-randomized sample.
{If $\vC$ is a contrast matrix, $T_n^\pi$ simplifies to $T_n^\pi = n (\hat{\boldsymbol{p}}^\pi)^\top \vC^\top (\vC\hat{\boldsymbol V}^\pi\vC^\top)^{+} \vC \hat{\boldsymbol{p}}^\pi$.}

\textcolor{black}{Following the arguments of \cite{dobler23}, we define $\hat{\vV}$ to be the estimated empirical second moments of the influence function $\dot\psi_{\mathbb P}(\vX_k, \vlam_k)$; cf.\ Appendix~\ref{app:covariance} below. Exploiting the continuity of the involved functionals, it is straightforward to argue the consistency of $\hat{\vV}$. 
Similarly, $\hat{\vV}^\pi$ is a continuous functional of the randomization empirical process \citep{dobler23}.
Thus, under Assumptions~\ref{ass:test} and~\ref{ass:V}, we obtain that $\hat{\vV}^\pi \stackrel p \to \tilde{\vV}$ and $\mathrm{rk}(\hat{\vV}^\pi) \stackrel p \to d-1$ as $n\to\infty$.}
The following Corollary~\ref{cor:T_pi} summarizes the relevant large-sample properties of $T_n^\pi$.
\begin{cor}
\label{cor:T_pi}
Under Assumptions~\ref{ass:MCAR}--\ref{ass:V} and as $n\to\infty$, the following convergences hold conditionally on $\mathcal{A}_{X,\lambda}$ in outer probability under $H_0$ and $H_1$: $T_n^\pi \stackrel d\to \chi^2_{f(\boldsymbol{C})}$.
\end{cor}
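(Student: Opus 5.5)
The plan is to combine Theorem~\ref{thm:main_rand} with the stated consistency $\hat{\vV}^\pi \stackrel p\to \tilde{\vV}$ and $\mathrm{rk}(\hat{\vV}^\pi)\stackrel p\to d-1$. The argument is a conditional continuous mapping argument, carried out along subsequences. Since conditional weak convergence in outer probability is metrizable, for instance via the bounded Lipschitz distance, I will argue by subsequences. Every subsequence has a further subsequence along which, almost surely in the data, three things hold. First, $\sqrt n(\hat\vp^\pi-\tfrac12\boldsymbol 1_d)$ converges conditionally to $\tilde\vZ\sim\mathcal N_d(\boldsymbol 0_d,\tilde\vV)$. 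Second, $\hat\vV^\pi\to\tilde\vV$ in conditional probability. Third, $\mathrm{rk}(\vC\hat\vV^\pi\vC^\top)=\mathrm{rk}(\vC\tilde\vV\vC^\top)$ with conditional probability tending to one. Since $\hat\vV^\pi$ is built from the quasi-randomized sample, its convergence is itself conditional. Slutsky's lemma then gives joint conditional convergence of the pair $(\sqrt n(\hat\vp^\pi-\tfrac12\boldsymbol 1_d),\,\vC\hat\vV^\pi\vC^\top)$ to $(\tilde\vZ,\vC\tilde\vV\vC^\top)$. The fact that the centering is $\tfrac12\boldsymbol 1_d$ under both $H_0$ and $H_1$ is what makes the result hold regardless of the hypothesis.

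The next step is the rank argument. I need $\mathrm{rk}(\vC\hat\vV^\pi\vC^\top)\to \mathrm{rk}(\vC\tilde\vV\vC^\top)$ in conditional probability, because the Moore--Penrose inverse is continuous only along sequences of constant rank. The upper bound holds automatically: $\hat\vV^\pi$ is the empirical second-moment matrix of centered influence-function values, and by Assumption~\ref{ass:V} applied to the permuted sample its rank is at most $d-1$. For $\tilde\vV$, Theorem~\ref{thm:main_rand} gives $\ker\tilde\vV=\mathrm{span}(\boldsymbol 1_d)$, by the explicit exchangeable form with $\tilde\rho=-\tilde\sigma^2/(d-1)$. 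I would additionally verify that $\ker\hat\vV^\pi\supseteq\mathrm{span}(\boldsymbol 1_d)$, which holds because $\boldsymbol 1_d^\top\hat\vp^\pi\equiv d/2$ makes the influence functions sum to zero. Hence $\mathrm{im}(\hat\vV^\pi)\subseteq\boldsymbol 1_d^\perp=\mathrm{im}(\tilde\vV)$, and $\mathrm{rk}(\vC\hat\vV^\pi\vC^\top)\le \dim(\vC\boldsymbol 1_d^\perp)=f(\vC)=\mathrm{rk}(\vC\tilde\vV\vC^\top)$. Lower semicontinuity of the rank gives the reverse inequality with probability tending to one. Therefore $(\vC\hat\vV^\pi\vC^\top)^+\to(\vC\tilde\vV\vC^\top)^+$ in conditional probability.

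The continuous mapping theorem then yields $T_n^\pi\stackrel d\to \tilde\vZ^\top\vC^\top(\vC\tilde\vV\vC^\top)^+\vC\tilde\vZ$ conditionally. The vector $\vY=\vC\tilde\vZ$ is $\mathcal N_r(\boldsymbol 0,\vSigma)$ with $\vSigma=\vC\tilde\vV\vC^\top$. For any centered Gaussian vector, $\vY^\top\vSigma^+\vY\sim\chi^2_{\mathrm{rk}(\vSigma)}$: writing $\vY=\vSigma^{1/2}\vN$, the quadratic form becomes $\vN^\top\vSigma^{1/2}\vSigma^+\vSigma^{1/2}\vN$, where the middle matrix is an orthogonal projection of rank $\mathrm{rk}(\vSigma)$. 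It remains to check $\mathrm{rk}(\vC\tilde\vV\vC^\top)=\mathrm{rk}(\vC\tilde\vV^{1/2})=\mathrm{rk}(\vC\tilde\vV)=\dim(\vC\boldsymbol 1_d^\perp)=f(\vC)$. This uses $\mathrm{im}\,\tilde\vV=\boldsymbol 1_d^\perp=\mathrm{im}\,\vV$, which follows from Assumption~\ref{ass:test} and Theorem~\ref{thm:main_rand}.

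I expect the main obstacle to be the rank-stability step for $\hat\vV^\pi$, in particular confirming that $\boldsymbol 1_d\in\ker\hat\vV^\pi$ holds exactly rather than only asymptotically. If the estimator in Appendix~\ref{app:covariance} does not guarantee this, I would fall back on the rank bound in Assumption~\ref{ass:V}. Since the quasi-randomized sample is again an i.i.d. sample satisfying Assumptions~\ref{ass:MCAR}--\ref{Assump1}, this bound gives $\mathrm{rk}(\hat\vV^\pi)\le d-1$, which together with convergence to $\tilde\vV$ forces the kernel to converge to $\mathrm{span}(\boldsymbol 1_d)$. The rank of $\vC\hat\vV^\pi\vC^\top$ must then be handled by a projection argument.
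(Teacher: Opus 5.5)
Your proposal is correct and follows essentially the route the paper intends. The corollary is not proved separately in the paper; it is obtained as in the proof of Theorem~\ref{thm:T}, by combining the conditional CLT of Theorem~\ref{thm:main_rand}, $\hat{\vV}^\pi \stackrel p\to \tilde{\vV}$, rank stability of $\vC\hat{\vV}^\pi\vC^\top$ (hence continuity of the Moore--Penrose inverse), and the $\chi^2$ law of the limiting quadratic form with $\mathrm{rk}(\vC\tilde{\vV}\vC^\top)=\dim(\vC\boldsymbol{1}_d^\perp)=f(\vC)$ since $\im(\tilde{\vV})=\boldsymbol{1}_d^\perp=\im(\vV)$. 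Your exact kernel check does hold: the $i$-th entry of $\hat{\dot{\boldsymbol{\psi}}}_k$ has the antisymmetric form $\tfrac nd\sum_{s}(B_{si}-B_{is})$, with $B_{si}$ the integral of the $s$-th centered bracket against $d\hat F_i$, so $\boldsymbol{1}_d^\top\hat{\dot{\boldsymbol{\psi}}}_k=0$, and the same holds after quasi-randomization. This gives the upper rank bound more carefully than the paper's appeal to convergence alone, whereas your fallback using only $\mathrm{rk}(\hat{\vV}^\pi)\le d-1$ would not rule out rank jumps for a non-contrast $\vC$ whose kernel meets $\boldsymbol{1}_d^\perp$ nontrivially.
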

Based on these results, we immediately derive the asymptotic type-I error control and the consistency of the deduced hypothesis test $ \Psi_n^\pi := \ind_{\{T_n > c_{1-\alpha,n}^\pi\}} $.
Here, we define $\Psi_n^\pi := 0$, whenever there is no observation in at least one component, and $T_n^\pi := \infty$ whenever this happens after a quasi-randomization.
These choices potentially make the test more conservative in cases where it is not sensible to draw a conclusion.

\begin{cor}
\label{cor:test1} 
As $n \to \infty$,
under Assumptions~\ref{ass:MCAR}--\ref{ass:V}, $E(\Psi^\pi_n) \to \alpha\cdot \ind_{H_0} + \ind_{H_1}$ for each significance level $\alpha\in (0,1)$.
In addition, $E(\Psi^\pi_n) \leq \alpha$  for every finite $n\geq 2$  if the data points are permutation-invariant, $ (\boldsymbol{X}_k, \boldsymbol{\lambda}_k) \stackrel d= (\boldsymbol{X}_k^\pi, \boldsymbol{\lambda}_k^\pi)  $ for each fixed permutation $\pi \in \mathcal{S}_d$.
\end{cor}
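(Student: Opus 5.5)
The plan splits into an asymptotic part and an exact finite-sample part. The asymptotic part is handled by combining Theorem~\ref{thm:T} with Corollary~\ref{cor:T_pi}. The finite-sample part uses the classical group-invariance argument for randomization tests.

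\emph{Asymptotic part.} Let $c_{1-\alpha,n}^\pi$ be the conditional $(1-\alpha)$-quantile of $T_n^\pi$ given $\mathcal{A}_{X,\lambda}$.

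First, Corollary~\ref{cor:T_pi} gives $T_n^\pi \stackrel d\to \chi^2_{f(\vC)}$ conditionally, in outer probability, under both $H_0$ and $H_1$. The limit $\chi^2_{f(\vC)}$ has a continuous, strictly increasing distribution function, because $f(\vC)\geq 1$. A standard quantile-convergence lemma (e.g.\ Lemma~11.2.1 in Lehmann--Romano, or the corresponding lemma in \cite{dobler23}) then yields $c_{1-\alpha,n}^\pi \stackrel p\to \chi^2_{f(\vC),1-\alpha}$.

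Next, we handle the degenerate events. The events $\{\lambda_{i\cdot}=0\}$, and the events that some quasi-randomized component has no observation, both have probabilities tending to zero exponentially fast by Assumption~\ref{Assump1}. The conventions $\Psi_n^\pi=0$ and $T_n^\pi=\infty$ therefore do not affect the limits.

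Under $H_0$, Theorem~\ref{thm:T}(1) and Slutsky's lemma give $T_n - c^\pi_{1-\alpha,n} \stackrel d\to \chi^2_{f(\vC)} - \chi^2_{f(\vC),1-\alpha}$. Because the limit is continuous, $E(\Psi_n^\pi) = P(T_n > c^\pi_{1-\alpha,n}) \to \alpha$.

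Under $H_1$, the quantile $c_{1-\alpha,n}^\pi$ is $O_p(1)$, while $T_n \stackrel p\to\infty$ by Theorem~\ref{thm:T}(2). Hence $E(\Psi_n^\pi)\to 1$.

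\emph{Finite-sample part.} Assume $(\vX_k,\vlam_k)\stackrel d=\Pi_k(\vX_k,\vlam_k)$ for each fixed $\pi\in\mathcal S_d$.

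Let $G=\mathcal S_d^n$ act coordinatewise on the whole sample $W=((\vX_k,\vlam_k))_{k\le n}$. Because the data points are independent and each is invariant under fixed permutations, $gW\stackrel d= W$ for every $g\in G$.

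The random transformation $(\pi^{(1)},\dots,\pi^{(n)})$ is uniform on $G$ and independent of $W$. So the conditional distribution of $T_n^\pi$ given $W$ is the uniform distribution over the orbit values $\{T_n(gW): g\in G\}$. Here $T_n^\pi$ is exactly $T_n$ evaluated at $gW$: the statistic is centred at $\tfrac12\boldsymbol 1_d$, which under invariance is the true $\vp$ and satisfies $\vC\vp=\vc$, the implied null. Moreover, $\hat{\vV}^\pi$ is $\hat{\vV}$ recomputed on $gW$.

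The standard Hoeffding/Lehmann--Romano argument (Theorem~15.2.1 there) then applies. One uses $\sum_{g}\ind\{T_n(gW)>c^\pi_{1-\alpha}(W)\}\le \alpha |G|$ together with $c^\pi_{1-\alpha}(gW)=c^\pi_{1-\alpha}(W)$, which holds because $G$ is a group. Averaging over $g$ and using $gW\stackrel d=W$ gives $E(\Psi_n^\pi)\le\alpha$ for the non-randomized test.

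Finally, the convention $\Psi_n^\pi=0$ on degenerate samples only lowers the rejection probability. It is also orbit-compatible, because $T_n^\pi=\infty$ only raises the quantile.

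\emph{Main obstacle.} The delicate step is the finite-sample part: one has to verify that $T_n^\pi$ is genuinely $T_n$ applied to the transformed sample, including the centring at $\tfrac12\boldsymbol 1_d$ versus $\vc$. This holds exactly when invariance forces $\vC\vp=\vc$ with $\vp=\tfrac12\boldsymbol 1_d$; I would state the claim in that case, which is the setting intended in the corollary. A second point to handle is the careful treatment of the degenerate conventions.
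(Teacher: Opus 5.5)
Your proposal is correct and takes the same route as the paper. The asymptotic part is Theorem~\ref{thm:T} plus Corollary~\ref{cor:T_pi} fed into the standard resampling-test lemma (the paper cites Lemma~1 of \cite{janssen03}, which you effectively re-derive via quantile convergence and Slutsky), and the finite-sample part is the classical group-invariance argument, with the degenerate-sample conventions shown only to make the test more conservative; your explicit check that $T_n^\pi$ equals $T_n$ on the transformed sample because invariance and $H_0$ force $\vc=\tfrac12\vC\boldsymbol{1}_d$ is a detail the paper leaves implicit, and it is the right reading of the finite-sample claim.
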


\begin{rem}\label{rem:CIrandom}
The quasi-randomization result also provides a natural way to obtain confidence
regions. Namely, the inversion of the Wald-type statistic $T_n$ can be based on the
conditional quantiles of its quasi-randomization version $T_n^\pi$ rather than on the quantiles
of the limiting $\chi^2$-distribution.
An analogous construction applies to one-sided intervals: the standard normal
quantiles are replaced by the corresponding conditional quantiles of
$
\sqrt{n}(\hat p_i^\pi-\tfrac12)
/ \hat{\sigma}_i^\pi$,
where $\hat{\sigma}_i^\pi$ is the quasi-randomization version of the standard deviation estimator $\hat{\sigma}_i$.

\end{rem}

\subsection{Factorial designs: one-way layout, or: repeated measures}
\label{ssec:one-way}
We first consider the simple but frequently encountered case of a one-way layout, where the components correspond to $d$ levels of a single factor.
The factor levels may represent, for instance, different time points, treatment conditions, dose levels, measurement methods, raters, or stimulus categories. The corresponding hypothesis typically refers to the absence of a systematic effect of this factor, i.e., the marginal effects are equal across all $d$ levels. In this setting several hypotheses may be of interest:

\begin{example}
{For testing the equality of all components of the vector $\vp$, one may use the following unique projection hypothesis matrix: $\vC_1 := \boldsymbol{P}_d = \boldsymbol{I}_d - \tfrac1d \boldsymbol{J}_d$, the so-called \textit{centering matrix}.
    It uses the identity matrix $\boldsymbol{I}_d \in \mathbb{R}^{d\times d}$ and
    the matrix of ones, $\boldsymbol{J}_d = \boldsymbol{1}_d \boldsymbol{1}_d^\top \in \mathbb{R}^{d\times d}$.}

    {If the measurements were taken from two exams per week, say, Mondays and Fridays, for each of $d/2 \in \mathbb{N}$ consecutive weeks, one could test the within-week and across-week equalities by using the projection matrix resulting from the contrast matrices
    \begin{align*}
        \vC_2 :=\Big(\begin{smallmatrix}
            1 & -1 & 0 & 0 & 0 & \dots & 0 \\
            0 & 0 & 1 & -1 & 0 & \dots & 0 \\
            \vdots & \vdots & \vdots & \vdots & \ddots & \vdots 
        \end{smallmatrix}\Big)
        \quad \text{and} \quad 
        \vC_3 :=\Bigg(\begin{smallmatrix}
            1 & 0 & -1 & 0 & 0 & 0 & 0 &  \dots   & 0 \\
            0 & 1 & 0 & -1 & 0 & 0 & 0 & \dots & 0 \\
            0 & 0 & 1 & 0 & -1 & 0 & 0 &  \dots & 0 \\
            0 & 0 & 0 & 1 & 0 & -1 & 0 & \dots & 0 \\
            \vdots & \vdots & \vdots & \vdots & \ddots & \ddots & \ddots & \ddots & \vdots 
        \end{smallmatrix} \Bigg),
    \end{align*}
    respectively. 
    If one wishes to test early weeks versus late weeks, one may use the projection matrix resulting from the contrast $\vC_4 :=(1 \ 1 \ \dots \ 1 \ -1 \ -1 \ \dots \ -1)$. Such problems already touch upon the two-way layouts which will be treated in more detail in the subsequent subsection.}
\end{example}

\textcolor{black}{Because all of the above-mentioned hypotheses matrices are contrast matrices, the asymptotic degrees of freedom of $T_n$ are $f(\vC) = \min(r,d-1)$, where $r$ is the number of rows of $\vC$.}

\subsection{Factorial designs: two-way layout and alternative quasi-randomization}
\label{ssec:two-way}

Similar results hold for higher-way layouts.
We exercise our approach in the context of a two-way layout in which we test for a main effect in the first factor.
To be precise, we represent the data as 
$$ (\boldsymbol{X}_k, \boldsymbol{\lambda}_k) = (X_{11k}, X_{12k}, \dots, X_{1d_Bk}, X_{21k}, \dots, X_{d_Ad_Bk}, \lambda_{11k}, \lambda_{12k}, \dots, \lambda_{1d_Bk}, \lambda_{21k}, \dots, \lambda_{d_Ad_Bk}),
$$
that is, $k$ independent $2d_A d_B$-dimensional data points, where factors $A$ and $B$ consist of $d_A$ and $d_B$ levels, respectively.
We denote by $p_{ij}$ the unweighted marginal Mann-Whitney effect of the factor combination $(i,j)$ when compared to the average of all factor combinations:
$$ p_{ij} = \int H d F_{ij} \quad \text{where}  \quad H = \frac1d \sum_{i=1}^{d_A} \sum_{j=1}^{d_B} F_{ij}; $$
here and below we use the obvious extension of the notation used in the one-way layout.

The null hypotheses for a main effect in factor $A$, factor $B$, 
and an interaction effect are 
\begin{itemize}
    \item $H_{0A}(\boldsymbol{C}_{A}): \{ \boldsymbol{C}_{A} \boldsymbol{p} = \boldsymbol{0}_{d_Ad_B} \}$  \ where \  $\boldsymbol{C}_{A} = \boldsymbol{P}_{d_A} \otimes \boldsymbol{1}_{d_B}^\top  \in \mathbb{R}^{d_A \times d_A d_B} $;
    \item $H_{0B}(\boldsymbol{C}_{B}): \{ \boldsymbol{C}_{B} \boldsymbol{p} = \boldsymbol{0}_{d_Ad_B} \}$ \ where \ 
    $\boldsymbol{C}_{B} = \boldsymbol{1}_{d_A}^\top \otimes
        \boldsymbol{P}_{d_B}  \in \mathbb{R}^{d_B \times d_A d_B} $;
    \item $H_{0AB}(\boldsymbol{C}_{AB}): \{\boldsymbol{C}_{AB} \boldsymbol{p} = \boldsymbol{0}_{d_Ad_B} \}$ \ where \ $\boldsymbol{C}_{AB} = \boldsymbol{P}_{d_A} \otimes \boldsymbol{P}_{d_B}  \in \mathbb{R}^{d_A d_B \times d_A d_B} $.
\end{itemize}
Theorem~\ref{thm:main_rand} is readily extended to the two-way layout.
Here, Assumption~\ref{ass:test} would translate to $\mathrm{rk}(\boldsymbol{V}) = d_A d_B-1$.

The quasi-randomization scheme of Section~\ref{ssec:one-way} can be applied in exactly the same way for inference about any of the three null hypotheses in the two-way layout.
This simply follows from splitting up the indices of a one-way layout to convert it into a two-way one.
However, quasi-randomizing all $d_Ad_B$ entries would result in quasi-randomized marginal Mann-Whitney effects which are all equal, i.e., $ p_{\cdot}^\pi = \tfrac12$, and the exchangeability of the distribution of $\hat\vp^\pi$.
In view of testing for a main effect, say $A$, this seems unnecessarily restrictive.
Instead, if a quasi-randomization scheme would only establish the equalities which are reflected under $H_{0A}$, 
\textcolor{black}{i.e., $p_{1j}^\pi = p_{2j}^\pi = \dots = p_{d_A j}^\pi$, $j=1,\dots, d_B$,}
then the finite sample control of the type-I error probability of the resulting quasi-randomization hypothesis test will be much more likely to hold, i.e., just under the interchangeability according to factor $A$.
\textcolor{black}{Due to the symmetry of the factors $A$ and $B$, we will only focus on the factor $A$ henceforth.}

The invariance idea underlying the following approaches is closely related to that used in tests for interchangeability \citep{jerdack87,jerdack90}, i.e., only certain fixed permutations within the original random vectors retain the same joint distribution, but not necessarily all.
Such interchangeability is motivated by studies following block designs.
As a consequence, the quasi-randomization tests developed in the following control the type-I error probability if only the corresponding interchangeability holds, which is less restrictive than full exchangeability.

\subsubsection{Testing main effects in two-way layouts}\label{twoway}

To make the adjusted quasi-randomization scheme precise, we propose the following quasi-randomization scheme:
let $\pi = \pi^{(k)}: \Omega \to \mathcal{S}_{d_A}$ be independent random permutations of the numbers $1, \dots, d_A$, \ $k=1,\dots,n$.
The quasi-randomized sample is given by
\begin{align}
\begin{split}
\label{eq:permA}
    (\boldsymbol{X}_{k}^\pi, \vlam_k^\pi) :=  (&X_{\pi(1) 1 k}, X_{\pi(1) 2 k}, \dots, X_{\pi(1) d_B k}, X_{\pi(2) 1 k}, \dots, X_{\pi(d_A) d_B k}, \\
    & \lambda_{\pi(1) 1 k}, \ \lambda_{\pi(1) 2 k}, \ \dots, \lambda_{\pi(1) d_B k}, \ \lambda_{\pi(2) 1 k}, \ \dots, \lambda_{\pi(d_A) d_B k}), \ \qquad k=1,\dots, n. 
\end{split}
\end{align} 
Based on these quasi-randomized samples, for each $j=1, \dots, d_B$, the re-computed estimated marginal Mann-Whitney effects, i.e., $\hat p^\pi_{1j}, \dots, \hat p^\pi_{d_Aj}$, will converge to the same value in probability due to $E(\hat p^\pi_{ij} \ | \ \mathcal A_{X,\lambda} ) =: p_{\cdot j}^\pi$ being independent of $i \in \{1,\dots,d_A\}$.
Thus, the proposed quasi-randomization establishes $H_{0A}$.
We will use the notation ${\boldsymbol{p}}^\pi := \boldsymbol{1}_{d_A} \otimes (p_{\cdot 1}^\pi, p_{\cdot 2}^\pi, \dots, p_{\cdot d_B}^\pi)^\top$.

Furthermore, \textcolor{black}{using a variant of Theorem~\ref{thm:main_rand},} it can be shown that the conditional asymptotic distribution of 
$ \sqrt{n} (\hat{\boldsymbol{p}}^\pi - \textcolor{black}{{\boldsymbol{p}}^\pi})$ \textcolor{black}{converges in probability to a multivariate normal distribution} with zero mean vector and variance-covariance matrix $\tilde {\boldsymbol{V}}$.
Here, $\tilde {\boldsymbol{V}}$ is given by the following block compound symmetry structure:
$$ \tilde {\boldsymbol{V}} = \boldsymbol{I}_{d_A} \otimes \tilde{\boldsymbol{\Sigma}} + (\boldsymbol{J}_{d_A} - \boldsymbol{I}_{d_A}) \otimes \tilde{\boldsymbol{\Upsilon}}  =  \begin{pmatrix}
                    \tilde{\boldsymbol{\Sigma}} & \tilde{\boldsymbol{\Upsilon}} & \dots & \tilde{\boldsymbol{\Upsilon}} \\
                    \tilde{\boldsymbol{\Upsilon}} & \tilde{\boldsymbol{\Sigma}} & \dots & \tilde{\boldsymbol{\Upsilon}} \\
                    \vdots & \vdots & \ddots & \vdots \\
                    \tilde{\boldsymbol{\Upsilon}} & \tilde{\boldsymbol{\Upsilon}} & \dots & \tilde{\boldsymbol{\Sigma}}
                   \end{pmatrix} \in \mathbb{R}^{(d_Ad_B) \times (d_A d_B)}, $$
for some symmetric matrices  $\tilde{\boldsymbol{\Sigma}}, \tilde{\boldsymbol{\Upsilon}} \in \mathbb{R}^{d_B \times d_B}$.

Under the above-mentioned variant of Assumption~\ref{ass:test}, asymptotic results similar to those in Theorem~\ref{thm:T} and Corollary~\ref{cor:test1} can be shown. This yields the following Corollary~\ref{cor:A} for the specialized quasi-randomization scheme introduced in this subsection. Note that since $\vp^\pi \in \ker(\vC_A)$ almost surely, we may state the result in terms of the reduced representation of $T_n^\pi$.

\begin{cor}
\label{cor:A}
Under Assumptions~\ref{ass:MCAR}--\ref{ass:V} and as $n\to\infty$, the following convergence holds conditionally on $\mathcal{A}_{X,\lambda}$ in outer probability under $H_0$ and $H_1$: $$T_n^\pi =  n (\hat{\boldsymbol{p}}^\pi)^\top \vC_A^\top (\vC_A\hat{\boldsymbol V}^\pi\vC_A^\top)^{+} \vC_A \hat{\boldsymbol{p}}^\pi \stackrel d\to \chi^2_{d_A-1}.$$
Consequently, for the resulting quasi-randomization test, say $ \Psi_n^{\pi\min}$, we have $E(\Psi^{\pi\min}_n) \to \alpha\cdot \ind_{H_0} + \ind_{H_1}$ for each significance level $\alpha\in (0,1)$.
In addition, $E(\Psi_n^{\pi\min}) \leq \alpha$  for every finite $n\geq 2$  if the distribution of $(\boldsymbol{X}_1,\vlam_1)$ is invariant under the vector-wise permutation of the first factor, i.e., $\vX_1 \stackrel d = \vX^\pi_1 $ and $\vlam_1 \stackrel d= \vlam_1^\pi$ in \eqref{eq:permA} for all \textit{fixed} permutations $\pi \in \mathcal{S}_{d_A}$.
\end{cor}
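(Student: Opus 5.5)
The plan is to mirror the one-way route (Theorem~\ref{thm:main_rand}, then Corollary~\ref{cor:T_pi}, then Corollary~\ref{cor:test1}), replacing the full symmetric group $\mathcal S_d$ by the subgroup acting through $\mathcal S_{d_A}$ on blocks of the first factor.

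\textbf{Step 1: conditional CLT.} First we establish the conditional CLT for $\sqrt n(\hat\vp^\pi-\vp^\pi)$. The randomized pairs $(\vX_k^\pi,\vlam_k^\pi)$ from \eqref{eq:permA} are, conditionally on $\mathcal A_{X,\lambda}$, independent but not identically distributed. Each is uniform over the $d_A!$ block permutations of the $k$-th data point, so the general quasi-randomization framework of \cite{dobler23} applies verbatim. We combine it with the asymptotic linearity of $\hat\vp$, i.e.\ the Hadamard differentiability of $(\hat F_{ij})_{ij}\mapsto(\int\hat H\,d\hat F_{ij})_{ij}$ together with the ratio structure $\lambda_{ij\cdot}^{-1}\sum_k\lambda_{ijk}c(\cdot-X_{ijk})$. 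This yields a conditional CLT with limit $\mathcal N(\vnull,\tilde\vV)$.

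\textbf{Step 2: structure of $\tilde\vV$ and the degrees of freedom.} Invariance of the randomization law under the $\mathcal S_{d_A}$-action forces the block compound-symmetric form $\vI_{d_A}\otimes\tilde{\vSigma}+(\vJ_{d_A}-\vI_{d_A})\otimes\tilde{\boldsymbol\Upsilon}$. The limit $\tilde\vV$ equals the covariance of the symmetrized influence function $\tfrac1{d_A!}\sum_\pi\dot\psi\circ\Pi$ under $\P$, and $\vp^\pi$ converges to the corresponding symmetrized effect. Then $\vC_A\tilde\vV\vC_A^\top=(\vP_{d_A}\otimes\vone_{d_B}^\top)\tilde\vV(\vP_{d_A}\otimes\vone_{d_B})=\vP_{d_A}\,\vone_{d_B}^\top(\tilde{\vSigma}-\tilde{\boldsymbol\Upsilon})\vone_{d_B}$. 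Its rank is $d_A-1$ exactly when the scalar $\vone^\top(\tilde\vSigma-\tilde{\boldsymbol\Upsilon})\vone>0$.

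\textbf{Step 3: the statistic.} We then use $\hat\vV^\pi\to\tilde\vV$ in conditional probability, via continuity of the covariance functional in the randomization empirical process as already argued before Corollary~\ref{cor:T_pi}. Since $\vC_A\vp^\pi=\vnull$ almost surely and the rank does not jump, continuity of the Moore--Penrose inverse on sets of constant rank gives $T_n^\pi\to\chi^2_{d_A-1}$ conditionally. This holds under $H_0$ and $H_1$ alike, because the randomization always enforces $H_{0A}$.

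\textbf{Step 4: the test.} From Theorem~\ref{thm:T}, adapted to $\vC_A$ with $f(\vC_A)=d_A-1$, we get $T_n\to\chi^2_{d_A-1}$ under $H_0$ and $T_n\to\infty$ under $H_1$. The conditional quantile $c^\pi_{1-\alpha,n}$ converges in probability to the continuous $\chi^2_{d_A-1}$ quantile, which gives $E\Psi_n^{\pi\min}\to\alpha\ind_{H_0}+\ind_{H_1}$ via Slutsky and dominated convergence. For finite $n$, the argument is the standard group-invariance one (Hoeffding/\cite{romano90}). If each data point is invariant in law under every fixed block permutation, then the joint law of the sample is invariant under the product group $\mathcal S_{d_A}^n$. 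The observed $T_n$ is then exchangeable with its randomized copies, so $P(T_n>c^\pi_{1-\alpha,n})\le\alpha$. The convention $\Psi=0$ when a component is empty, and $T^\pi=\infty$ otherwise, only makes the test more conservative.

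\textbf{Main obstacle.} The hard step is showing the rank condition $\vone^\top(\tilde\vSigma-\tilde{\boldsymbol\Upsilon})\vone>0$ from Assumption~\ref{ass:test}. I would try the following. This scalar is $\tfrac{d_A}{d_A-1}$ times the variance of the randomized contrast $\vC_A$ applied to the symmetrized influence function. We have $\mathrm{rk}(\vV)=d_Ad_B-1$, so $\vV$ is positive definite on $\vone^\perp\ni\vC_A^\top\boldsymbol e$. Averaging over $\pi$ sums nonnegative definite conjugates $\Pi\vV\Pi^\top$ plus a nonnegative between-permutation term, and this average remains positive on the $\mathcal S_{d_A}$-contrast directions.
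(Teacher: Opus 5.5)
Your skeleton is the paper's: a variant of Theorem~\ref{thm:main_rand}, then the Kronecker identity $\vC_A\tilde\vV\vC_A^\top=\gamma_A\vP_{d_A}$ with $\gamma_A=\boldsymbol{1}_{d_B}^\top(\tilde{\boldsymbol\Sigma}-\tilde{\boldsymbol\Upsilon})\boldsymbol{1}_{d_B}$, then the arguments of Corollaries~\ref{cor:T_pi} and~\ref{cor:test1}. You also correctly locate the crux in $\gamma_A>0$. The gap is in that crux, because you misidentify $\tilde\vV$. The conditional CLT only sees the randomness of the permutations. Hence $\tilde\vV=E\{\operatorname{Cov}(\dot\psi_{\tilde{\mathbb P}}(\Pi_1(\vX_1,\vlam_1))\mid\vX_1,\vlam_1)\}$, the expected \emph{within}-orbit covariance. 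The covariance under $\mathbb P$ of the orbit average $\frac1{d_A!}\sum_\pi\dot\psi\circ\Pi$, which is what you name, is the complementary \emph{between}-orbit part. Taken literally, your description forces $\gamma_A=0$. Since $\tilde{\mathbb P}$ is invariant under the block action, $\dot\psi_{\tilde{\mathbb P}}\circ\Pi=\boldsymbol Q_\pi\dot\psi_{\tilde{\mathbb P}}$ with $\boldsymbol Q_\pi=\boldsymbol R_\pi\otimes\vI_{d_B}$, where $\boldsymbol R_\pi$ is the $d_A\times d_A$ permutation matrix of $\pi$. So the orbit average equals $(\tfrac1{d_A}\vJ_{d_A}\otimes\vI_{d_B})\dot\psi_{\tilde{\mathbb P}}$, which $\vC_A=\vP_{d_A}\otimes\boldsymbol{1}_{d_B}^\top$ annihilates. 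For the same reason, your decomposition into averaged conjugates ``plus a nonnegative between-permutation term'' bounds the wrong quantity. It lower-bounds the total covariance of $\dot\psi_{\tilde{\mathbb P}}$ under $\tilde{\mathbb P}$, whereas $\tilde\vV$ is that total \emph{minus} the between-orbit covariance. A lower bound on the total says nothing about $\tilde\vV$ unless the subtracted part vanishes.

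The missing step is exactly what the paper supplies. For $\vv=2^{-1/2}(1,-1,0,\dots,0)^\top\otimes\boldsymbol{1}_{d_B}$, and indeed for every direction in the row space of $\vC_A$, one has $E(\vv^\top\dot\psi_{\tilde{\mathbb P}}(\Pi_1(\vX_1,\vlam_1))\mid\vX_1,\vlam_1)=0$ almost surely; this is the equivariance computation above. Only then does the conditional variance in direction $\vv$ equal the full second moment, $\gamma_A=\frac1{d_A!}\sum_{\pi}E\big((\vv^\top\dot\psi_{\tilde{\mathbb P}}(\vX_1^\pi,\vlam_1^\pi))^2\big)$. This can be bounded below either by the identity-permutation term, as the paper does, or by your conjugate-averaging argument. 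In the latter, each summand is positive because $\boldsymbol Q_\pi^\top\vv\in\boldsymbol{1}_{d_Ad_B}^\perp\setminus\{\boldsymbol 0\}$, once the second moment is compared with $\vV$ in the same way as in the paper; this is a slightly stronger form of the same bound. With $\tilde\vV$ correctly identified and this conditional-mean-zero step inserted, your proposal becomes the paper's proof.
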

The asymptotic degrees of freedom are derived in Appendix~\ref{ssec:rank_main}. An analogous statement is obtained when the roles of factors (A) and (B) are interchanged.

\subsubsection{Testing interaction effects in two-way layouts}

An adjusted quasi-randomization procedure for testing $H_{0AB}$, the null hypothesis of no interaction, works as follows:
let $\pi_A = \pi_A^{(k)}: \Omega \to \mathcal{S}_{d_A}$ and $\pi_B = \pi_B^{(k)}: \Omega \to \mathcal{S}_{d_B}$ be independent random permutations of the numbers $1, \dots, d_A$ and $1, \dots, d_B$, respectively, $k=1,\dots,n$.
Then the quasi-randomized sample is given by
\begin{align}
\begin{split}
\label{eq:permAB}
    (\boldsymbol{X}_{k}^\pi, \vlam_k^\pi):=  (&X_{\pi_A(1) \pi_B(1) k}, X_{\pi_A(1) \pi_B(2) k}, \dots, X_{\pi_A(1) \pi_B(d_B) k}, X_{\pi_A(2) \pi_B(1) k}, \dots, X_{\pi_A(d_A) \pi_B(d_B) k}, \\
     \lambda_{\pi_A(1) \pi_B(1) k}, &  \lambda_{\pi_A(1) \pi_B(2) k}, \dots, \lambda_{\pi_A(1) \pi_B(d_B) k}, \lambda_{\pi_A(2) \pi_B(1) k}, \dots, \lambda_{\pi_A(d_A) \pi_B(d_B) k}), \ k=1,\dots, n. 
\end{split}
\end{align} 
Based on these quasi-randomized samples, the re-computed estimated marginal Mann-Whitney effects $\hat p^\pi_{ij}$ all converge to the same limit in probability: $p^\pi := E(\hat p^\pi_{ij} \ | \ \mathcal{A}_{X,\lambda}) = \tfrac12$, $i=1,\dots, d_A$, \  $j=1, \dots, d_B$.
Thus, the proposed quasi-randomization scheme establishes $H_{0AB}$.
Furthermore, it creates a situation of \textit{block-wise {interchangeability}}: all factor $A$ entries are permuted simultaneously and, independently thereof, also all factor $B$ entries are permuted simultaneously. 
The algebraic group induced by this quasi-randomization procedure has a cardinality of $d_A! d_B!$ which is much smaller than the `quasi-randomize everything' approach with an algebraic group of cardinality $(d_Ad_B)!$.
As a consequence, the resulting hypothesis test for $H_{0AB}$ based on the reduced quasi-randomization will be finitely control the type-I error probability under the much larger subpopulation where only block-wise {interchangeability} holds, as opposed to the very small subpopulation of complete exchangeability.
In the subsequent section, we will also investigate in a simulation study whether the reduced quasi-randomization scheme bears an advantage in situations under $H_{0AB}$ where neither exchangeability nor interchangeability  hold.

We will only consider the reduced quasi-randomization scheme henceforth.
Conditionally on the data, $\hat {\boldsymbol{p}}^\pi$ has an asymptotic block compound symmetry covariance matrix:
$$ \tilde {\boldsymbol{V}} = \boldsymbol{I}_{d_A} \otimes \tilde{\boldsymbol{\Sigma}} + (\boldsymbol{J}_{d_A} - \boldsymbol{I}_{d_A}) \otimes \tilde{\boldsymbol{\Upsilon}} =  \begin{pmatrix}
                    \tilde{\boldsymbol{\Sigma}} & \tilde{\boldsymbol{\Upsilon}} & \dots & \tilde{\boldsymbol{\Upsilon}} \\
                    \tilde{\boldsymbol{\Upsilon}} & \tilde{\boldsymbol{\Sigma}} & \dots & \tilde{\boldsymbol{\Upsilon}} \\
                    \vdots & \vdots & \ddots & \vdots \\
                    \tilde{\boldsymbol{\Upsilon}} & \tilde{\boldsymbol{\Upsilon}} & \dots & \tilde{\boldsymbol{\Sigma}}
                   \end{pmatrix} \in \mathbb{R}^{(d_Ad_B) \times (d_A d_B)}. $$
Here, each matrix block also exhibits a compound symmetry structure:
 $$\tilde{\boldsymbol{\Sigma}} = \begin{pmatrix}
                    \tilde \sigma^2 & \tilde \rho & \dots & \tilde\rho \\
                    \tilde\rho & \tilde \sigma^2 & \dots & \tilde\rho \\
                    \vdots & \vdots & \ddots & \vdots \\
                    \tilde\rho & \tilde\rho & \dots & \tilde \sigma^2
                   \end{pmatrix} \in \mathbb{R}^{d_B \times d_B} \quad \text{and} \quad \tilde{\boldsymbol{\Upsilon}} = \begin{pmatrix}
                    \tilde\eta & \tilde\nu & \dots & \tilde\nu \\
                    \tilde\nu & \tilde\eta & \dots & \tilde\nu \\
                    \vdots & \vdots & \ddots & \vdots \\
                    \tilde\nu & \tilde\nu & \dots & \tilde\eta
                   \end{pmatrix} \in \mathbb{R}^{d_B \times d_B}.$$
Furthermore, due to $\tilde {\boldsymbol{V}} \boldsymbol{1}_{d_Ad_B} = \boldsymbol{0}_{d_Ad_B}$, it holds that
$\tilde \sigma^2 + (d_A-1) \tilde\eta + (d_B-1) \tilde\rho  + (d_A-1)(d_B-1) \tilde\nu = 0.$
The limiting distribution of $T_n^\pi$ in the two-way layout version of
Theorem~\ref{thm:T} is \(\chi^2_{(d_A-1)(d_B-1)}\) under the corresponding
variant of Assumption~\ref{ass:test}.
The following result refers to the specialized quasi-randomization scheme introduced in the present subsection.
\begin{cor}
Under Assumptions~\ref{ass:MCAR}--\ref{ass:V} and as $n\to\infty$, the following convergence holds conditionally on $\mathcal{A}_{X,\lambda}$ in outer probability under $H_0$ and $H_1$: 
$$T_n^\pi = n (\hat{\boldsymbol{p}}^\pi)^\top \vC_{AB}^\top (\vC_{AB}\hat{\boldsymbol V}^\pi\vC_{AB}^\top)^{+} \vC_{AB} \hat{\boldsymbol{p}}^\pi \stackrel d\to \chi^2_{(d_A-1)(d_B-1)}.$$
Consequently, for the resulting quasi-randomization test, say $\Psi_n^{\pi\min}$, we have $E(\Psi_n^{\pi\min}) \to \alpha\cdot \ind_{H_0} + \ind_{H_1}$ for each significance level $\alpha\in (0,1)$.
In addition, $E(\Psi_n^{\pi\min}) \leq \alpha$  for every finite $n\geq 2$  if the distribution of  $(\boldsymbol{X}_1, \boldsymbol{\lambda}_1)$ is invariant under the vector-wise permutation of the first factor and also of the second factor, i.e., $\vX_1 \stackrel d = \vX^\pi_1 $ and $\vlam_1 \stackrel d= \vlam_1^\pi$ in \eqref{eq:permAB} for all \textit{fixed} permutations $\pi_A \in \mathcal{S}_{d_A}$ and $\pi_B \in \mathcal{S}_{d_B}$.
\end{cor}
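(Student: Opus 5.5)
We prove the last corollary, for the interaction quasi-randomization \eqref{eq:permAB}, by the same route as Corollaries~\ref{cor:T_pi}, \ref{cor:test1} and \ref{cor:A}. There are three steps: a conditional central limit theorem for $\hat\vp^\pi$, a rank and continuous-mapping argument for the quadratic form, and the usual randomization-test argument for the finite-sample claim.

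\emph{Step 1: conditional central limit theorem.} The reduced scheme draws, for each $k$, an element $(\pi_A^{(k)},\pi_B^{(k)})$ of the subgroup $\mathcal S_{d_A}\times\mathcal S_{d_B}\subset\mathcal S_{d_Ad_B}$, acting on the coordinates. This is still a quasi-randomization in the sense of \cite{dobler23}. We therefore repeat the proof of Theorem~\ref{thm:main_rand}:
\begin{itemize}
\item use the asymptotic linearity of $\hat\vp$ with the influence function $\dot\psi$;
\item apply the conditional functional central limit theorem for the randomization empirical process of \cite{dobler23};
\item conclude, by the functional delta method, that $\sqrt n(\hat\vp^\pi-\tfrac12\boldsymbol 1_{d_Ad_B})$ converges conditionally, in outer probability, to $\mathcal N(\boldsymbol 0,\tilde\vV)$.
\end{itemize}
The centring $\tfrac12$ comes from transitivity. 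The group acts transitively on the index pairs $(i,j)$, so $E(\hat p_{ij}^\pi\mid\mathcal A_{X,\lambda})$ does not depend on $(i,j)$. Averaging over all cells gives $\int\hat H^\pi\, d\hat H^\pi=\tfrac12$.

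The structure of $\tilde\vV$ follows from invariance: its law is unchanged under simultaneous row permutations by $\mathcal S_{d_A}$ and column permutations by $\mathcal S_{d_B}$. This forces the block compound symmetric form with the parameters $\tilde\sigma^2,\tilde\rho,\tilde\eta,\tilde\nu$. The identity $\boldsymbol 1^\top\hat\vp^\pi\equiv d_Ad_B/2$ gives $\tilde\vV\boldsymbol 1=\boldsymbol 0$.

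\emph{Step 2: limit of $T_n^\pi$ (main obstacle).} The step I expect to be hardest is the rank of $\vC_{AB}\tilde\vV\vC_{AB}^\top$. The matrix $\tilde\vV$ commutes with the action of the group, so it is diagonalized by the decomposition
\[
\mathbb R^{d_A}\otimes\mathbb R^{d_B}=\bigoplus_{a,b\in\{0,1\}} U_a\otimes W_b,
\]
where $U_0=\mathrm{span}(\boldsymbol 1)$ and $U_1=\boldsymbol 1^\perp$, and likewise for $W_b$. On $U_1\otimes W_1$, the image of $\vC_{AB}=\vP_{d_A}\otimes\vP_{d_B}$, the eigenvalue is
\[
\tilde\sigma^2-\tilde\rho-\tilde\eta+\tilde\nu .
\]
This eigenvalue must be shown to be strictly positive. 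I would first try to write it as a limiting conditional variance,
\[
\lim_{n\to\infty} n\,\mathrm{Var}\big((\vu\otimes\vw)^\top\hat\vp^\pi\,\big|\,\mathcal A_{X,\lambda}\big)\quad\text{for contrasts }\vu\perp\boldsymbol 1,\ \vw\perp\boldsymbol 1,
\]
and express it through the influence function. Positivity then reduces to the influence-function projection not being a.s.\ constant on the interaction contrasts. This should follow from the two-way variant of Assumption~\ref{ass:test}, which says that the only null direction is $\boldsymbol 1$; the argument mirrors the proof of $\tilde\sigma^2>0$ in Theorem~\ref{thm:main_rand}. Granting this, $\mathrm{rk}(\vC_{AB}\tilde\vV\vC_{AB}^\top)=(d_A-1)(d_B-1)$.

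Given the rank, the limit follows. By consistency of $\hat\vV^\pi$ (continuous functional of the randomization empirical process) and lower semicontinuity of the rank, the generalized inverse converges. The continuous mapping theorem then yields $\chi^2_{(d_A-1)(d_B-1)}$ conditionally, under both $H_0$ and $H_1$.

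\emph{Step 3: test properties.} The two-way version of Theorem~\ref{thm:T} gives the same $\chi^2$ limit for $T_n$ under $H_{0AB}$ and divergence under $H_1$. The conditional quantiles $c^\pi_{1-\alpha,n}$ converge in probability to the $\chi^2$ quantile, since the limit is continuous. This yields $E(\Psi_n^{\pi\min})\to\alpha\,\ind_{H_0}+\ind_{H_1}$ exactly as in Corollary~\ref{cor:test1}.

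For finite $n$, suppose the data are invariant under the group. Then, conditionally on the orbit, the observed sample is uniformly distributed over the group transforms. Moreover, $\hat\vp$ and $\hat\vp^\pi$ are computed by the same map $g$ (with $\vC_{AB}\boldsymbol 1=\boldsymbol 0$), so $T_n\stackrel d=T_n^\pi$. Hence the standard randomization-test argument gives $E(\Psi)\le\alpha$; the conventions for empty components only make the test more conservative.
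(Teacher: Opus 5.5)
\paragraph{Overall assessment.}
Your route is the paper's. It consists of a conditional CLT from the randomization empirical process of \cite{dobler23} plus the delta method, a block compound-symmetric $\tilde{\boldsymbol V}$ obtained from invariance, and the standard randomization-test arguments in Step~3. Your decomposition over $U_a\otimes W_b$ is a cleaner way to get the paper's identity $\boldsymbol C_{AB}\tilde{\boldsymbol V}\boldsymbol C_{AB}^\top=\gamma_{AB}(\boldsymbol P_{d_A}\otimes\boldsymbol P_{d_B})$, with $\gamma_{AB}=\tilde\sigma^2-\tilde\rho-\tilde\eta+\tilde\nu$.

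\paragraph{Where it breaks.}
The gap is the step you explicitly grant, $\gamma_{AB}>0$. The paper also only asserts it ``for similar reasons''. It does not follow from $\mathrm{rk}(\boldsymbol V)=d_Ad_B-1$. Mirroring the proof of $\tilde\sigma^2>0$, i.e.\ keeping only the identity permutation, gives the lower bound
$|\mathcal G|^{-1}E[(\boldsymbol v^\top\dot\psi_{\tilde{\mathbb P}}(\boldsymbol X_1,\boldsymbol\lambda_1))^2]$ for the conditional variance of the interaction contrast, where $\mathcal G=\mathcal S_{d_A}\times\mathcal S_{d_B}$. Here the influence function is taken at the symmetrized law $\tilde{\mathbb P}$, not at $\mathbb P$, so this quantity is not $\boldsymbol v^\top\boldsymbol V\boldsymbol v$. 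Without missingness, every marginal of $\tilde{\mathbb P}$ equals $H$, so $\boldsymbol v^\top\dot\psi_{\tilde{\mathbb P}}(\boldsymbol x)=\sum_c v_cH(x_c)$ for any contrast $\boldsymbol v$. Hence $\gamma_{AB}=0$ whenever $(H(X_{ij1}))_{i,j}$ is a.s.\ additive in $(i,j)$. By contrast, $\boldsymbol V$ is built from $\dot\psi_{\mathbb P,c}=H(X_c)-d^{-1}\sum_sF_c(X_s)+\mathrm{const}$, and its rank condition does not exclude that additivity.

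\paragraph{A counterexample.}
Take $d_A=d_B=2$, no missing values, rows indexing the levels of $A$, and $\boldsymbol X_1$ uniform on the six additive arrays
$\bigl(\begin{smallmatrix}1&2\\2&3\end{smallmatrix}\bigr)$, $\bigl(\begin{smallmatrix}3&2\\2&1\end{smallmatrix}\bigr)$, $\bigl(\begin{smallmatrix}1&1\\3&3\end{smallmatrix}\bigr)$, $\bigl(\begin{smallmatrix}3&3\\1&1\end{smallmatrix}\bigr)$, $\bigl(\begin{smallmatrix}1&1\\1&1\end{smallmatrix}\bigr)$, $\bigl(\begin{smallmatrix}3&3\\3&3\end{smallmatrix}\bigr)$.
\begin{itemize}
\item The marginals satisfy $F_{11}=F_{22}$ and $F_{12}=F_{21}$, and $(H(1),H(2),H(3))=(\tfrac5{24},\tfrac12,\tfrac{19}{24})$ is affine. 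Hence $H_{0AB}$ holds and $\gamma_{AB}=0$.
\item Yet $\mathrm{rk}(\boldsymbol V)=3$. The row and column contrasts of $\dot\psi_{\mathbb P}$ are $\tfrac7{24}$ times those of the array. The interaction contrast equals $-\tfrac1{24}(8-\sum_cX_{c1})$, which is non-degenerate because of the two constant arrays.
\item A direct computation gives $\sqrt n\,\boldsymbol v^\top\hat{\boldsymbol p}\approx\tfrac13 a$, with $a=(n_U-n_L)/(2\sqrt n)$ and $n_L,n_U$ the counts of the constant arrays. In contrast, $\sqrt n\,\boldsymbol v^\top\hat{\boldsymbol p}^\pi=O_p(n^{-1/2})$.
\item After studentization, $T_n^\pi\approx T_nW/(T_n+W)$ with $W\sim\chi^2_1$ conditionally. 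So $T_n$ lies above essentially the whole conditional law of $T_n^\pi$, and the test rejects $H_{0AB}$ with probability tending to one.
\end{itemize}

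\paragraph{Consequence.}
Step~2 cannot be completed under Assumptions~\ref{ass:MCAR}--\ref{ass:V}. An extra condition such as $\gamma_{AB}>0$, i.e.\ $\mathrm{rk}(\boldsymbol C_{AB}\tilde{\boldsymbol V}\boldsymbol C_{AB}^\top)=(d_A-1)(d_B-1)$, must be imposed. The finite-sample level statement under invariance is unaffected.
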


The derivation of the asymptotic degrees of freedom is given in Appendix~\ref{ssec:rank_interaction}.
Similar approaches are applicable for testing the various obvious hypotheses in higher-way layouts. We do not further exercise these ideas here for the sake of brevity.

\section{Simulation study}\label{simulation}
To compare the finite-sample performance of the proposed tests in terms of their sizes and power, we conducted a simulation study. 
We consider the asymptotic test $\Psi^\textnormal{asy}_n$, the quasi-randomization test $\Psi^\pi_n$, and also a naive bootstrap test $\Psi^\textnormal{boot}_n$ 
which respectively reject the null hypothesis if the Wald-type test statistic $T_n$ exceeds the quantiles of the $\chi_{f(\vC)}^2$-distribution, the $(1-\alpha)$-quantile $c_{1-\alpha}^\pi$ of the conditional distribution of $T_n^\pi$ given the data, and the $(1-\alpha)$-quantile $c_{1-\alpha}^{\textnormal{boot}}$ of the conditional distribution of a bootstrap-variant $T_n^\textnormal{boot}$ given the data.
Here, similar to the quasi-randomization approach, bootstrap samples are generated by repeatedly and randomly drawing from all components of each data point with replacement; for each drawing, the same drawn indices are used for the observations ($x$) and the corresponding missingness indicators ($\lambda$). 
For the two-factor setting, we additionally simulated the reduced variant $\Psi^{\pi\min}_n$ from Section~\ref{twoway} which is based on the minimized quasi-randomization groups.
Thus, the concrete choice of quasi-randomization for this test depends on the considered null hypothesis.
We chose the nominal significance level $\alpha=0.05$ for all tests.

 For each simulation setting described in the subsequent subsections, the empirical rejection rates are based on \(5{,}000\) simulation runs; the critical values of the resampling-based tests are derived from \(2{,}000\) bootstrap or quasi-randomization repetitions, respectively. 

\subsection{Type-I error probability simulation}
\label{sec:type-I}
First, realizations of $\vZ_1, \dots, \vZ_n$ are generated with the help of  the R package \texttt{copula} \citep{hofert2026copula};
note that equal marginal cumulative distribution functions do not change the value of the Mann-Whitney effect.
Thus, we chose Archimedean Clayton and Gumbel copulas with Kendall's rank correlation coefficients $\tau\in \{0.1,0.2\}$. For both types of copulas, the corresponding copula parameter is obtained from the respective value of $\tau$ using the function \textit{iTau} from the same R package.
Consequently, we obtained data with lower tail dependence (Clayton) and upper tail dependence (Gumbel). 
To obtain data with not non-exchangeable and non-interchangeable components, therewith we calculate the $i^{th}$ component of $\vX_k$ as 
$$X_{ki}=(1+ 0.1(i-1))((\vZ_{k})_i-1/2)$$
for $k=1,...,n$ and $ i=1,...,d$, which does not affect the Mann-Whitney effects in the present case.

Two different settings are studied: a one-way layout with \(d_A = 11\) levels, resulting in  11-dimensional data---this setting is motivated from the real data example in Section~\ref{application} below---and a two-way layout with $d_A = 3$ levels for factor $A$ and $d_B = 2$ levels for factor $B$, resulting in $d = d_A d_B = 6$-dimensional data.
We consider tests for $H_{0,\textnormal{equal}}: p_1 = \dots = p_{11}$, the null hypothesis of equality, and the null hypotheses $H_{0,\textnormal{equal}}: p_{11} = \dots = p_{d_A d_B}$, $H_{0A} : p_{1j} = p_{2j} = p_{3j}, j=1,2 $, $H_{0AB} : p_{11} - p_{1\cdot} - p_{\cdot1} = \dots = p_{32} - p_{3\cdot} - p_{\cdot2}$, respectively.
Here, $p_{i\cdot} := \tfrac12(p_{i1} + p_{i2}), i=1,2,3$, and $p_{\cdot j}:= \tfrac13(p_{1j} + p_{2j} + p_{3j}), j=1,2$.
The equality of all uniform marginal distributions of $\vZ_k$ in combination with the affine-linear transformation to obtain $\vX_k$ guarantees that all of these null hypotheses are true.

Furthermore, we simulate missingness according to MCAR (Assumption~\ref{ass:MCAR}) by first generating data $\vy_1 = (y_{11}, \dots, y_{1d}), \dots, \vy_n= (y_{n1}, \dots, y_{nd})$ from an exchangeable Gaussian copula with Kendall's $\tau=0.2$ and then set $\lambda_{ik} := \ind\{y_{ik}> \Lambda_{i}\}$, $i=1, \dots,d, \ k=1,\dots,n$, where the expected missingness probabilities $\Lambda_i := P(\lambda_{i1} = 0)$ were chosen to be 
$$(\Lambda_1, \dots, \Lambda_{11})= (0.1, 0.1, 0.12, 0.19, 0.13, 0.18, 0.08, 0.1, 0.19, 0.11, 0.13) $$ 
$$ \text{and} \qquad  (\Lambda_{1},\dots, \Lambda_{6})= (0.1, 0.12, 0.19, 0.12, 0.08, 0.18),$$ respectively.
Overall, these choices lead to component-wise non-exchangeably and non-interchange\-ably distributed observable data. Consequently, none of the considered tests controls the type-I error probability for finite sample sizes.
Note that we have avoided to split the component index into two indices in the two-way layout in order to simplify the notation.

\subsection{Power simulation}
\label{sec:power}

We investigate the ability of the developed tests to detect deviations from the null hypothesis, by considering two kinds of scenarios under the alternative hypothesis.
First, in the one-factor setting, we consider a trend alternative obtained by
$\vX_k+\delta\cdot (0,1,\ldots,10)^\top/10$.
That is, all components except the first are shifted by an increasing proportion.
Next, in the two-factor setting, a one-component alternative is used, i.e.,
$\vX_k+\delta\cdot (1,0,0,0,0,0)^\top$.
For both kinds of alternatives, we use the same range of values, with $\delta\in\{0,0.15,0.3\}$, and each $\vX_k$ follows the distributions from Section~\ref{sec:type-I}.
Also, the same missingness probabilities are used.

\subsection{Results}
\label{sec:sim_res}
Some graphical summaries of the results of the simulation studies can be found in Figures~\ref{fig:MCAR_size_n}--\ref{fig:MCAR_power}; tables with detailed numerical results can be found in the appendix.
It is apparent that the test \(\Psi_n^\pi\), based on the quasi-randomization procedure, reliably controls the type-I error, even for very small sample sizes, especially in relation to the dimension $d \in\{6,11\}$; see Figure~\ref{fig:MCAR_size_n}.  In particular, it should be taken into account that  the bootstrap-based test $\Psi_n^\textnormal{boot}$ is slightly liberal. When available, the test $\Psi_n^{\pi\min}$, based on the reduced quasi-randomization scheme, exhibited empirical test sizes similar to those of $\Psi_n^\pi$.
The asymptotic test $\Psi_n^\textnormal{asy}$ has type-I errors off the charts (see Figure~\ref{fig:MCAR_size_n}).
As a consequence, it is not further considered in the subsequent figures and analyses.

The power of all three tests $\Psi_n^\pi, \Psi_n^{\pi\textnormal{min}}, \Psi_n^\textnormal{boot}$ is quite alike; see the detailed results in the appendix.
Thus, only the simulated power of $\Psi_n^\pi$ is displayed in Figure~\ref{fig:MCAR_power}. 
Not surprisingly, it rises with increasing sample sizes $n$ and shifts $\delta$.

\begin{figure}
\centering
\includegraphics[width=0.65\textwidth]{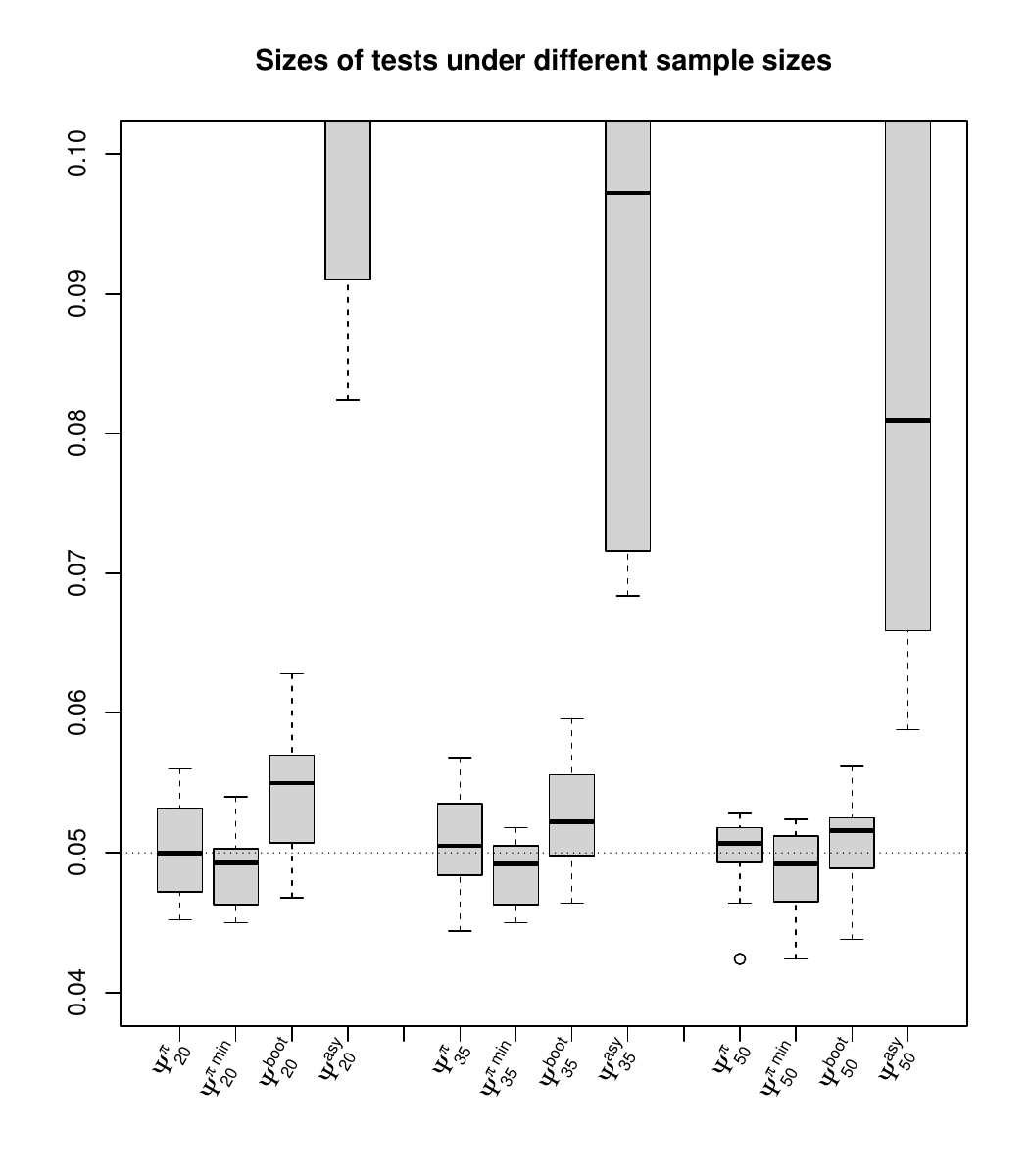}
\caption{Simulated sizes of all tests aggregated over all copulae and null hypotheses, under MCAR. 
The dotted line indicates the nominal significance level $\alpha=5\%$. Note that  $\Psi_n^{\pi\textnormal{min}}$ is not available for testing all hypotheses, so comparisons with this test demands caution.}
\label{fig:MCAR_size_n}
\end{figure}

\begin{figure}[h]
\centering
\includegraphics[width=0.65\textwidth]{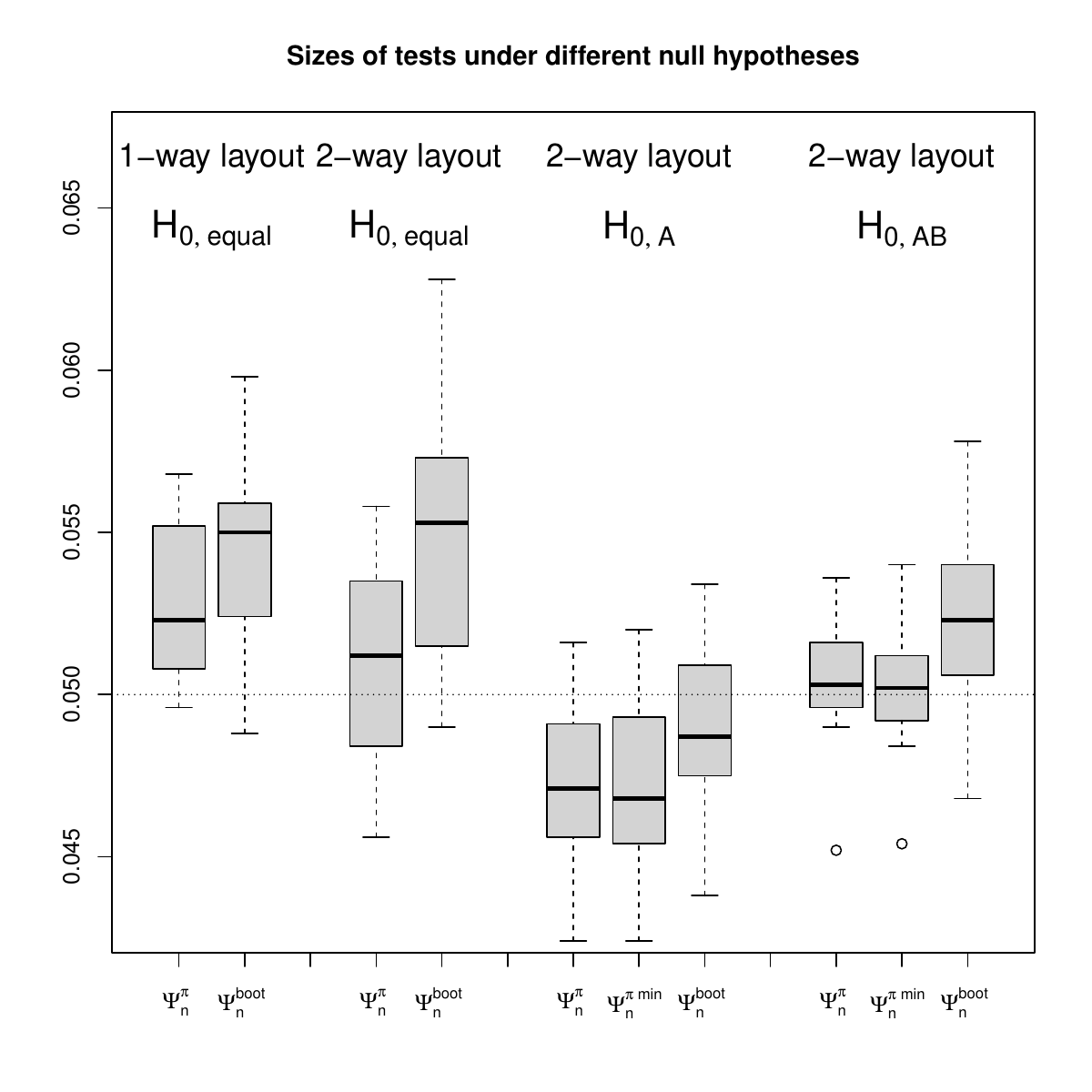}
\caption{Simulated sizes of all tests aggregated over all copulae and sample sizes, under MCAR. 
The dotted line indicates the nominal significance level $\alpha=5\%$. Note that  $\Psi_n^{\pi\textnormal{min}}$ is not available for testing $H_{0,\textnormal{equal}}$.}
\label{fig:MCAR_size_H0}
\end{figure}

\begin{figure}[h]
\centering
\includegraphics[width=0.45\textwidth]{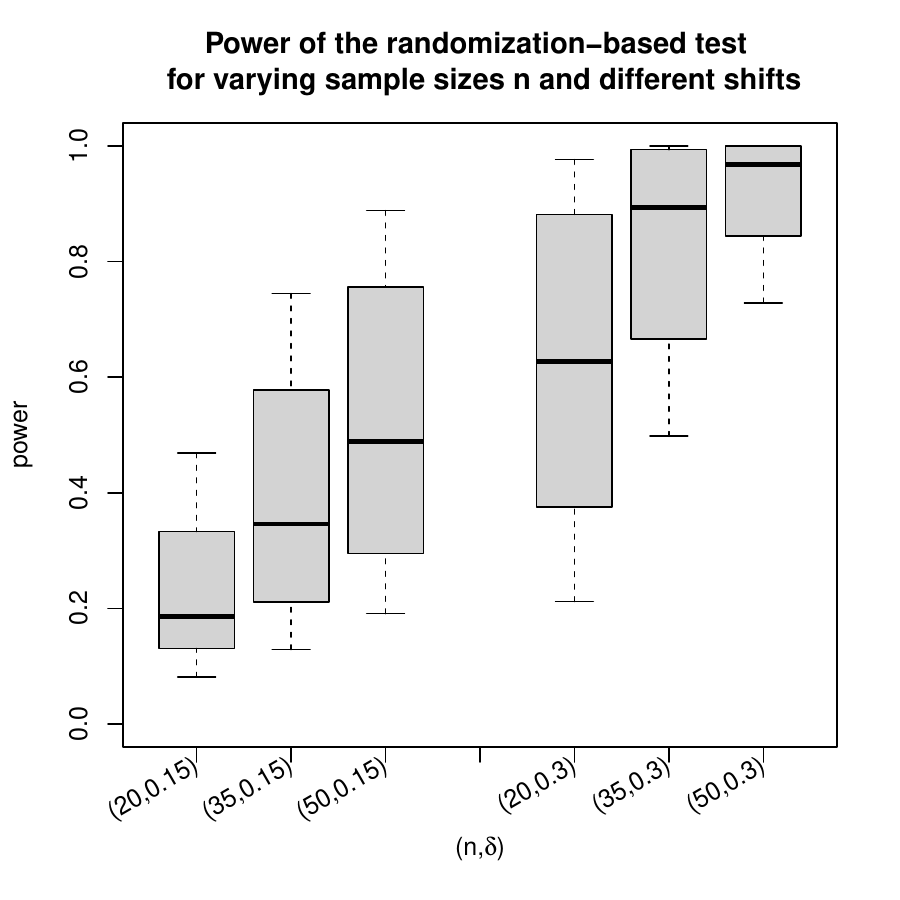}
\caption{Simulated power of the test $\Psi_n^\pi$ aggregated over all copulae, with varying sample sizes $n \in \{20,35,50\}$ (both of three adjacent boxplots) and shifts $\delta \in \{0.15, 0.3\}$ of the raw data, under MCAR.}
\label{fig:MCAR_power}
\end{figure}

\subsection{Additional simulation results}

The appendix contains additional simulation results that were obtained under the same settings, except that the missingness mechanism has been changed to `missing at random' (MAR).
Overall, the results are very similar to those presented in Section~\ref{sec:sim_res}, indicating that the developed approach is reasonably robust to violations of the MCAR assumption.

\subsection{Concluding remarks about the simulation studies}

All in all, we recommend the use of the tests based on quasi-randomization, which reliably control the type-I error probabilities.
If available, we also advise to use $\Psi_n^{\pi\min}$ instead of $\Psi_n^{\pi}$ because of the finite sample type-I error probability control of $\Psi_n^{\pi\min}$ for more data generating mechanisms.
There is hardly any difference in terms of power between these two tests.
The asymptotic test $\Psi_n^\textnormal{asy}$ should not be used under any circumstances because it is extremely liberal.

\section{Application to empirical data: Children learning math}\label{application}
The data of interest originate from a longitudinal intervention study about elementary school children with mathematical difficulties. 
These were defined as having a percentile rank of 25 or lower in a standardized math test \citep{roick2018demat3plus} at the beginning of the school year; see \cite{kuhn2019arithmetische} for details.
For the present paper, a random subsample (50\% of the total sample) was drawn. The random sample comprised 83 children: 40 children from third grade and 43 children from fourth grade, thereof, 34 boys and 49 girls.

Across the second school term, all children were administered a math-related learning progress assessment (LPA; \citealp{strathmann2012lvdm}) at 11 time points, with approximately two weeks between each assessment. The LPA was designed to assess basic and written calculation skills, with the first assessment taking place before the start of the interventions and thus serving as a baseline measurement.

The children were divided into three groups. The first group (termed `CODY') obtained a tablet-based math intervention during the second school term (25 children). It was designed specifically to address deficiencies of children with math difficulties, e.g., number and magnitude comparison \citep{kuhn2014number}. The math intervention comprised a maximum of 57 training days, each day requiring the child to work through two tasks for 10 minutes each. A second group (termed `CODY+NIP'; 12 children) obtained the same tablet-based math intervention, but additionally was administered a weekly small-group intervention across 12 sessions addressing magnitude and number concepts as well as basic calculation skills; the concept of this intervention is based on \cite{kaufmann2003numeracy}. Finally, a control group of 25 children did not receive any additional intervention during the second school term. 

Children were spread comparably across grades and groups ($\chi^2$(2) = 1.036, $p$-value = 0.596) as well as gender and groups ($\chi^2$(2) = 0.377, $p$-value = 0.828). Further, children differed in the degree of math difficulties: Children receiving the most intensive intervention (CODY+NIP) showed the lowest math $T$-scores ($M$ = 34.58, $SD$ = 1.78) compared to the CODY group ($M$ = 38.13, $SD$ = 3.82) and the control group ($M$ = 37.28, $SD$ = 5.00). Both intervention groups differed significantly from each other (Games-Howell $p$-value < 0.01); no other significant group differences were detected.

Subsample of size five from each intervention group are illustrated in Figure~\ref{fig:data}.
It is difficult to see a clear trend in these data, even in the case of observability, but on average the children seem to have a higher score in the second half of the tests compared to the first half.
In the following, we will analyze within each group whether there is any kind of learning effect.

\begin{figure}
    \centering
    \includegraphics[width=0.65\linewidth]{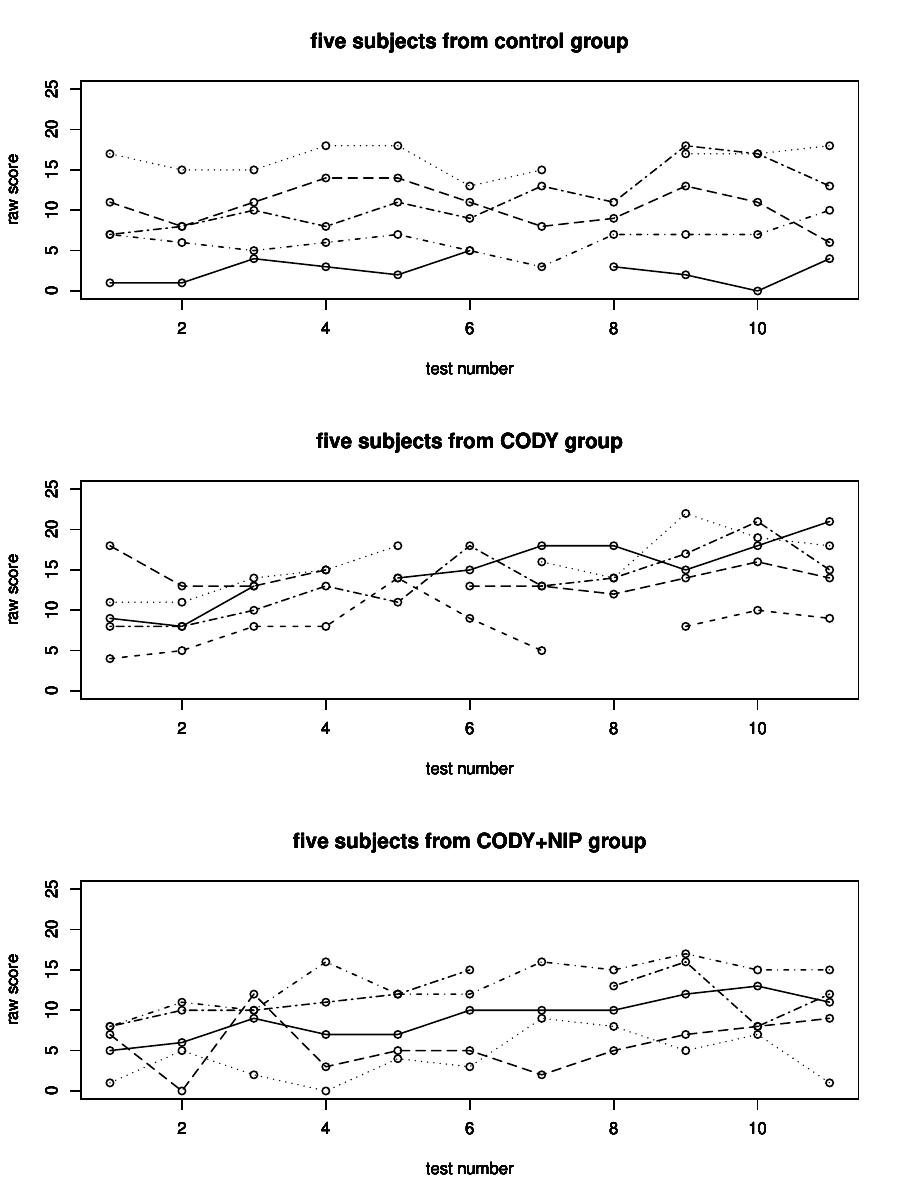}
    \caption{Illustration of the measurements of five individuals per intervention group. Observable scores are marked with a circle. Two adjacent circles are connected with a line when there is no missingness. In each group, different children are illustrated with different line types.}
    \label{fig:data}
\end{figure}

Thus, to investigate whether the interventions have a significant influence on mathematical performance, we test for each of the three study groups the relevance of the factor `time'.
To this end, we test the hypotheses 
$$H_{0,\textnormal{equal}}: p_1 = \dots = p_{11} \qquad \text{vs.} \qquad  H_{1,\textnormal{equal}}:  p_i \neq p_j \ \ \text{for some} \ \ i \neq j$$ in the context of a one-way factorial design.

\textcolor{black}{To this end, we applied the quasi-randomization test considered in the simulation study, using $2{,}000$ resampling repetitions. The resulting $p$-values are $0.004$ (control), $0.012$ (CODY), and $0.0045$ (CODY+NIP).
Thus, the tests for equality of all Mann-Whitney effects are significant at level $\alpha=5\%$ for all three intervention groups.}

In addition, to better understand the behavior over time, Figure~\ref{fig:groupwise_confidence_intervals} contains an illustration of all point estimates and two-sided Wald-type level-$95\%$-confidence intervals from Remark~\ref{rem:CIrandom} for the relative effects at each test number (`component'). These confidence intervals are intended only to illustrate the development of the relative effects over time. They are not adjusted for simultaneous inference and, therefore, should be interpreted in a point-wise manner rather than time-simultaneously.
It is apparent that, in the fifth test, the point estimate in the CODY group is the first to cross the 0.5-threshold; nearly all subsequent point estimates in this group remain beyond that line, indicating a robust learning success. 
The CODY+NIP group exhibited the steepest increase between Test~4 and Test~9. The point estimates of the control group often lie between those of the other two groups. 
However, it is important to note that inter-group comparisons are not necessarily meaningful because the Mann-Whitney effects of different groups might be similar even when the original data are on completely different levels, e.g., when all values of a group are simply shifted by the same value. Nevertheless, the \emph{increase} of Mann-Whitney effects provide important insights about the timeliness of intervention effects.
In this sense, the CODY group seems to have the greatest increase, namely between Tests~2 and~3, which is quite early.
It is also interesting to see that all derived confidence intervals for Tests~10 and~11 are beyond the 0.5-line, even for the control group.
This shows that all groups experienced a learning effect in the long run.
\begin{figure}[htbp]
\centering
\includegraphics[width=0.98\textwidth]{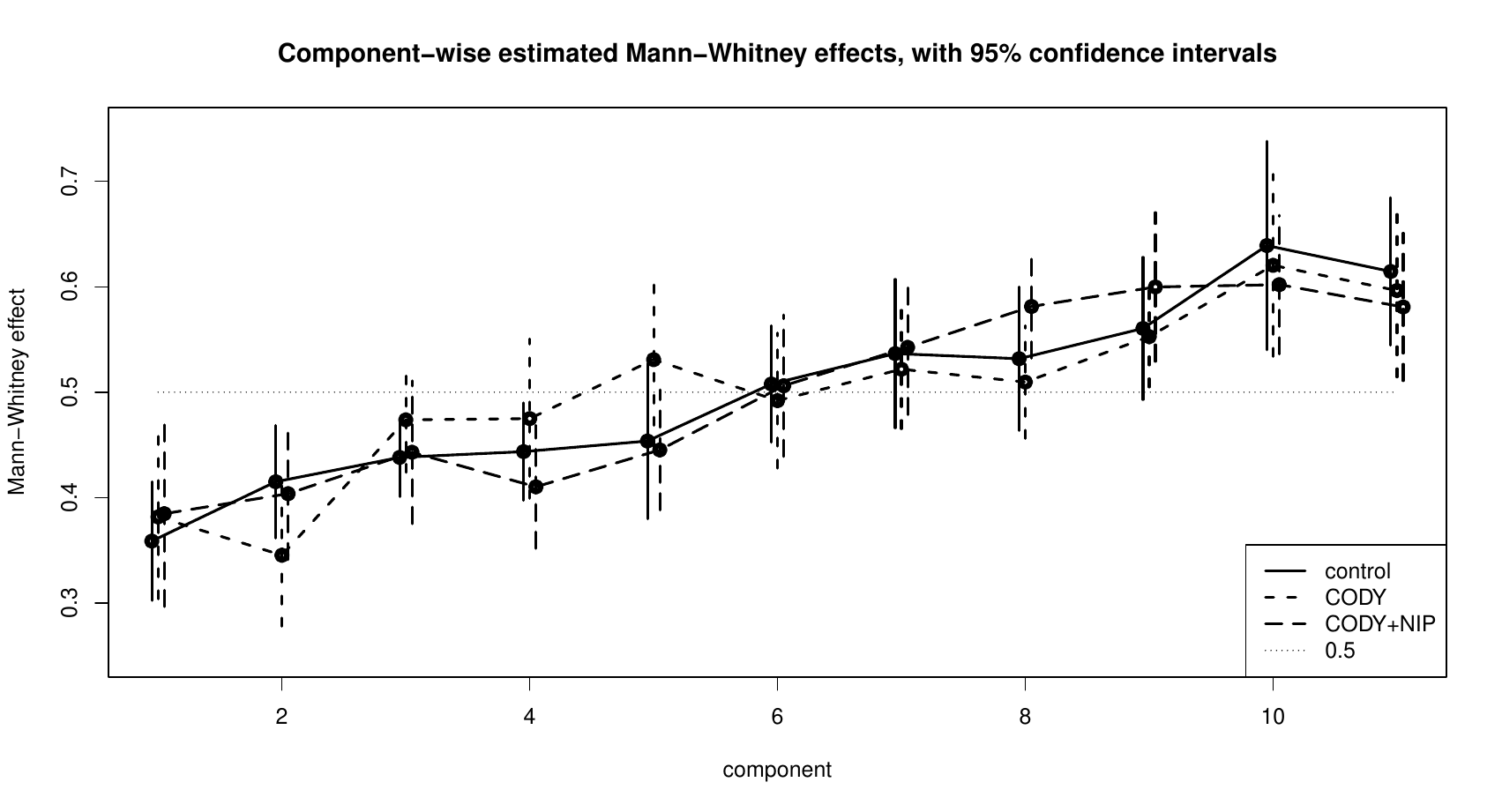}
\caption{Component-wise estimated Mann-Whitney effects with two-sided point-wise 95\% confidence intervals at the $11$ time points for the three study groups. The confidence intervals are not adjusted for simultaneous inference. For ease of presentation, the three groups are slightly shifted on the horizonal axis.}
\label{fig:groupwise_confidence_intervals}
\end{figure}

\section{Conclusion}\label{conclusion}
Although missing values can occur in many types of data for a wide variety of reasons, to date relatively few methods have been developed  for ordinal data in this context under realistic missing-data mechanisms. 
The present work requires the MCAR framework but allows for quite general patterns, e.g., different component-specific missingness proportions and correlations of the missingness indicators within each data point.
By employing a quasi-randomization technique we developed asymptotically valid tests. Importantly, the distribution and frequency of missing values naturally enter the covariance structure in this framework and are fully accounted for. An estimator of the covariance matrix is used in the Wald-type test statistic and, based on the quasi-randomized data, also in the quasi-randomized test statistic.
This empirical-type estimator of the covariance matrix is based on the asymptotic influence function, it leads to reliable results, and it could be implemented in a comparatively efficient manner. 

Extensive simulations across a wide range of settings demonstrate that the proposed method perform well even for small sample sizes and compares favorably with alternative approaches, e.g., the bootstrap, highlighting its robustness and practical applicability.

A popular alternative approach for dealing with missing data, such as under the MCAR or the MAR regime, might involve multiply imputing missing entries. A widely-used missing data method in practice, MICE (multivariate imputation by chained equations; \citealp{vanbuuren2011}), conducts multivariate imputation on a variable-by-variable basis using conditional densities, iteratively drawing missing values starting from an initial imputation. However, analyzing the covariance structure of the resulting point estimators and deducing inference procedures would not be trivial; this can be considered in future research as a competitor method after adjusting the present methodology for the MAR assumption which is of considerable practical interest.

Future work will also focus on extending these results to a multiple-groups or multivariate setting, as examined, for example, in \cite{rubarth22a}. Such an extension would allow the analysis of split-plot designs, enabling a more comprehensive assessment of factor effects and interactions in multivariate outcomes. 
In the context of the present real data problem, multiple sample comparisons could be used to conclude which intervention led to the greatest learning effect.
The methodological extension will likely require techniques beyond pure quasi-randomization.

\textcolor{black}{Other future research includes a thorough statistical analysis of estimators and corresponding resampling schemes for $(\theta_1, \dots, \theta_d)$ from Remark~\ref{rem:fay} and comparisons with $(p_1,\dots,p_d)$.
It will be interesting to investigate multivariate variants of Hand's paradox; see \cite{fay18} for details in the two-dimensional case.}
Lastly, we also plan to develop multivariate one-sided tests as well as regression techniques based on Mann-Whitney effects.

\section*{Acknowledgements}
The authors would like to thank Markus Pauly for discussions regarding the choice of this topic, Edgar Brunner and Stefanie Schoenen for helpful discussions, and Nils Hichert for support with the implementation of the methods in the statistical software \textit{R}. Dennis Dobler wishes to thank his former affiliations, TU Dortmund University and Research Center Trustworthy Data Science and Security where the work was initiated.
Paavo Sattler would also like to thank his other affiliation, the Department of Statistics at TU Dortmund University.

\section*{Declaration of generative AI use}
We have used generative AI for some internal mathematical discussions, some few coding tasks, for support in commenting the R code, and creating the tables with simulation results.

\section*{Data availability}
The R code for running the developed methods on the CODY data and also on a toy data set can be found in an online repository: \\
\texttt{https://github.com/dennis-dobler/Mann\_Whitney\_effects\_randomization\_test}

\newpage

\appendix

\section*{Appendix}

Next to all proofs, Appendix~\ref{app:proofs} also contains derivations of the asymptotic ranks of the relevant covariance matrices after the (reduced) quasi-randomization schemes. 
Explicit formulas for the asymptotic covariances and consistent estimators for these are provided in Appendix~\ref{app:covariance}.
The concluding Appendix~\ref{app:rankexample} offers an example of a sampling distribution with reduced rank covariance matrix where the corresponding asymptotic covariance matrix of $\hat{\vp}$ has the almost full rank $d-1$. Appendix~\ref{app:add_sim} contains the detailed simulation results a summary of which has been presented in Section~\ref{sec:sim_res} in the main body of the paper.
Finally, Appendix~\ref{app:MAR} contains additional simulation results under the weaker MAR assumption.

\setcounter{section}{0}

\section{Proofs}
\label{app:proofs}

\subsection{Proofs related to the estimators' asymptotics}

\phantom{X}

\medskip

\noindent \textbf{Proof of Proposition~\ref{prop:bias}.}
Let us consider the case of no observations in a component $i\in \{1,\dots,d\}$ first.
Technically, this can be realized by extending $\mathbb{R}$ to $\overline{\mathbb{R}} := \mathbb{R}\cup \{-\infty, \infty\}$ and by using the symmetric two-point probability measure with mass in $-\infty$ and $\infty$ instead: $\hat F_i(x) := \tfrac12 (\delta_{-\infty} + \delta_{\infty})([-\infty, x]) \equiv \tfrac12$ for $x \in \mathbb R$. 
The central advantage of this definition is that no arbitrary choice for the value of $\hat p_i$ needs to made since $\hat p_i= \tfrac12$ follows; also, note that $\int \hat F_i d \hat F_i = \tfrac12$ holds in that case.

Denote by $\mathcal{A}_{\lambda}$ the $\sigma$-algebra generated by all $\lambda_{ik}, i=1, \dots, d, \ k=1, \dots, n$. Then, defining $0/0:= 0$,
    \begin{align*}
        E(\hat F_i(x) \ | \ \mathcal{A}_{\lambda}) = \ind\{\lambda_{i \cdot} > 0 \} \frac{1}{\lambda_{i.}}\sum_{k=1}^{n}\lambda_{ik} E( c(x-X_{ik}) ) + \ind\{\lambda_{i \cdot} = 0 \} \Big(\frac12 \ind_{\mathbb R}(x)  + \ind\{x = \infty\} \Big).
    \end{align*}
    which follows from the MCAR Assumption~\ref{ass:MCAR}.
    Thus, 
    $$E(\hat F_i(x)) = P( \lambda_{i\cdot} > 0 ) \cdot F_i(x) + P(\lambda_{i \cdot} = 0 \} \big(\tfrac12 \ind_{\mathbb R}(x)  + \ind\{x = \infty\} \big).$$ A similar result can be shown for $\hat H$ which is a convex combination of $\hat F_1, \dots, \hat F_d$.

    Using similar arguments and the fact that $\int \hat F_i d \hat F_i = \tfrac12$,
    it is straightforward to see that
\begin{align*}
    E(\hat p_i\ | \ \mathcal{A}_{\lambda}) & = E\Big( \int \hat H d \hat F_i \ | \ \mathcal{A}_{\lambda}\Big) = \frac1d  E\Big(\int \hat F_i d \hat F_i \ | \ \mathcal{A}_{\lambda}\Big)
    + \frac1d  E\Big( \sum_{s\neq i} \int \hat F_s d \hat F_i \ | \ \mathcal{A}_{\lambda} \Big) \\
    & = \frac1{2d} + \frac1d\sum_{s\neq i} \Big( \frac12 \ind\{\lambda_{s\cdot} = 0 \text{ or } \lambda_{i\cdot} = 0 \}  + \ind\{\lambda_{s\cdot} > 0, \lambda_{i\cdot} > 0 \} \int F_s d F_i \Big) + \Lan(n^{-1})
\end{align*}
and thus, upon taking the missingness indicators out of the conditional expectation and then deriving the unconditional expectation,
    \begin{align*}
        E(\hat p_i ) & = \frac1{2d} + \frac1d \sum_{s \neq i} P\Big(\sum_{k=1}^n \lambda_{sk} > 0 , \sum_{k=1}^n \lambda_{ik} > 0 \Big) \int F_s d F_i  +  \Lan(n^{-1}) \\
        & \to  \frac1d \Big( \frac12 + \sum_{s \neq i} \Big( P(X_{i1} > X_{s2}) + \frac12 P(X_{i1} = X_{s2}) \Big)\Big)  = E(H(X_{i1})) = p_i
    \end{align*}
    Here, $\Lan(n^{-1}) = E ( \lambda_{i \cdot}^{-1} \lambda_{s \cdot}^{-1}\sum_{k=1}^n \lambda_{ik} \lambda_{sk} ) (P(X_{i1} \succeq X_{s1})  - P(X_{i1} \succeq X_{s2}))$.
\hfill\qed

\bigskip

Our analysis of the asymptotic normality of the estimator $\hat {\vp}$ begins with a its representation as a Hadamard-differentiable functional $\psi: \ell^\infty(\mathcal{F}) \to \mathbb{R}^d$ of the empirical measure, where  $\mathcal{F}$ is a suitable indexing set.
The estimand $\vp$ may be represented with the help of the same functional.
Thus, 
$\hat {\boldsymbol{p}}= \psi(\mathbb{P}_n)$ and $ {\boldsymbol{p}}= \psi(\mathbb{P})$ where
$\mathbb{P}_n = \frac1n \sum_{k=1}^n \delta_{(\vX_k, \vlam_k)}$ denotes the empirical process of all $2d$-dimensional data points $(\boldsymbol{X}_k, \boldsymbol{\lambda}_k), \ k=1,\dots, n$, and $\mathbb{P} = P^{(\boldsymbol{X}_k, \boldsymbol{\lambda}_k)}$ is the distribution of each random vector representing a data point.

In order to prepare the proofs with the help of empirical process theory, we introduce some more notation:
Let $D^\uparrow(\mathbb R)$ be the space of normalized cumulative distribution functions on $\mathbb R$.
We will see that, without loss of generality, it is sufficient to consider this space in our arguments involving functionals.
Otherwise, one could represent all considered Mann-Whitney effect functionals as a sum of two terms -- one of which involves the right-continuous distribution functions and the other one involves the left-continuous variants in the integrand; each space (of left- or right-continuous functions) is equipped with the supremum norm.
Hence, such a treatment would require an extension of the spaces but poses no technical difficulties.

 Let $\mathcal{F} := \{ f_{ix} = (c_x \circ pr_{i1}^d) \cdot pr_{i2}^d , \ pr_{i2} : \ x \in \R, \ i = 1,\dots, d \}$
    where 
    \begin{align*}
    c_x: & \quad  \R \to \{0,\tfrac12,1\},  && z \mapsto c(x-z) , \qquad \text{and} \\
        pr_{i1} : &\quad  \mathbb R^d \times \{0,1\}^d \to \mathbb R, && (x_1,\dots, x_d, \ell_1, \dots, \ell_d) \mapsto x_i, \\
        pr_{i2} : &\quad  \mathbb R^d \times \{0,1\}^d \to \{0,1\}, &&  (x_1,\dots, x_d, \ell_1, \dots, \ell_d) \mapsto \ell_i,
    \end{align*}
    are the canonical projections.
    In the following, product spaces will always be equipped with the max-sup norm.
    We are going to use the functional notation $\P f := \int f d \P$.
\begin{lem}
\label{lemma:psi}
   \begin{itemize}
       \item[\textnormal{(a)}] $\mathcal{F}$ is a Donsker class.
       \item[\textnormal{(b)}]  The functional
       $\psi :  \ell^\infty(\mathcal{F})  \to \big(D^\uparrow(\mathbb R)\big)^d \times \R^d  \to  \big(D^\uparrow(\mathbb R)\big)^{d+1}  \to  \mathbb{R}^d$, 
     \\[0.1cm]
     $\mathbb P \mapsto  (\check F_1, \dots, \check F_d, \kappa_1, \dots, \kappa_d)  \mapsto \Big(\frac{\check F_1}{\kappa_1}, \dots, \frac{\check{F}_d}{\kappa_d}, \frac1d \sum_{i=1}^d \frac{\check F_i}{\kappa_i }\Big)  \mapsto  ( \int \frac1d \sum_{i=1}^d \frac{\check F_i}{\kappa_i} \frac{d \check F_s}{\kappa_s} : s= 1,\dots, d )$
     \\[0.1cm]
    is Hadamard-differentiable at every $\P$ for which 
    \begin{itemize}
        \item[(i)] (identifiability) $\kappa_i := \P pr_{i2} = P(\lambda_{i1} = 1) > 0$, $i=1,\dots, d$;
        \item[(ii)] (MCAR) $ \check F_i(x) := \P (c_x \circ pr_{i1}) \cdot  pr_{i2} = P(X_{i1} < x, \lambda_{i1} = 1) +  \tfrac12 P(X_{i1} = x, \lambda_{i1} = 1) = \kappa_i F_i(x)$, $i=1,\dots, d$, \ $x\in \R$;
    \end{itemize}
    tangentially to $D_0(\R)$, the space of normalized functions $h$ on $\R$ such that $\lim_{|x| \to \infty} h(x) = 0$.
    The Hadamard-derivative is defined as
    \begin{align*}
        & \psi'_{(\check F_1, \dots, \check F_d, \kappa_1, \dots, \kappa_d)} \cdot (h_1, \dots, h_d, \ell_1, \dots, \ell_d) \\
        & =  \Big( \frac1d \sum_{s=1}^d \Big[ \int \Big(\frac{h_s}{\kappa_s} - \frac{\ell_s \check F_s}{\kappa_s^2}\Big) d \frac{\check F_i}{\kappa_i} - \int \Big(\frac{h_i}{\kappa_i} - \frac{\ell_i \check F_i}{\kappa_i^2} \Big) d \frac{\check F_s}{\kappa_s} \Big] : i = 1, \dots, d \Big) \\
        & =  \Big( \frac1d \sum_{s=1}^d \Big[\kappa_s^{-1} \int \Big(h_s - \ell_s F_s\Big) d  F_i - \kappa_i^{-1}\int \Big(h_i - \ell_i F_i \Big) d F_s \Big] : i = 1, \dots, d \Big) ;
    \end{align*}
    the last equality is owing to $\check F_i / \kappa_i = F_i, i=1,\dots,d$, under the MCAR assumption.
   \end{itemize}
\end{lem}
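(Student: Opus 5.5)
For (a), I will view $\mathcal F$ as the finite union over $i=1,\dots,d$ of the classes $\mathcal F_i=\{(c_x\circ pr_{i1})\cdot pr_{i2}: x\in\R\}$ together with the single functions $pr_{i2}$. A finite union of Donsker classes is Donsker, and single bounded functions are trivially Donsker. So it suffices to show each $\mathcal F_i$ is Donsker.

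Write $c(x-z)=\tfrac12\big(\ind\{z<x\}+\ind\{z\le x\}\big)$. Then $\mathcal F_i$ is contained in the pointwise sums of $\tfrac12\{\ind\{pr_{i1}<x\}\,pr_{i2}\}$ and $\tfrac12\{\ind\{pr_{i1}\le x\}\,pr_{i2}\}$. Each of these is a class of half-line indicators, which is VC, multiplied by the fixed bounded function $pr_{i2}\in\{0,1\}$. Multiplication by a fixed bounded function preserves the uniform entropy bound, and so does summation (\cite{vdVW23}, Ch.~2.10). Alternatively, one can check bracketing directly: for fixed $i$ the map $x\mapsto f_{ix}$ is monotone with envelope $1$, so $L_2(\P)$-brackets are obtained from quantiles of $x\mapsto \P f_{ix}$, giving $N_{[\,]}(\varepsilon)=O(\varepsilon^{-2})$. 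Either route yields (a).

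For (b), I will write $\psi$ as the composition of three maps and apply the chain rule for Hadamard differentiability.

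\emph{First map.} $\P\mapsto(\check F_1,\dots,\check F_d,\kappa_1,\dots,\kappa_d)$ is a coordinate evaluation, hence continuous and linear. It is therefore Hadamard-differentiable with itself as derivative, sending a direction $G$ to $(h_i,\ell_i)=(G f_{i\cdot},\,G\,pr_{i2})$.

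\emph{Second map.} $(\check F,\kappa)\mapsto(\check F_i/\kappa_i,\ \tfrac1d\sum_i\check F_i/\kappa_i)$ is Fréchet-differentiable in sup norm on $\{\kappa_i>0\}$, using condition (i). Its derivative is $h_i/\kappa_i-\ell_i\check F_i/\kappa_i^2$, together with the average of these terms.

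\emph{Third map.} This is $(A_1,\dots,A_d,B)\mapsto(\int B\,dA_s)_s$, which is the main obstacle. Integration $(A,B)\mapsto\int B\,dA$ is Hadamard-differentiable at $(A,B)$ with $B$ of bounded variation, tangentially to directions whose $A$-component is a uniform limit of bounded-variation functions (the Wilcoxon-type functional, \cite{vdVW23}, Lemma~3.10.x / the Mann–Whitney delta method). The derivative is $(\alpha,\beta)\mapsto\int\beta\,dA+\int B\,d\alpha$.

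For the term $\int B\,d\alpha$ with $\alpha$ possibly not of bounded variation, I will use integration by parts, $\int B\,d\alpha=-\int\alpha\,dB$ plus boundary terms. This is where the tangential restriction to $D_0(\R)$ (vanishing at $\pm\infty$) enters: it kills the boundary terms. The normalized (average of left and right) versions are what make the integration-by-parts formula exact in the presence of ties. I plan to first treat the right- and left-continuous parts separately, as announced, then average.

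Applying this with $A=\check F_i/\kappa_i=F_i$ and $B=\tfrac1d\sum_s F_s$ gives the first displayed form of the derivative. The second form then follows by substituting $\check F_i=\kappa_iF_i$, which is condition (ii) and holds under MCAR. Finally, I need to check that the tangent directions produced by the first two maps lie in $D_0(\R)^{d}\times\R^d$. This holds because any Brownian-bridge limit $G f_{i x}$ tends to $0$ at $-\infty$ and, at $+\infty$, to $G\,pr_{i2}$; the $\ell_i\check F_i$ correction absorbs this. I would verify that this compensation makes $h_i-\ell_iF_i\in D_0(\R)$, which is exactly what the integration by parts requires.
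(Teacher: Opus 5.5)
Your proposal is correct and follows the same route as the paper. For (a), it uses preservation of VC/uniform-entropy properties under multiplication by $pr_{i2}$, sums and finite unions; for (b), it applies the chain rule through the linear evaluation map, the fraction functional and the Wilcoxon functional with integration by parts, for which the paper simply cites Lemma~3.10.18 of \cite{vdVW23}. Your extra check that it is $h_i-\ell_iF_i$, rather than $h_i$ itself, that vanishes at $\pm\infty$ is in fact the correct reading of the paper's loose phrase ``tangentially to $D_0(\R)$'': the empirical-process directions satisfy $h_i(+\infty)=\ell_i$, and the boundary terms of the integration by parts drop out only for the combination $h_i-\ell_iF_i$.
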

\noindent \textbf{Proof of Lemma~\ref{lemma:psi}.}
(a) It is clear that $\mathcal{F} $ is a Vapnik-\u{C}ervonenkis (VC) class and hence a Donsker class because the used operations (concatenation, product, and union) preserve these properties; cf.\ Chapter~2 of \cite{vdVW23}. 

(b)
The proof follows from the differentiability result for the fraction functional and the Wilcoxon functional combined with integration by parts for the derivative formula; see Lemma~3.10.18 in \cite{vdVW23} and the Supplementary Material of \cite{dobler2020b} for the Hadamard-differentiability of a similar functional.
\hfill \qed

\bigskip

In the remainder, we will use the shorter notation $\psi'_{\P} := \psi'_{(\check F_1, \dots, \check F_d, \kappa_1, \dots, \kappa_d)}$
and also perceive this to be a functional of the empirical process rather than a functional defined on $\big(D^\uparrow(\mathbb R)\big)^d \times \R^d$; the presentation in the previous lemma was chosen for reasons of legibility.

\bigskip
\noindent
\textbf{Proof of Theorem~\ref{thm:clt}.}
Based on the results from Lemma~\ref{lemma:psi} and the functional delta-method, e.g., Theorem~3.10.4 in \cite{vdVW23}, we use the following asymptotic linearization of the estimators, 
\begin{align*}
    \displaystyle
 \sqrt{n}(\hat {\boldsymbol{p}} - \boldsymbol{p}) = \psi'_{\mathbb{P}} (\sqrt{n}(\mathbb{P}_n - \mathbb{P})) + \lan_p(1)
= \frac1{\sqrt{n}} \sum\limits_{k=1}^n ( \dot\psi_{\mathbb{P}} (\boldsymbol{X}_k, \boldsymbol{\lambda}_k) - E(\dot\psi_{\mathbb{P}} (\boldsymbol{X}_k, \boldsymbol{\lambda}_k))) + \lan_p(1).
\end{align*}
Here, $\dot\psi_{\mathbb{P}}(\vX_k, \vlam_k) :=  \psi'_{\P}(\delta_{(\vX_k, \vlam_k)}) $ denotes the influence function corresponding to $\psi$, 
where $\delta_{(\vX_k, \vlam_k)}$ is the Dirac measure in $(\vX_k, \vlam_k)$.
We conclude the claimed central limit theorem and also the following asymptotically linear structure for $\sqrt{n}(\hat {\boldsymbol{p}} - \boldsymbol{p})$:
\begin{align}
\begin{split}
\label{eq:asy_lin}
    & \frac1{\sqrt{n}d} \sum_{k=1}^n \sum_{s=1}^d   \Big( \Big[  \int \Big\{ \frac{(\ind\{X_{sk} < x\} + \tfrac12 \ind\{X_{sk} = x\}) \lambda_{sk}}{\kappa_s} - \frac{\check F_s(x) \lambda_{sk}}{\kappa_s^2} \Big\} d  F_i(x) \\
    & \ \ - \int\Big\{ \frac{(\ind\{X_{ik} < x\} + \tfrac12 \ind\{X_{ik} = x\}) \lambda_{ik}}{\kappa_i} - \frac{\check F_i(x) \lambda_{ik}}{\kappa_i^2} \Big\} d   F_s(x) \Big] : i = 1 ,\dots, d \Big) + \lan_p(1) \\
    & =  \frac1{\sqrt{n}d} \sum_{k=1}^n \sum_{s=1}^d   \Big( \Big[  \frac{\lambda_{sk}}{\kappa_s} \int \Big\{ \ind\{X_{sk} < x\} + \tfrac12 \ind\{X_{sk} = x\} -  F_s(x) \Big\} d  F_i(x) \\
    & \ \  - \frac{\lambda_{ik}}{\kappa_i}\int\Big\{ \ind\{X_{ik} < x\} + \tfrac12 \ind\{X_{ik} = x\}  - F_i(x)  \Big\} d   F_s(x) \Big] : i = 1 ,\dots, d \Big) + \lan_p(1)
    \\
     & =  \frac1{\sqrt{n}d} \sum_{s=1}^d   \Big( \Big[  \frac{\lambda_{s\cdot}}{\kappa_s} \int \Big\{ \hat F_s -  F_s \Big\} d  F_i - \frac{\lambda_{i\cdot}}{\kappa_i}\int\Big\{ \hat F_i - F_i  \Big\} d   F_s \Big] : i = 1 ,\dots, d \Big) + \lan_p(1)
\end{split}
\end{align}
where the second to last equality follows from the MCAR assumption; cf.\ Assumption~\ref{ass:MCAR}.
\hfill \qedsymbol

\bigskip

Note that $E(\dot\psi_{\mathbb{P}} (\boldsymbol{X}_k, \boldsymbol{\lambda}_k))) = \boldsymbol{0}_d$ which will be of great importance below.

\color{black}

\bigskip

\noindent
\textbf{Proof of Theorem~\ref{thm:T}.} By the Donsker theorem for $\mathbb{P}_n$ in combination with the functional $\delta$-method, 
it follows under the null hypothesis that
$\sqrt{n}(\vC\hat{\boldsymbol{p}} - \vc)= \vC\cdot \sqrt{n}(\hat{\boldsymbol{p}} - \boldsymbol{p}) \stackrel d\to \boldsymbol{C Z}$, 
where  $\boldsymbol Z \sim \mc N_d(\boldsymbol{0}_d, \boldsymbol{V})$.

As mentioned in the main body, we may derive from Assumptions~\ref{ass:test} and~\ref{ass:V} that $P(\mathrm{rk}(\hat{\boldsymbol{V}}) = d-1) \to 1$ as $n\to \infty$.
Similarly, we establish that
$$P(\mathrm{rk}(\boldsymbol{C} \hat{\boldsymbol{V}}) = \mathrm{rk}(\boldsymbol{C} {\boldsymbol{V}})) = P(\mathrm{rk}(\boldsymbol{C} \hat{\boldsymbol{V}} \boldsymbol{C}^\top) = \mathrm{rk}(\boldsymbol{C} {\boldsymbol{V}} \boldsymbol{C}^\top)) \to 1$$ due to $\boldsymbol{C} \hat{\boldsymbol{V}} \boldsymbol{C}^\top \stackrel p \to \boldsymbol{C} {\boldsymbol{V}} \boldsymbol{C}^\top.$
Using this together with Corollary 1.8 from \cite{koliha2001mpinverse} it follows that
$(\vC\hat{\boldsymbol V}\vC^\top)^{+} \stackrel p\to (\vC{\boldsymbol V}\vC^\top)^{+}$ and hence, $$[(\vC\hat{\boldsymbol V}\vC^\top)^{+}]^{1/2} \sqrt{n}(\vC\hat{\boldsymbol{p}} - \vc) \stackrel d\to \mc N_{d}(\boldsymbol{0}, [(\vC{\boldsymbol V}\vC^\top)^{+}]^{1/2} (\vC{\boldsymbol V}\vC^\top)[(\vC{\boldsymbol V}\vC^\top)^{+}]^{1/2}).$$
Since this limiting variance-covariance matrix is idempotent, we can use Corollary 5.1.2a in \cite{Mathai1992}. Consequently, due to 
$$\mathrm{rk}([(\vC{\boldsymbol V}\vC^\top)^{+}]^{1/2} (\vC{\boldsymbol V}\vC^\top)[(\vC{\boldsymbol V}\vC^\top)^{+}]^{1/2})= \mathrm{rk}(\vC \boldsymbol{V} \vC^\top)=\mathrm{rk}(\boldsymbol{CV}), $$
the asymptotic distribution of $T_n$ is $\chi^2_{\mathrm{rk}(\boldsymbol{CV})}$. 
Under the alternative hypothesis,
\begin{align*}
    T_n & = \sqrt{n} \Big(\sqrt{n} (\hat{\boldsymbol{p}} - \boldsymbol{ p})^\top \vC^\top (\vC\hat{\boldsymbol V}\vC)^{+} \vC(\hat{\boldsymbol{p}}- \boldsymbol{p}) \\
    & \quad 
    + 2 \sqrt{n} (\hat{\boldsymbol{p}} - \boldsymbol{ p})^\top \vC^\top (\vC\hat{\boldsymbol V}\vC^\top)^{+} (\vC \boldsymbol{p} - \vc)
    \\
    & \quad + \sqrt{n}  (\vC \boldsymbol{p} - \vc)^\top (\vC\hat{\boldsymbol V}\vC^\top)^{+} (\vC  \boldsymbol{p} - \vc) \Big)
\end{align*}
  which is equal to $\sqrt{n} ( \lan_p(1) + \Lan_p(1) + \Lan_p(\sqrt{n})) \stackrel p \to \infty $ as $n \to \infty$.
\hfill$\qed$

\subsection{Proofs of the quasi-randomization-based statements}

\begin{lem}
\label{lem:rand_EP}
    Conditionally on $\mathcal{A}_{X,\lambda}$,
the following weak convergence holds in probability as $n \to\infty$:
$$ \sqrt{n} (\tilde{\mathbb P}_n - \mathbb{P}_n^{\mathcal{S}_d}) \rightsquigarrow \tilde {\mathbb{G}} \quad \text{in } \ell^\infty(\mathcal{F})  $$
where $\mathbb{P}_n^{\mathcal{S}_d}$ is the conditional expectation of  $\tilde{\mathbb P}_n$ given $\mathcal{A}_{X,\lambda}$, and $\tilde{\mathbb{G}}$ is a  zero-mean Gaussian process.
\end{lem}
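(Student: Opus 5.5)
Conditionally on $\mathcal{A}_{X,\lambda}$, the quasi-randomized data points $(\vX_k^\pi,\vlam_k^\pi)=\Pi_k(\vX_k,\vlam_k)$, $k=1,\dots,n$, are independent but not identically distributed. Given the data, the $k$-th one is uniformly distributed on the orbit $\{g(\vX_k,\vlam_k): g\in\mathcal S_d\}$. Hence
$$\tilde{\mathbb P}_n=\frac1n\sum_{k=1}^n\delta_{\Pi_k(\vX_k,\vlam_k)}, \qquad \mathbb P_n^{\mathcal S_d}=\frac1n\sum_{k=1}^n\frac1{d!}\sum_{g\in\mathcal S_d}\delta_{g(\vX_k,\vlam_k)}.$$
The plan is to apply a conditional functional CLT for triangular arrays of independent, non-identically distributed processes. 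Concretely, I would use the Lindeberg-type Jain--Marcus / Pollard CLT, Theorem~2.11.1 (or 2.11.9) in \cite{vdVW23}, applied conditionally and along subsequences. Alternatively, one can invoke the general quasi-randomization result of \cite{dobler23}, whose setting this within-individual permutation satisfies. To stay self-contained, I would verify that theorem's conditions directly.

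\textbf{Step 1: finite-dimensional convergence.} Fix $f_1,\dots,f_m\in\mathcal F$. The conditional covariance of $\sqrt n(\tilde{\mathbb P}_n-\mathbb P_n^{\mathcal S_d})(f_j)$ and the corresponding term for $f_l$ equals
$$\frac1n\sum_k\Big(\frac1{d!}\sum_g f_jf_l(g(\vX_k,\vlam_k))-\frac1{d!}\sum_g f_j(g\cdot)\cdot\frac1{d!}\sum_g f_l(g\cdot)\Big).$$
This is an empirical average of a fixed bounded function of $(\vX_k,\vlam_k)$. By the law of large numbers it converges a.s. to a limit $\tilde\Sigma(f_j,f_l)$, computed under $\P$ symmetrized over $\mathcal S_d$. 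All elements of $\mathcal F$ are bounded by $1$, so the Lindeberg condition is trivial. The multivariate Lindeberg CLT therefore gives conditional Gaussian fdd limits for almost every data sequence.

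\textbf{Step 2: tightness.} Each $f\in\mathcal F$ is a product of a threshold indicator in one coordinate and a missingness indicator. Every composition $f\circ g$ with $g\in\mathcal S_d$ lies again in $\mathcal F$, because $\mathcal F$ ranges over all coordinates $i$. The envelope is the constant $1$. Since $\mathcal F$ is VC (Lemma~\ref{lemma:psi}(a)), the uniform entropy condition of Theorem~2.11.1 holds for the random-design summands $Z_{nk}(f)=n^{-1/2}f(\Pi_k(\vX_k,\vlam_k))$. It remains to check the equicontinuity condition. For every $\delta_n\downarrow0$ we need
$$\sup_{\rho(f,h)<\delta_n}\sum_k E\big[(Z_{nk}(f)-Z_{nk}(h))^2\mid\mathcal A_{X,\lambda}\big]\to0.$$
This sum equals $\mathbb P_n^{\mathcal S_d}(f-h)^2$. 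Take $\rho$ to be the $L_2$ semimetric of the symmetrized measure $\P^{\mathcal S_d}$. Then $\sup_{\mathcal F}|\mathbb P_n^{\mathcal S_d}-\P^{\mathcal S_d}|(f-h)^2\to0$ a.s. follows from the Glivenko--Cantelli property of the class $\{(f-h)^2\circ g\}$, which is VC and bounded. Convergence of the covariance function uniformly in pairs follows by the same Glivenko--Cantelli argument.

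\textbf{Step 3: conclusion.} Steps 1 and 2 give $\sqrt n(\tilde{\mathbb P}_n-\mathbb P_n^{\mathcal S_d})\rightsquigarrow\tilde{\mathbb G}$ conditionally, for almost every data sequence. This implies the claimed weak convergence in (outer) probability, as measured by the bounded Lipschitz metric. Here $\tilde{\mathbb G}$ is the tight zero-mean Gaussian process with covariance $\tilde\Sigma$.

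I expect the main obstacle to be measurability and the manageability/entropy condition of Theorem~2.11.1 for the random, non-identically distributed summands. The remedy I would try first is to note that $\mathcal F$ can be restricted to rational $x$ (pointwise measurability, since $c_x$ is monotone in $x$ up to the half-jump). The uniform VC entropy bound then applies to every finitely supported measure, including the symmetrized empirical one.
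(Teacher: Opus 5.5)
Your argument is correct, but it takes a different route from the paper.

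\textbf{The paper's route.} The proof is a one-line appeal to Theorem~1 of \cite{dobler23}. It only observes that $\mathcal F$ is a VC class, and hence meets the Donsker-type requirements of that general conditional CLT for randomization empirical processes. Nothing else is verified.

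\textbf{Your route.} You prove the conditional CLT directly, for fixed data, from the Lindeberg/uniform-entropy CLT for triangular arrays (Theorem~2.11.1 of \cite{vdVW23}). The Lindeberg condition is void because the summands are bounded by $n^{-1/2}$. The random-entropy condition follows deterministically from the uniform VC entropy bound applied to $L_2(\tilde{\mathbb P}_n)$. Using the closure property $\mathcal F\circ g=\mathcal F$ for $g\in\mathcal S_d$, you reduce both the equicontinuity condition and the pointwise convergence of the covariance functions to Glivenko--Cantelli statements. Here the semimetric is the $L_2$-semimetric of the symmetrized law. Those statements concern bounded classes built from $\mathcal F$ and its orbit averages, evaluated at the i.i.d.\ original data.

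\textbf{What each approach buys.} Yours is self-contained. It makes explicit the one structural property that matters, namely invariance of $\mathcal F$ under the group action, which the paper's citation leaves implicit. It also yields conditional convergence for almost every data sequence, which is stronger than the ``in outer probability'' statement of the lemma. The paper's route buys brevity by outsourcing all of this.

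\textbf{Small remarks.}
\begin{itemize}
\item The measurability obstacle you anticipate is essentially moot. Conditionally on $\mathcal A_{X,\lambda}$ the randomization lives on the finite set $\mathcal S_d^n$, so every map on the randomization side is measurable.
\item Your proposed remedy via rational $x$ would not work as stated. For irrational $x$ and rationals $x_m\to x$, $c_{x_m}(x)\in\{0,1\}$ does not converge to $c_x(x)=\tfrac12$. If a pointwise-measurable version is ever needed, split $c_x$ into its left- and right-continuous halves, as the paper itself suggests.
\item $\{(f-h)^2: f,h\in\mathcal F\}$ need not be VC. It does inherit a uniform entropy bound as a Lipschitz transformation of VC classes, and that suffices for the Glivenko--Cantelli step.
\end{itemize}
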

\noindent
\textbf{Proof of Lemma~\ref{lem:rand_EP}.}
The set of indexing functions $\mathcal{F}$ from the previous proof is a VC-class, and it satisfies all required Donsker properties as stated Theorem~1 of \cite{dobler23}.
Thus, it follows that the randomization empirical process satisfies a conditional central limit theorem as claimed.
\hfill\qed

\bigskip

The following representation in terms of functionals of the randomization empirical process $\tilde{\mathbb P}_n$ of $(\vX_{k}^\pi, \boldsymbol \lambda_{k}^\pi)$, $k=1,\dots,n$, will prove useful:
    $\hat {\boldsymbol{p}}^\pi  = \psi(\tilde{\mathbb P}_n) $. Here, 
    $\psi$ is a Hadamard-differentiable functional with Hadamard-derivative 
    
    $\psi'_{\tilde{\mathbb{P}}}$ 
    which is a 
    continuous and linear functional on the space $\ell^{\infty}(\mathcal{F})$. 
    This is the same functional that also defines $\hat{\vp}$ and $\vp$ when evaluated at $\mathbb{P}_n$ and $\mathbb{P}$, respectively.

\bigskip

\noindent
\textbf{Proof of Theorem~\ref{thm:main_rand}.}
The functional $\psi$ that leads to the Mann-Whitney effect estimates is essentially a combination of summation, division, and the Wilcoxon functional $\phi: D^{\uparrow}_1(\mathbb R) \times D^{\uparrow}_1(\mathbb R) \rightarrow \mathbb R,  (f,g) \mapsto \int f(x) d g(x)$.
 Also,  the set of functions $\mathcal{F}$ is exactly the indexing class that is required to represent $\hat{\boldsymbol{p}}$ and $\hat{\boldsymbol{p}}^\pi$ as a Hadamard-differentiable functional of $\mathbb{P}_n$ and $\tilde{\mathbb{P}}_n$, respectively.

Since the missingness proportions are assumed to be contained in the interval $(0,1)$, all involved functionals and hence also their combination satisfy the required Hadamard-differentiability assumptions of the functional delta-method in Theorem~2 of \cite{dobler23}.
This result is combined with the convergence result of Lemma~\ref{lem:rand_EP} to conclude the proof for the claimed convergence;
the limit is a functional of $\tilde{\mathbb{G}}$, a Gaussian process in $\ell^\infty(\mathcal{F})$: $ \tilde\vZ :=\psi'_{\tilde{\mathbb{P}}}(\tilde{\mathbb{G}}) \sim \mathcal{N}_d(\boldsymbol{0}_d, \tilde{\vV}) $ is an exchangeable and multivariate normally distributed random vector such that $\boldsymbol{1}^\top_d \tilde \vZ = 0$.
Thus, the fact that $ \tilde \rho = -\tilde \sigma^2/(d-1)$ follows from $\boldsymbol{1}_d^\top \tilde {\vV}\boldsymbol{1}_d = 0$.
Furthermore, to establish $\mathrm{rk}( \tilde{\vV} ) = d-1$, we use that
\[
\boldsymbol{x}^\top \tilde{\vV} \boldsymbol{x}
=
\frac{\tilde\sigma^2}{d-1}
\left(
d \|\boldsymbol{x}\|^2
-
(\boldsymbol{1}_d^\top \boldsymbol{x})^2
\right), \quad  \boldsymbol{x} \in \mathbb{R}^d.
\]
By the Cauchy--Schwarz inequality, $(\boldsymbol{1}_d^\top \boldsymbol{x})^2 \le d \|\boldsymbol{x}\|^2$, with equality if and only if $\boldsymbol{x} \in \mathrm{span}(\boldsymbol{1}_d)$. Hence, $\boldsymbol{x}^\top \tilde{\vV} \boldsymbol{x} \ge 0$, and equality holds if and only if $\boldsymbol{x} \in \mathrm{span}(\boldsymbol{1}_d)$. This shows that $\mathrm{ker}(\tilde{\vV}) = \mathrm{span}(\boldsymbol{1}_d)$, and thus $\mathrm{rk}(\tilde{\vV}) = d-1$ if $\tilde \sigma^2 > 0$.

To appreciate why $\tilde \sigma^2$ is strictly positive, we use the asymptotically linear representation of $\hat{\boldsymbol{p}}^\pi$ owing to the conditional functional $\delta$-method in Theorem~2 of \cite{dobler23}:
to this end, we only consider the first component of $\hat{\boldsymbol{p}}^\pi$ instead of the whole vector. 
Thus, using the notation $\dot \psi_{\tilde {\mathbb P},1}$ for the first component of $\dot \psi_{\tilde {\mathbb P}}$,
the asymptotic conditional variance of $\hat p_1^\pi$, multiplied by $n$, equals
\begin{align*}
    & \frac1n \sum_{k=1}^n 
    \operatorname{Var}( \dot \psi_{\tilde {\mathbb P}}( \Pi_k(\boldsymbol{X}_k, \boldsymbol{\lambda}_k )) \ | \ \boldsymbol{X}_k, \boldsymbol{\lambda}_k ) \\
    & = \frac1n \sum_{k=1}^n \big[
    E( \dot \psi^2_{\tilde {\mathbb P}}( \Pi_k(\boldsymbol{X}_k, \boldsymbol{\lambda}_k )) \ | \ \boldsymbol{X}_k, \boldsymbol{\lambda}_k ) - E( \dot \psi_{\tilde {\mathbb P}}( \Pi_k(\boldsymbol{X}_k, \boldsymbol{\lambda}_k )) \ | \ \boldsymbol{X}_k, \boldsymbol{\lambda}_k )^2\big] \\
    & = \frac1n \sum_{k=1}^n \Big[
    \frac1{|\mathcal{S}|}  \sum_{\pi \in \mathcal{S}}  \dot \psi^2_{\tilde {\mathbb P}}(\boldsymbol{X}_k^\pi, \boldsymbol{\lambda}_k^\pi )  - \Big( \frac1{|\mathcal{S}|}  \sum_{\pi \in \mathcal{S}}  \dot \psi_{\tilde {\mathbb P}}(\boldsymbol{X}_k^\pi, \boldsymbol{\lambda}_k^\pi) \Big)^2\Big] \\
    & \stackrel{\text{a.s.}}{\longrightarrow} E(\operatorname{Var}( \dot \psi_{\tilde {\mathbb P}}( \Pi_1(\boldsymbol{X}_1, \boldsymbol{\lambda}_1 )) \ | \ \boldsymbol{X}_1, \boldsymbol{\lambda}_1 )) =: \tilde \sigma^2
\end{align*}
as $n\to\infty$ by the strong law of large numbers.
Here we have used the notation $\mathcal{S}$ for the employed symmetric group.

It follows that $\tilde \sigma^2 >0$ if $\operatorname{Var}( \dot \psi_{\tilde {\mathbb P},1}( \Pi_1(\boldsymbol{X}_1, \boldsymbol{\lambda}_1 )) \ | \ \boldsymbol{X}_1, \boldsymbol{\lambda}_1 )$ is positive with a non-zero probability.
This is the case when $\dot \psi_{\tilde {\mathbb P}}$ is \textit{not} $\tilde {\mathbb P}$-a.s.\ constant.
To investigate this case in more detail, we first note that $E(\dot\psi_{\RP} (\Pi_1(\boldsymbol{X}_1, \boldsymbol{\lambda}_1)) \ | \ \vX_1, \vlam_1) = \boldsymbol{0}_d$ a.s.\ which easily follows from the structure in the first asymptotically linear representation in~\eqref{eq:asy_lin}.
Similarly, it is noticed that $E(\dot\psi_{\P} (\boldsymbol{X}_1, \boldsymbol{\lambda}_1) ) = \boldsymbol{0}_d$.
Thus, it remains to analyze the (conditional) second moments.
Note that one of the terms contributing to the following sum stems from the identity permutation:
$$\tilde \sigma^2 = E( \dot \psi^2_{\tilde {\mathbb P},1}( \Pi_1(\boldsymbol{X}_1, \boldsymbol{\lambda}_1 )) ) = \frac1{d!} \sum_{\pi \in \mathcal{S}_d} E( \dot \psi^2_{\tilde {\mathbb P},1}( \boldsymbol{X}_1^\pi, \boldsymbol{\lambda}_1^\pi ) ) \geq \frac1{d!} E( \dot \psi^2_{\tilde {\mathbb P},1}( \boldsymbol{X}_1, \boldsymbol{\lambda}_1 ) ). $$
Additionally, since $\RP$ dominates $\P$, it follows that $\sigma^2 = E( \dot \psi^2_{ {\mathbb P},1}( \boldsymbol{X}_1, \boldsymbol{\lambda}_1 ) ) > 0 $ implies  $\tilde \sigma^2 > 0$.
\hfill\qed

\bigskip

\noindent\textbf{Proof of Corollary~\ref{cor:test1}.}
The asymptotic statement is a direct consequence of Theorem~\ref{thm:T} and Corollary~\ref{cor:T_pi}, in view of general theory for resampling-based tests; cf.\ Lemma~1 in \cite{janssen03}.
That a quasi-randomization test controls the type-I error probability under quasi-randomization-invariant sampling distributions is well-known fact; also see \cite{dobler23}.
In the present case though, we should specifically analyze the cases where $T_n$ or $T_n^\pi$ are non-computable due to no observations in at least one component.
We denote these events by $N_n$ and $N_n^\pi$, respectively.
Keeping in mind that $T_n = 0$ and $T_n^\pi = \infty$ under $N_n$ and $N_n^\pi$, respectively, these choices lead to a test that is potentially more conservative than a quasi-randomization test under invariance which is always computable, hence $E(\Psi_n) \leq \alpha$.
Lastly, it should be noted that $P(N_n) \to 0$ and $P(N_n^\pi) \to 0$ exponentially fast:
$$ P(N_n) = P\Big( \bigcup_{i=1}^d \bigcap_{k=1}^n \{ \lambda_{ik} = 0 \} \Big) \leq \sum_{i=1}^d P(  \lambda_{i1} = 0 )^n $$
and, under permutation-invariance of the sampling distribution,
$$ P(N_n^\pi ) = \frac{1}{(d!)^n}\sum_{(\pi^{(1)}, \dots, \pi^{(n)}) \in \mathcal{S}_d^n} P\Big( \bigcup_{i=1}^d \bigcap_{k=1}^n \{ \lambda_{i\pi^{(k)}(k)} = 0 \} \Big) = P\Big( \bigcup_{i=1}^d \bigcap_{k=1}^n \{ \lambda_{ik} = 0 \} \Big) = P(N_n). \hfill\qed$$

\subsection{Matrix ranks in the two-way layout after quasi-randomization}

\subsubsection{Testing for main effects}
\label{ssec:rank_main}

Let us start with analyzing the rank of $\vC_A \tilde{\vV} \vC_A^\top$ where $\tilde \vV = \boldsymbol{I}_{d_A} \otimes (\tilde {\boldsymbol{\Sigma}} - \tilde{\boldsymbol{\Upsilon}}) + \vJ_{d_A} \otimes\tilde{\boldsymbol{\Upsilon}} $. 
By the usual rules for matrix products of Kronecker products,
\begin{align*}
 \vC_A \tilde{\vV} \vC_A^\top & = (\vP_{d_A} \otimes \boldsymbol{1}_{d_B}^\top ) (\boldsymbol{I}_{d_A} \otimes (\tilde {\boldsymbol{\Sigma}} - \tilde{\boldsymbol{\Upsilon}}) + \vJ_{d_A} \otimes\tilde{\boldsymbol{\Upsilon}} ) (\vP_{d_A} \otimes \boldsymbol{1}_{d_B} ) \\
 & = (\vP_{d_A} \vI_{d_A} \vP_{d_A}) \otimes ( \boldsymbol{1}_{d_B}^\top (\tilde {\boldsymbol{\Sigma}} - \tilde{\boldsymbol{\Upsilon}}) \boldsymbol{1}_{d_B} ) 
 + (\vP_{d_A} \vJ_{d_A} \vP_{d_A}) \otimes ( \boldsymbol{1}_{d_B}^\top \tilde {\boldsymbol{\Upsilon}} \boldsymbol{1}_{d_B} ) \\
 & = ( \boldsymbol{1}_{d_B}^\top (\tilde {\boldsymbol{\Sigma}} - \tilde{\boldsymbol{\Upsilon}}) \boldsymbol{1}_{d_B} ) \vP_{d_A}  
 + \boldsymbol{0}_{d_A}.
\end{align*}
This matrix has rank $d_A-1$ if $\gamma_A := \boldsymbol{1}_{d_B}^\top (\tilde {\boldsymbol{\Sigma}} - \tilde{\boldsymbol{\Upsilon}}) \boldsymbol{1}_{d_B} > 0$.
Respectively denote by $\tilde\vZ_1, \tilde \vZ_2$ the first and second $d_B$ components of the limit $\tilde\vZ$ which results from the variant of Theorem~\ref{thm:main_rand} that is coined to the adjusted quasi-randomization scheme.
Note that
$\tilde {\boldsymbol{\Sigma}} = \operatorname{Cov}(\tilde \vZ_i,\tilde \vZ_i), i=1,2$, and $\tilde {\boldsymbol{\Upsilon}}= \operatorname{Cov}(\tilde \vZ_1,\tilde \vZ_2)$.
Exploiting the bilinearity of the covariance,
$\tilde {\boldsymbol{\Sigma}} - \tilde{\boldsymbol{\Upsilon}} = \tfrac12 \operatorname{Cov}(\tilde \vZ_1 - \tilde \vZ_2)$.

Thus, $\gamma_A = \tfrac12  \operatorname{Var}(\boldsymbol{1}_{d_B}^\top (\tilde \vZ_1-\tilde \vZ_2 )) = 0$ if and only if $\boldsymbol{1}_{d_B}^\top (\tilde \vZ_1- \tilde \vZ_2)=0$ almost surely. Using $\vv:= 2^{-1/2} (1,-1,0,...,0)^\top\otimes \boldsymbol{1}_{d_B}$ this means $\vv^\top \tilde \vZ=0$. 
However, using that $\dot\psi_{\RP} (\Pi_1(\boldsymbol{X}_1,\boldsymbol{\lambda}_1))$ has the same first two moments as $\tilde \vZ$, one can show similarly as in the proof of Theorem~\ref{thm:main_rand} that 
$$E( \vv^\top \dot\psi_{\RP} (\Pi_1(\boldsymbol{X}_1,  \boldsymbol{\lambda}_1)) \ | \ \vX_1, \vlam_1) = 0 = E( \vv^\top \dot\psi_{\RP} (\boldsymbol{X}_1, \boldsymbol{\lambda}_1)) $$
almost surely. 
Thus, we continue as in the proof of Theorem~\ref{thm:main_rand} and argue for the asymptotic variance that
\begin{align*}
   \gamma_A & = E( (\vv^\top\dot \psi_{\tilde {\mathbb P}}( \Pi_1(\boldsymbol{X}_1, \boldsymbol{\lambda}_1 )))^2 ) = \frac1{d_A!} \sum_{\pi \in \mathcal{S}_{d_A}} E(( \vv^\top \dot \psi_{\tilde {\mathbb P}}( \boldsymbol{X}_1^\pi, \boldsymbol{\lambda}_1^\pi ))^2 ) \\
   &\geq \frac1{d_A!} E((\vv^\top \dot \psi_{\tilde {\mathbb P}}( \boldsymbol{X}_1, \boldsymbol{\lambda}_1 ))^2 ) = \frac1{d_A!} \vv^\top \vV \vv > 0. 
\end{align*}
This final conclusion is due to the fact that $\vv \notin \ker(\vV) = \{c \boldsymbol{1}_{d_Ad_B}: c \in \R\} $.

\subsubsection{Testing for interaction effects}
\label{ssec:rank_interaction}

We move on to computing the rank of $\vC_{AB} \tilde{\vV} \vC_{AB}^\top$ where $\tilde \vV = \boldsymbol{I}_{d_A} \otimes (\tilde {\boldsymbol{\Sigma}} - \tilde{\boldsymbol{\Upsilon}}) + \vJ_{d_A} \otimes\tilde{\boldsymbol{\Upsilon}} $. 
Although we use the same notation for $\tilde \vV$ as above, we wish to point out that the definitions of the block matrices have changed as a consequence of the changed quasi-randomization scheme:
$$\tilde{\boldsymbol{\Sigma}} = (\tilde \sigma^2 - \tilde \rho) \boldsymbol{I}_{d_B} + \tilde \rho \boldsymbol{J}_{d_B},\qquad \tilde{\boldsymbol{\Upsilon}} = (\tilde \eta - \tilde \nu)  \boldsymbol{I}_{d_B} 
 + \tilde{\nu} \boldsymbol{J}_{d_B}.$$
Using similar arguments as above, 
\begin{align*}
 \vC_{AB} \tilde{\vV} \vC_{AB}^\top & = (\vP_{d_A} \otimes \vP_{d_B} ) \Big( (\tilde \sigma^2 - \tilde \rho) \boldsymbol{I}_{d_A} \otimes \boldsymbol{I}_{d_B} + \tilde \rho \boldsymbol{I}_{d_A} \otimes \boldsymbol{J}_{d_B} \\
 & \quad - (\tilde \eta - \tilde \nu) \boldsymbol{I}_{d_A} \otimes \boldsymbol{I}_{d_B} 
 - \tilde{\nu} \boldsymbol{I}_{d_A} \otimes \boldsymbol{J}_{d_B}
 \\
 & \qquad 
 + (\tilde \eta - \tilde \nu) \vJ_{d_A} \otimes \vI_{d_B} + \tilde \nu \vJ_{d_A} \otimes \vJ_{d_B} \Big) (\vP_{d_A} \otimes \vP_{d_B} )
 \\
 & = (\tilde \sigma^2 - \tilde \rho - \tilde \eta + \tilde \nu)  (\vP_{d_A} \otimes \vP_{d_B} )
 + \boldsymbol{0}_{d_A \times d_B}.
\end{align*}
This matrix has rank $(d_A-1)(d_B-1)$ if $\gamma_{AB} := \tilde \sigma^2 - \tilde \rho - \tilde \eta + \tilde \nu\neq 0$.
Writing the limit random vector as $\tilde \vZ=(\tilde Z_{11}, \tilde Z_{12}, \dots, \tilde Z_{d_A d_B})^\top$, ...
Note that $\gamma_{AB}$ equals
\begin{align*}
    \tfrac14\big[ & \big( \operatorname{Var}(\tilde Z_{11}) + \operatorname{Var}(\tilde Z_{12}) + \operatorname{Var}(\tilde Z_{21}) + \operatorname{Var}(\tilde Z_{22}) \big) + 4 \operatorname{Cov}(\tilde Z_{11}, \tilde Z_{22}) \\
    & - \big( \operatorname{Cov}(\tilde Z_{11}, \tilde Z_{12}) + \operatorname{Cov}(\tilde Z_{12}, \tilde Z_{11}) + \operatorname{Cov}(\tilde Z_{21}, \tilde Z_{22}) + \operatorname{Cov}(\tilde Z_{22}, \tilde Z_{21}) \big) \\
    & - \big( \operatorname{Cov}(\tilde Z_{11}, \tilde Z_{21}) + \operatorname{Cov}(\tilde Z_{21}, \tilde Z_{11}) + \operatorname{Cov}(\tilde Z_{12}, \tilde Z_{22}) + \operatorname{Cov}(\tilde Z_{22}, \tilde Z_{12}) \big) \\
   = & \tfrac14  \operatorname{Var}(\tilde Z_{11} - \tilde Z_{21} - \tilde Z_{12} + \tilde Z_{22})
\end{align*}
Thus, for $\vv := \tfrac12 (1, -1)^\top \oplus \boldsymbol{1}_{d_B-2} \oplus (-1, 1)^\top \oplus \boldsymbol{1}_{(d_A-2) d_B + (d_B-2)}  $,  we have $\gamma_{AB} =  \operatorname{Var}(\vv^\top \tilde \vZ) = \vv^\top \tilde {\boldsymbol{V}} \vv > 0$ for similar reasons as in the previous subsection.

\section{Asymptotic variance-covariance matrix and estimators}
\label{app:covariance}

In the following, we compute the limiting covariance of the components $i,j \in \{1,\dots, d\}$ of the normalized Mann-Whitney effect estimator:
 
\begin{align*}
\sigma_{ij} & := \operatorname{Cov}(\dot\psi_{i,\mathbb{P}} (\boldsymbol{X}_1, \boldsymbol{\lambda}_1), \dot\psi_{j,\mathbb{P}} (\boldsymbol{X}_1, \boldsymbol{\lambda}_1)) \\
& = \operatorname{Cov}(\dot\psi_{i2,(F_1, \dots, F_d)} \cdot (\dot\psi_{11,(\check F_1, \kappa_1)}\cdot (\lambda_{11} \cdot c(\ \cdot \ - X_{11}), \lambda_{11}), \\ 
 & \hspace{3.2cm} \dots,  \dot\psi_{d1,(\check F_d, \kappa_d)} \cdot (\lambda_{d1} \cdot c(\ \cdot \ - X_{d1}), \lambda_{d1})) , \\
& \qquad \dot\psi_{j2,(F_1, \dots, F_d)} \cdot (\dot\psi_{11,(\check F_1, \kappa_1)}\cdot (\lambda_{11} \cdot c(\ \cdot \ - X_{11}), \lambda_{11}), \\ 
 & \hspace{3.2cm} \dots,  \dot\psi_{d1,(\check F_d, \kappa_d)} \cdot (\lambda_{d1} \cdot c(\ \cdot \ - X_{d1}), \lambda_{d1}))) \\
& = cov\Big(\frac1d\sum_{s=1}^d  \Big[ \int \Big\{\frac{\lambda_{s1} \cdot c( \ \cdot \ - X_{s1})}{\kappa_s} - \frac{\lambda_{s1} \cdot \check F_s}{\kappa_s^2} \Big\}d F_i 
- \int \Big\{\frac{\lambda_{i1} \cdot c( \ \cdot \ - X_{i1})}{\kappa_i} - \frac{\lambda_{i1}\cdot\check F_i}{\kappa_i^2} \Big\}d F_s \Big], \\
& \qquad \frac1d\sum_{t=1}^d  \Big[ \int \Big\{\frac{\lambda_{t1} \cdot c( \ \cdot \ - X_{t1})}{\kappa_t} - \frac{\lambda_{t1}\cdot\check F_t}{\kappa_t^2} \Big\}d F_j 
- \int \Big\{\frac{\lambda_{j1} \cdot c( \ \cdot \ - X_{j1})}{\kappa_j} - \frac{\lambda_{j1}\cdot\check F_j}{\kappa_j^2} \Big\}d F_{t} \Big]
\Big)
\end{align*}
Note that the terms in square brackets are centered.
In addition, we now use that 
$$\check F_s(x) = P(X_{s1} < x, \lambda_{s1} = 1) + \tfrac12 P(X_{s1} = x, \lambda_{s1} = 1) = \kappa_s F_s(x), $$
owing to the MCAR assumption.
Hence, the asymptotic covariance expression simplifies to
\begin{align*}
    \frac1{d^2} \sum_{s,t =1}^d E\Big( & \Big[\frac{\lambda_{s1}}{\kappa_s}\Big( \int c( \ \cdot \ - X_{s1}) d F_i - p^{(si)} \Big) - 
    \frac{\lambda_{i1}}{\kappa_i}\Big( \int c( \ \cdot \ - X_{i1}) d F_s - p^{(is)} \Big)     \Big] \\
     \times & \Big[\frac{\lambda_{t1}}{\kappa_t}\Big( \int c( \ \cdot \ - X_{t1}) d F_j - p^{(tj)} \Big) - 
    \frac{\lambda_{j1}}{\kappa_j}\Big( \int c( \ \cdot \ - X_{j1}) d F_t - p^{(jt)} \Big)     \Big]\Big).
\end{align*}
Here, $p^{(ij)} = \int F_i d F_j$.
Using the definitions $\kappa_{ij}=P(\lambda_{i1}=1, \lambda_{j1}=1)$ and 
$$F_{ij}(x,y) = \tfrac14(P(X_{i1} < x, X_{j1}< y) + P(X_{i1}  \leq  x, X_{j1}< y) $$
$$ \qquad \qquad \qquad + \ P(X_{i1} < x, X_{j1} \leq y) + P(X_{i1} \leq  x, X_{j1} \leq y) ) $$
and the independence of the $\lambda$'s and $X$'s,
the above covariance equals
\begin{align*}
    \frac1{d^2} \sum_{s,t =1}^d 
    & \Big[ \frac{\kappa_{st}}{\kappa_s \kappa_t} \Big(\int \int F_{st}(x,y)dF_i(x) d F_j(y) - p^{(si)} p^{(tj)} \Big) \\
    & - \frac{\kappa_{it}}{\kappa_i \kappa_t} \Big(\int \int F_{it}(x,y)dF_s(x) d F_j(y) - p^{(is)} p^{(tj)} \Big) \\
    & - \frac{\kappa_{sj}}{\kappa_s \kappa_j} \Big(\int \int F_{sj}(x,y)dF_i(x) d F_t(y) - p^{(si)} p^{(jt)} \Big) \\
    & + \frac{\kappa_{ij}}{\kappa_i \kappa_j} \Big(\int \int F_{ij}(x,y)dF_s(x) d F_t(y) - p^{(is)} p^{(jt)} \Big)
    \Big] \\
    = \frac1{d^2} \sum_{s,t =1}^d 
    & \Big[ \frac{\kappa_{st}}{\kappa_s \kappa_t} (p^{(si;tj)} - p^{(si)} p^{(tj)} ) 
    - \frac{\kappa_{it}}{\kappa_i \kappa_t} (p^{(is;tj)} - p^{(is)} p^{(tj)} ) \\
     & - \frac{\kappa_{sj}}{\kappa_s \kappa_j} (p^{(si;jt)} - p^{(si)} p^{(jt)} )
    + \frac{\kappa_{ij}}{\kappa_i \kappa_j} (p^{(is;jt)} - p^{(is)} p^{(jt)} )
    \Big]
\end{align*}
Here, $p^{(si;tj)} := \int \int F_{st}(x,y)dF_i(x) d F_j(y) $.
Note that $\kappa_{ss}= \kappa_s$  and  $p^{(ss)}=\tfrac12$
but, when $x \leq y$, 
$$F_{ss}(x,y) = \ind\{x=y\} (\tfrac34 P(X_{s1} < x) + \tfrac14 P(X_{s1} \leq x))  + \ind\{x < y\} (\tfrac12 P(X_{s1} < x) + \tfrac12 P(X_{s1} \leq x))  $$ 
 differs from $F_s(\min(x,y))$, so further simplifications 
are not very fruitful.

Replacing the quantities in the above expression by their empirical counterparts yields a plug-in estimator of $\sigma_{ij}$. This estimator is consistent, since it is a linear combination of consistent estimators. Although this provides a consistent covariance estimator when all entries are estimated componentwise, it is more convenient and structurally preferable to estimate the covariance matrix as a whole. In particular, this allows one to preserve structural properties of covariance matrices, such as the positive semi-definiteness of the resulting estimator.
Thus, an estimator for the asymptotic variance-covariance matrix can be directly based on the vector-valued asymptotic linearity of the estimator, as stated in the proof of Theorem~\ref{thm:clt} at the beginning of Appendix~\ref{app:proofs}:
the estimated influence function of the $k$-th random vector is\begin{align*}
    \hat{\dot{\boldsymbol{\psi}}}_k =  \frac nd\sum_{s=1}^d &  \Big( \Big[  \int \Big\{ \frac{(\ind\{X_{sk} < x\} + \tfrac12 \ind\{X_{sk} = x\}) \lambda_{sk}}{\lambda_{s\cdot}} - \frac{\hat{\check F}_s(x) \lambda_{sk}}{\lambda_{s\cdot}^2/n} \Big\} d \hat F_i(x) \\
    & - \int\Big\{ \frac{(\ind\{X_{ik} < x\} + \tfrac12 \ind\{X_{ik} = x\}) \lambda_{ik}}{\lambda_{i\cdot}} - \frac{\hat{\check F}_i(x) \lambda_{ik}}{\lambda_{i\cdot}^2/n} \Big\} d \hat  F_s(x) \Big] : i = 1 ,\dots, d \Big)^\top.
\end{align*} 
Where $\hat{\check F}_i(x) = \frac1n \sum_{\ell=1}^n (\ind\{X_{i\ell} < x \} + \tfrac12 \ind\{X_{i\ell} = x\}) \lambda_{i\ell}$.
Note that the average of these terms over $k=1,\dots, n$ reduces to zero.
Hence, we propose the following easy-to-implement estimator of the asymptotic variance-covariance matrix:
\begin{align*}
    \hat{\boldsymbol{V}}:=
     \frac1{{n}} \sum_{k=1}^n  \hat{\dot{\boldsymbol{\psi}}}_k\hat{\dot{\boldsymbol{\psi}}}_k^\top.
\end{align*}
This representation guarantees positive semi-definiteness, a desirable property, which  is not generally ensured in alternative estimators such as standard plug-in estimators.

We derive the following alternative representation of the estimated influence function which facilitates software implementations of $\hat{\vV}$:
\begin{align*}
     \hat{\dot{\boldsymbol{\psi}}}_k &
     = \frac nd\sum_{s=1}^d \sum_{\ell=1}^n   \Big( \Big[ \frac{\lambda_{sk}\lambda_{i\ell}}{\lambda_{s\cdot}\lambda_{i\cdot}}  \Big\{ (\ind\{X_{sk} < X_{i \ell}\} + \tfrac12 \ind\{X_{sk} = X_{i\ell}\})  - \frac{\hat{\check F}_s(X_{i\ell})}{\lambda_{s\cdot}/n} \Big\}  \\
    & - \frac{\lambda_{ik}\lambda_{s\ell}}{\lambda_{i\cdot}\lambda_{s\cdot}} \Big\{ (\ind\{X_{ik} < X_{s \ell}\} + \tfrac12 \ind\{X_{ik} = X_{s\ell}\})  - \frac{\hat{\check F}_i(X_{s \ell}) }{\lambda_{i\cdot}/n} \Big\} \Big] : i = 1 ,\dots, d \Big)^\top \\
    & = \frac1 d\sum_{s=1}^d \sum_{j,\ell=1}^n   \Big( \Big[ \frac{\lambda_{sk}\lambda_{i\ell}}{\lambda_{s\cdot}\lambda_{i\cdot}}  \Big\{ (\ind\{X_{sk} < X_{i \ell}\} + \tfrac12 \ind\{X_{sk} = X_{i\ell}\}) \\
    & \qquad \qquad \qquad \qquad \qquad \qquad - (\ind\{X_{sj} < X_{i \ell}\} + \tfrac12 \ind\{X_{sj} = X_{i\ell}\}) \frac{n\lambda_{sj}}{\lambda_{s\cdot}} \Big\}  \\
    &  \qquad \qquad \qquad  
    - \frac{\lambda_{ik}\lambda_{s\ell}}{\lambda_{i\cdot}\lambda_{s\cdot}} \Big\{ (\ind\{X_{ik} < X_{s \ell}\} + \tfrac12 \ind\{X_{ik} = X_{s\ell}\}) \\
    & \qquad \qquad \qquad \qquad \qquad \qquad
    - (\ind\{X_{ij} < X_{s\ell}\} + \tfrac12 \ind\{X_{ij} = X_{s\ell}\} ) \frac{n \lambda_{ij}}{\lambda_{i\cdot}} \Big\} \Big] : i = 1 ,\dots, d \Big)^\top 
\end{align*}

\begin{thm}
\label{thm:cov_consistency}
    Under Assumptions~\ref{ass:MCAR}--\ref{ass:test}, the covariance estimators $\hat \sigma_{ij}, i,j \in \{1,\dots, d\}$, and $\hat{\boldsymbol{V}}$ are strongly consistent for $\sigma_{ij}$ and $\boldsymbol{V}$, respectively.
    Additionally, $P(\mathrm{rk}(\hat{\boldsymbol{V}}) = d-1) \to 1$ as $n\to \infty$.
\end{thm}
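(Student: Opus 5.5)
The plan is to treat the entrywise plug-in estimators $\hat\sigma_{ij}$ and the matrix estimator $\hat{\boldsymbol V}$ separately, and then obtain the rank statement from Assumption~\ref{ass:test}.

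\emph{Plug-in estimators.} Every ingredient of the closed-form expression for $\sigma_{ij}$ in Appendix~\ref{app:covariance} is one of three types.
\begin{itemize}
\item The indicator probabilities $\kappa_s$ and $\kappa_{st}$. Their empirical versions $n^{-1}\lambda_{s\cdot}$ and $n^{-1}\sum_k\lambda_{sk}\lambda_{tk}$ converge almost surely by the strong law of large numbers.
\item The quantities $p^{(si)}=\int F_s\,dF_i$, estimated by $\int\hat F_s\,d\hat F_i$.
\item The quantities $p^{(si;tj)}=\iint F_{st}\,dF_i\,dF_j$, estimated with $F_{st}$ replaced by its empirical version based on the pairs with $\lambda_{sk}\lambda_{tk}=1$.
\end{itemize}
Under Assumptions~\ref{ass:MCAR} and~\ref{Assump1}, $\lambda_{i\cdot}\to\infty$ almost surely. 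A (bivariate) Glivenko--Cantelli argument, applied to the VC class $\mathcal F$ of Lemma~\ref{lemma:psi} and its bivariate analogue, gives $\sup_x|\hat{\check F}_i(x)-\check F_i(x)|\to0$ almost surely, and likewise for the empirical bivariate distribution functions. Since $\kappa_s>0$, the Wilcoxon-type functionals are continuous with respect to the sup-norm on bounded normalized distribution functions, so the continuous mapping theorem yields $\hat\sigma_{ij}\to\sigma_{ij}$ almost surely.

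\emph{The estimator $\hat{\boldsymbol V}$.} Decompose $\hat{\dot{\boldsymbol\psi}}_k=\dot\psi_{\mathbb P}(\vX_k,\vlam_k)+\boldsymbol R_k$ and expand
$$\hat{\boldsymbol V}=\frac1n\sum_k\dot\psi_{\mathbb P}\dot\psi_{\mathbb P}^\top+\frac1n\sum_k\big(\dot\psi_{\mathbb P}\boldsymbol R_k^\top+\boldsymbol R_k\dot\psi_{\mathbb P}^\top+\boldsymbol R_k\boldsymbol R_k^\top\big).$$
The influence function is bounded: $c\in[0,1]$, $\lambda\in\{0,1\}$ and $\kappa_s>0$. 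It also has mean zero, as noted after the proof of Theorem~\ref{thm:clt}. Hence the first average converges almost surely to $\boldsymbol V$ by the strong law of large numbers.

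The main obstacle is to show $\max_{k\le n}\|\boldsymbol R_k\|\to0$ almost surely; uniformity in $k$ is needed because the integrands $c(\,\cdot\,-X_{sk})$ change with $k$. The plan is to bound $\boldsymbol R_k$ by quantities that do not depend on $k$, namely
$$\sup_{z}\Big|\int c(x-z)\,d(\hat F_i-F_i)(x)\Big|,\qquad \sup_x|\hat F_s(x)-F_s(x)|,\qquad |n^{-1}\lambda_{s\cdot}-\kappa_s|.$$
The first supremum is again a Glivenko--Cantelli statement. Indeed, $\int c(x-z)\,d\hat F_i(x)$ is the normalized empirical survival function of component $i$ at $z$, so its supremum over $z$ is controlled by $\sup_x|\hat F_i-F_i|$. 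The remaining differences are handled by elementary bounds, such as $|a/b-a'/b'|$ with $b,b'$ bounded away from zero on an almost-sure eventual event. Once $\max_k\|\boldsymbol R_k\|\to0$, the cross terms and the quadratic remainder vanish almost surely, because $\dot\psi_{\mathbb P}$ is bounded. This gives $\hat{\boldsymbol V}\to\boldsymbol V$ almost surely.

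\emph{Rank.} First, $\boldsymbol 1_d^\top\hat{\dot{\boldsymbol\psi}}_k=0$ for every $k$. The bracket in the definition of $\hat{\dot{\boldsymbol\psi}}_k$ is of the form $g(s,i)-g(i,s)$, so summing over both $i$ and $s$ cancels it by antisymmetry. It follows that $\boldsymbol 1_d\in\ker\hat{\boldsymbol V}$ and $\mathrm{rk}(\hat{\boldsymbol V})\le d-1$. Second, by lower semicontinuity of the rank together with $\hat{\boldsymbol V}\to\boldsymbol V$ almost surely and $\mathrm{rk}(\boldsymbol V)=d-1$ (Assumption~\ref{ass:test}), we have $\mathrm{rk}(\hat{\boldsymbol V})\ge d-1$ eventually, almost surely. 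Hence $P(\mathrm{rk}(\hat{\boldsymbol V})=d-1)\to1$. Since $\hat{\boldsymbol V}$ then satisfies Assumption~\ref{ass:V}, the same consistency argument applied to the quasi-randomized sample should carry over to $\hat{\boldsymbol V}^\pi$.
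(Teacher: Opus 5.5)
Your proposal is correct and follows essentially the paper's route: consistency of the plug-in quantities via the strong law and Glivenko--Cantelli together with continuity of the Wilcoxon-type functionals, which you make precise for $\hat{\boldsymbol V}$ through the uniform remainder bound $\max_k\|\boldsymbol R_k\|\to0$, followed by lower semicontinuity of the rank. Your antisymmetry argument giving $\boldsymbol 1_d^\top\hat{\dot{\boldsymbol\psi}}_k=0$ is a worthwhile addition, because it supplies the upper bound $\mathrm{rk}(\hat{\boldsymbol V})\le d-1$ that the paper's proof leaves implicit; lower semicontinuity alone only yields $\mathrm{rk}(\hat{\boldsymbol V})\ge d-1$ eventually.
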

\noindent 
\textbf{Proof of Theorem~\ref{thm:cov_consistency}.}
    It is obvious that the estimators $\hat \sigma_{ij}$ are continuous functionals of $\hat F_1, \dots, \hat F_d$ and $\hat F_{st}, s,t \in \{1,\dots,d\}$.
    Owing to the strong law of large numbers, these functions are consistent for $F_1, \dots, F_d$, and $F_{st}, s,t \in \{1,\dots,d\}$, respectively.
    This similarly holds true for each component of $\hat{\vV}$: 
    the derivations above reveal that the components of the (almost sure) limit $\hat{\vV}$ must be equal to $\sigma_{ij}, i,j\in \{1,\dots,d\}$.
    Finally, due to the lower semi-continuity of the rank mapping, combined with the assumption that $\mathrm{rk}(\vV) = d-1$, the claimed convergence of $\mathrm{rk}(\hat{\boldsymbol{V}})$ follows. 
    It actually even holds almost surely.
    \hfill \qed

\section{An Example with Low-Rank Data and Full Contrast Rank}
\label{app:rankexample}

For the sake of simplicity consider the complete-data case, i.e., $\lambda_{i\cdot}=n$ for all $i=1,\dots,d$, and let $ \boldsymbol X_k = (X_{1k},X_{2k},X_{3k},X_{4k})^\top,$ $ k=1,\dots,n, $ be i.i.d. four-dimensional random vectors with\[ (X_{1k},X_{2k})^\top\sim \mathcal N_2(\boldsymbol{0}_2,\vI_2), \qquad X_{3k}\equiv0, \qquad X_{4k}\equiv1. \] 
Hence, only the first two components are non-deterministic, resulting in the covariance matrix  $ \operatorname{Cov}(\boldsymbol X_k) = \operatorname{diag}(1,1,0,0), $ which has rank two. However, as we will show below, it holds for the covariance matrix $\vV$ related to the Mann-Whitney effect estimator that $\mathrm{rk}(\vV)=3=d-1$. 
With the definitions $a:=\Phi(1)$ and $q:=1-a$, we obtain the following matrix of pair-wise Mann-Whitney effects:
$$(p^{(\ell i)})_{\ell,i=1}^4 = (P(X_{\ell1} < X_{i1}) +\tfrac12 P(X_{\ell1} < X_{i1}))_{\ell,i=1}^4 
=\frac 1 2 \begin{pmatrix} 1 & 1 & 1 & 2a\\ 1 &1 & 1 & 2a\\ 1 & 1 &1 & 2\\ 2q & 2q & 0 & 1 \end{pmatrix}. $$
Now, by using the Hoeffding decomposition for the corresponding two-sample \(U\)-statistics \citep{Hoeffding1948}, we obtain the first-order representation 
$$\sqrt n(\hat{\vp}-\vp) = \frac14\vL\vZ+\lan_p(1), $$ with 
$$ \vL= \begin{pmatrix} 1&1&1&-1&0&0\\ -1&0&0&1&1&1\\ 0&-1&0&0&-1&0\\ 0&0&-1&0&0&-1 \end{pmatrix}, $$ and 
$$ \vZ = \begin{pmatrix} Z_{A,1}\\ Z_{B,1}\\ Z_{C,1}\\ Z_{A,2}\\ Z_{B,2}\\ Z_{C,2} \end{pmatrix} \overset{d}{\longrightarrow} \mathcal N_6 \left( \vnull_6, \begin{pmatrix} \vM&\vnull_{3\times3}\\ \vnull_{3\times3}&\vM \end{pmatrix} \right) $$ with 
$$ \vM = \begin{pmatrix} \frac1{12} & \frac18 & \frac{aq}{2}\\[1mm] \frac18 & \frac14 & \frac q2\\[1mm] \frac{aq}{2} & \frac q2 & aq \end{pmatrix}. $$
This results in
$$\sqrt n(\widehat{\vp}-\vp) \overset{d}{\longrightarrow} \mathcal N_4(\boldsymbol 0_4,\vV), \qquad \text{where} \quad  \vV = \frac1{16} \vL \begin{pmatrix} \vM&\vnull_{3\times3}\\ \vnull_{3\times3}&\vM \end{pmatrix} \vL^\top. $$

Annother way to derive the covariance matrix $\vV$ would be applying the general formula for $\sigma_{ij}$ from Appendix~\ref{app:covariance}.
 This approach would avoid the Hoeffding decomposition, but it would entail a lengthy series of elementary case-by-case computations. For this reason, we presented the shorter Hoeffding-based derivation above, which yields the same result.

 A direct calculation of the upper-left \(3\times3\) principal minor of $\vV$ yields
\[
\det(\vV_{1:3,1:3})
=
\frac{(a-1)(2a-1)(36a^2-24a-25)}{98304}.
\]
which is positive for \(a=\Phi(1)\). 
Consequently, we conclude that
\(\mathrm{rk}(\vV)\geq 3\), implying
$
\mathrm{rk}(\vV)=3=d-1,
$
although the covariance matrix of the underlying random vectors $\vX_k$ has only rank two. This
illustrates that the condition \(\mathrm{rk}(\vV)=d-1\) can be substantially less
restrictive than a corresponding  assumption on the covariance matrix
of the underlying data.

\FloatBarrier

\section{Detailed simulation results under MCAR}\label{app:add_sim}
Tables~\ref{tab:simulation_mcar_different_11x1_tests_rows}--\ref{tab:simulation_mcar_different_3x2_ab_tests_rows} contain the empirical rejection rates for our considered settings; for $\delta=0$, i.e., under the respective null hypotheses, type-I error proportions within the binomial $95\%$-interval $[4.42\%,5.62\%]$ based on 5,000 level-$5\%$-tests are printed in bold.
In particular, we see that all but two values of the simulated sizes of $\Psi_n^\pi$ are contained in this interval.
For $\Psi_n^{\pi\min}$, which is only available for testing the main and interaction effects in the two-way layout, only one of the simulated sizes are not contained in the interval.
In comparison, nine of the simulated sizes of $\Psi_n^\textnormal{boot}$ are not contained; this test seems generally somewhat more liberal than the proposed tests.
Lastly, the asymptotic test $\Psi_n^\textnormal{asy}$ is extremely liberal, with simulated sizes ranging from about 6\% to about 58\%.

\begin{table}[h]
\caption{Empirical rejection rates in percent for the one-factor setting with $d=11$ under the MCAR missingness mechanism with component-specific missingness probabilities. The hypothesis $H_{0,\textnormal{equal}}$ is tested at the nominal significance level $\alpha=5\%$. The case $\delta=0$ corresponds to the null hypothesis, whereas $\delta>0$ represents a trend alternative.}
\label{tab:simulation_mcar_different_11x1_tests_rows}
\centering
\small
\setlength{\tabcolsep}{3.5pt}
\renewcommand{\arraystretch}{1.05}

\begin{tabular}{@{}lll*{9}{r}@{}}
\hline
&&&\multicolumn{3}{c}{$n=20$} & \multicolumn{3}{c}{$n=35$} & \multicolumn{3}{c}{$n=50$} \\
\cline{4-6}
\cline{7-9}
\cline{10-12}
Copula & $\rho_X$ & Test & \multicolumn{3}{c}{$\delta$} & \multicolumn{3}{c}{$\delta$} & \multicolumn{3}{c}{$\delta$} \\
\cline{4-6}
\cline{7-9}
\cline{10-12}
&&& 0.00 & 0.15 & 0.30 & 0.00 & 0.15 & 0.30 & 0.00 & 0.15 & 0.30 \\
\hline

Clayton & $0.1$ & $\Psi^\pi_n$ & \textbf{5.60} & 8.50 & 21.40 & \textbf{5.06} & 12.86 & 50.34 & \textbf{5.24} & 19.44 & 72.90 \\
 &  & $\Psi^\textnormal{boot}_n$ & \textbf{5.58} & 8.84 & 21.88 & \textbf{4.96} & 13.06 & 50.26 & \textbf{5.24} & 19.46 & 72.76 \\
 &  & $\Psi^\textnormal{asy}_n$ & 57.62 & 66.34 & 86.66 & 29.70 & 48.02 & 86.16 & 19.90 & 46.00 & 92.02 \\[4pt]
 & $0.2$ & $\Psi^\pi_n$ & \textbf{4.98} & 8.08 & 21.18 & \textbf{5.22} & 13.64 & 49.78 & \textbf{5.10} & 19.08 & 72.88 \\
 &  & $\Psi^\textnormal{boot}_n$ & \textbf{5.52} & 8.72 & 22.52 & \textbf{5.52} & 14.26 & 50.92 & \textbf{5.26} & 19.74 & 73.58 \\
 &  & $\Psi^\textnormal{asy}_n$ & 54.94 & 65.28 & 84.50 & 28.20 & 47.24 & 84.48 & 18.76 & 44.44 & 90.78 \\
\hline

Gumbel & $0.1$ & $\Psi^\pi_n$ & \textbf{5.28} & 10.14 & 26.06 & \textbf{5.46} & 16.24 & 54.00 & \textbf{4.96} & 20.98 & 77.54 \\
 &  & $\Psi^\textnormal{boot}_n$ & \textbf{5.48} & 10.44 & 25.84 & \textbf{5.60} & 16.08 & 53.80 & \textbf{4.88} & 20.96 & 77.18 \\
 &  & $\Psi^\textnormal{asy}_n$ & 56.52 & 68.04 & 87.30 & 30.32 & 52.32 & 87.80 & 20.28 & 47.80 & 93.44 \\[4pt]
 & $0.2$ & $\Psi^\pi_n$ & \textbf{5.58} & 12.42 & 29.18 & 5.68 & 17.96 & 59.66 & \textbf{5.20} & 24.94 & 79.64 \\
 &  & $\Psi^\textnormal{boot}_n$ & 5.98 & 12.84 & 30.18 & 5.96 & 18.44 & 59.96 & \textbf{5.24} & 25.38 & 79.72 \\
 &  & $\Psi^\textnormal{asy}_n$ & 53.86 & 68.12 & 87.68 & 27.74 & 52.76 & 88.56 & 18.54 & 50.24 & 93.80 \\
\hline
\end{tabular}

\end{table}
\begin{table}[p]
\caption{Empirical rejection rates in percent for the two-way layout  $d_A = 3$ levels for factor $A$ and $d_B = 2$ levels for factor $B$, (resulting in $d=d_Ad_B=6$) under the MCAR missingness mechanism with component-specific missingness probabilities. The hypothesis $H_{0,\textnormal{equal}}$ is tested at the nominal significance level $\alpha=5\%$. The case $\delta=0$ corresponds to the null hypothesis, whereas $\delta>0$ represents a one-component alternative.}
\label{tab:simulation_mcar_different_3x2_equal_tests_rows}
\centering
\small
\setlength{\tabcolsep}{3.5pt}
\renewcommand{\arraystretch}{1.05}

\begin{tabular}{@{}lll*{9}{r}@{}}
\hline
&&&\multicolumn{3}{c}{$n=20$} & \multicolumn{3}{c}{$n=35$} & \multicolumn{3}{c}{$n=50$} \\
\cline{4-6}
\cline{7-9}
\cline{10-12}
Copula & $\rho_X$ & Test & \multicolumn{3}{c}{$\delta$} & \multicolumn{3}{c}{$\delta$} & \multicolumn{3}{c}{$\delta$} \\
\cline{4-6}
\cline{7-9}
\cline{10-12}
&&& 0.00 & 0.15 & 0.30 & 0.00 & 0.15 & 0.30 & 0.00 & 0.15 & 0.30 \\
\hline

Clayton & $0.1$ & $\Psi^\pi_n$ & \textbf{5.14} & 21.30 & 72.06 & \textbf{5.58} & 40.68 & 96.92 & \textbf{4.66} & 59.56 & 99.70 \\
 &  & $\Psi^\textnormal{boot}_n$ & 5.74 & 23.04 & 73.92 & 5.72 & 41.30 & 97.18 & \textbf{4.90} & 59.80 & 99.72 \\
 &  & $\Psi^\textnormal{asy}_n$ & 21.86 & 52.82 & 93.46 & 13.60 & 61.26 & 99.22 & 9.66 & 72.76 & 99.98 \\[4pt]
 & $0.2$ & $\Psi^\pi_n$ & \textbf{5.48} & 26.02 & 81.70 & \textbf{5.38} & 48.60 & 98.66 & \textbf{5.26} & 65.84 & 100.00 \\
 &  & $\Psi^\textnormal{boot}_n$ & 6.28 & 28.34 & 83.84 & 5.74 & 49.98 & 98.82 & \textbf{5.62} & 66.78 & 100.00 \\
 &  & $\Psi^\textnormal{asy}_n$ & 20.78 & 55.52 & 96.26 & 12.40 & 66.62 & 99.60 & 9.42 & 77.04 & 100.00 \\
\hline

Gumbel & $0.1$ & $\Psi^\pi_n$ & \textbf{4.56} & 20.24 & 72.96 & \textbf{5.32} & 39.36 & 97.40 & \textbf{5.10} & 57.42 & 99.92 \\
 &  & $\Psi^\textnormal{boot}_n$ & \textbf{4.96} & 21.90 & 74.86 & \textbf{5.44} & 39.86 & 97.34 & \textbf{5.12} & 58.26 & 99.94 \\
 &  & $\Psi^\textnormal{asy}_n$ & 22.14 & 52.02 & 93.44 & 13.44 & 59.36 & 99.18 & 10.10 & 71.38 & 99.96 \\[4pt]
 & $0.2$ & $\Psi^\pi_n$ & \textbf{4.86} & 23.56 & 81.38 & \textbf{4.82} & 45.98 & 98.94 & \textbf{5.06} & 67.66 & 99.96 \\
 &  & $\Psi^\textnormal{boot}_n$ & 5.66 & 25.22 & 83.00 & \textbf{5.18} & 47.26 & 99.06 & \textbf{5.34} & 68.04 & 99.98 \\
 &  & $\Psi^\textnormal{asy}_n$ & 19.36 & 52.86 & 95.74 & 11.92 & 64.20 & 99.66 & 9.68 & 78.38 & 99.98 \\
\hline
\end{tabular}

\end{table}

\begin{table}[p]
\caption{Empirical rejection rates in percent for the two-way layout with $d_A=3$ levels for factor $A$ and $d_B=2$ levels for factor $B$, (resulting in $d=d_Ad_B=6$) under the MCAR missingness mechanism with component-specific missingness probabilities. The hypothesis $H_{0A}$ is tested at the nominal significance level $\alpha=5\%$. The case $\delta=0$ corresponds to the null hypothesis, whereas $\delta>0$ represents a one-component alternative.}
\label{tab:simulation_mcar_different_3x2_a_tests_rows}
\centering
\small
\setlength{\tabcolsep}{3.5pt}
\renewcommand{\arraystretch}{1.05}

\begin{tabular}{@{}lll*{9}{r}@{}}
\hline
&&&\multicolumn{3}{c}{$n=20$} & \multicolumn{3}{c}{$n=35$} & \multicolumn{3}{c}{$n=50$} \\
\cline{4-6}
\cline{7-9}
\cline{10-12}
Copula & $\rho_X$ & Test & \multicolumn{3}{c}{$\delta$} & \multicolumn{3}{c}{$\delta$} & \multicolumn{3}{c}{$\delta$} \\
\cline{4-6}
\cline{7-9}
\cline{10-12}
&&& 0.00 & 0.15 & 0.30 & 0.00 & 0.15 & 0.30 & 0.00 & 0.15 & 0.30 \\
\hline

Clayton & $0.1$ & $\Psi^\pi_n$ & \textbf{4.58} & 40.58 & 94.68 & \textbf{4.44} & 67.62 & 99.94 & \textbf{5.16} & 83.66 & 100.00 \\
 &  & $\Psi^{\pi\min}_n$ & \textbf{4.50} & 39.80 & 94.40 & \textbf{4.62} & 67.26 & 99.94 & \textbf{5.20} & 83.42 & 100.00 \\
 &  & $\Psi^\textnormal{boot}_n$ & \textbf{4.82} & 41.48 & 94.82 & \textbf{4.72} & 67.80 & 99.92 & \textbf{5.20} & 83.82 & 100.00 \\
 &  & $\Psi^\textnormal{asy}_n$ & 9.34 & 53.86 & 97.38 & 7.16 & 73.82 & 99.94 & 6.64 & 86.32 & 100.00 \\[4pt]
 & $0.2$ & $\Psi^\pi_n$ & \textbf{4.94} & 46.88 & 97.10 & \textbf{4.86} & 73.58 & 99.96 & \textbf{4.64} & 88.84 & 100.00 \\
 &  & $\Psi^{\pi\min}_n$ & \textbf{5.06} & 46.86 & 96.86 & \textbf{4.90} & 73.40 & 99.96 & \textbf{4.58} & 88.76 & 100.00 \\
 &  & $\Psi^\textnormal{boot}_n$ & \textbf{5.34} & 48.26 & 97.24 & \textbf{5.00} & 73.80 & 99.96 & \textbf{4.88} & 89.10 & 100.00 \\
 &  & $\Psi^\textnormal{asy}_n$ & 9.10 & 58.44 & 98.52 & 7.02 & 78.98 & 99.98 & 6.26 & 91.22 & 100.00 \\
\hline

Gumbel & $0.1$ & $\Psi^\pi_n$ & \textbf{4.54} & 41.04 & 95.66 & \textbf{4.68} & 66.96 & 99.82 & 4.24 & 85.16 & 100.00 \\
 &  & $\Psi^{\pi\min}_n$ & \textbf{4.72} & 41.10 & 95.44 & \textbf{4.64} & 66.82 & 99.80 & 4.24 & 84.72 & 100.00 \\
 &  & $\Psi^\textnormal{boot}_n$ & \textbf{4.78} & 41.66 & 95.80 & \textbf{4.64} & 67.36 & 99.82 & 4.38 & 85.04 & 100.00 \\
 &  & $\Psi^\textnormal{asy}_n$ & 8.24 & 54.56 & 98.10 & 6.84 & 73.56 & 99.88 & 5.88 & 87.90 & 100.00 \\[4pt]
 & $0.2$ & $\Psi^\pi_n$ & \textbf{4.92} & 46.58 & 97.66 & \textbf{4.74} & 74.48 & 99.98 & \textbf{4.90} & 88.84 & 100.00 \\
 &  & $\Psi^{\pi\min}_n$ & \textbf{4.96} & 46.36 & 97.52 & \textbf{4.50} & 74.60 & 99.96 & \textbf{4.72} & 88.86 & 100.00 \\
 &  & $\Psi^\textnormal{boot}_n$ & \textbf{5.18} & 47.50 & 97.80 & \textbf{4.86} & 74.74 & 99.96 & \textbf{4.88} & 88.74 & 100.00 \\
 &  & $\Psi^\textnormal{asy}_n$ & 9.04 & 58.46 & 98.94 & 7.10 & 80.44 & 99.98 & 6.54 & 91.18 & 100.00 \\
\hline
\end{tabular}

\end{table}

\begin{table}[p]
\caption{Empirical rejection rates in percent for the two-way layout with $d_A=3$ levels for factor $A$ and $d_B=2$ levels for factor $B$, (resulting in $d=d_Ad_B=6$) under the MCAR missingness mechanism with component-specific missingness probabilities. The hypothesis $H_{0AB}$ is tested at the nominal significance level $\alpha=5\%$. The case $\delta=0$ corresponds to the null hypothesis, whereas $\delta>0$ represents a one-component alternative.}
\label{tab:simulation_mcar_different_3x2_ab_tests_rows}
\centering
\small
\setlength{\tabcolsep}{3.5pt}
\renewcommand{\arraystretch}{1.05}

\begin{tabular}{@{}lll*{9}{r}@{}}
\hline
&&&\multicolumn{3}{c}{$n=20$} & \multicolumn{3}{c}{$n=35$} & \multicolumn{3}{c}{$n=50$} \\
\cline{4-6}
\cline{7-9}
\cline{10-12}
Copula & $\rho_X$ & Test & \multicolumn{3}{c}{$\delta$} & \multicolumn{3}{c}{$\delta$} & \multicolumn{3}{c}{$\delta$} \\
\cline{4-6}
\cline{7-9}
\cline{10-12}
&&& 0.00 & 0.15 & 0.30 & 0.00 & 0.15 & 0.30 & 0.00 & 0.15 & 0.30 \\
\hline

Clayton & $0.1$ & $\Psi^\pi_n$ & \textbf{5.36} & 13.68 & 45.88 & \textbf{4.94} & 24.78 & 74.14 & \textbf{5.02} & 34.06 & 89.42 \\
 &  & $\Psi^{\pi\min}_n$ & \textbf{5.40} & 13.64 & 45.72 & \textbf{4.94} & 25.04 & 73.92 & \textbf{5.00} & 34.18 & 89.58 \\
 &  & $\Psi^\textnormal{boot}_n$ & 5.78 & 14.02 & 46.28 & \textbf{5.16} & 25.28 & 74.36 & \textbf{5.02} & 34.42 & 89.42 \\
 &  & $\Psi^\textnormal{asy}_n$ & 10.16 & 22.86 & 59.14 & 7.52 & 31.36 & 79.76 & 6.72 & 39.38 & 92.00 \\[4pt]
 & $0.2$ & $\Psi^\pi_n$ & \textbf{5.02} & 16.82 & 52.72 & \textbf{5.04} & 27.38 & 81.20 & \textbf{4.98} & 40.30 & 93.38 \\
 &  & $\Psi^{\pi\min}_n$ & \textbf{4.90} & 17.02 & 52.52 & \textbf{5.18} & 27.58 & 81.00 & \textbf{5.04} & 40.00 & 93.50 \\
 &  & $\Psi^\textnormal{boot}_n$ & \textbf{5.38} & 17.56 & 53.80 & \textbf{5.36} & 28.02 & 81.24 & \textbf{5.20} & 40.36 & 93.54 \\
 &  & $\Psi^\textnormal{asy}_n$ & 9.16 & 26.68 & 64.28 & 7.38 & 33.86 & 85.76 & 6.74 & 44.48 & 94.96 \\
\hline

Gumbel & $0.1$ & $\Psi^\pi_n$ & \textbf{4.52} & 14.04 & 46.04 & \textbf{5.18} & 24.32 & 73.66 & \textbf{5.08} & 35.30 & 89.18 \\
 &  & $\Psi^{\pi\min}_n$ & \textbf{4.54} & 13.94 & 46.16 & \textbf{5.06} & 24.54 & 73.88 & \textbf{4.84} & 35.16 & 88.78 \\
 &  & $\Psi^\textnormal{boot}_n$ & \textbf{4.68} & 14.58 & 46.94 & \textbf{5.26} & 24.56 & 73.78 & \textbf{4.98} & 35.56 & 89.04 \\
 &  & $\Psi^\textnormal{asy}_n$ & 9.00 & 22.54 & 59.70 & 7.34 & 30.20 & 79.64 & 6.76 & 39.72 & 91.24 \\[4pt]
 & $0.2$ & $\Psi^\pi_n$ & \textbf{5.14} & 17.04 & 53.34 & \textbf{4.90} & 29.84 & 81.86 & \textbf{5.28} & 40.24 & 93.82 \\
 &  & $\Psi^{\pi\min}_n$ & \textbf{5.00} & 16.82 & 53.50 & \textbf{5.04} & 29.86 & 81.50 & \textbf{5.24} & 40.20 & 93.84 \\
 &  & $\Psi^\textnormal{boot}_n$ & \textbf{5.52} & 17.50 & 54.48 & \textbf{5.10} & 30.12 & 82.16 & \textbf{5.42} & 40.70 & 93.98 \\
 &  & $\Psi^\textnormal{asy}_n$ & 9.10 & 25.18 & 65.22 & 7.16 & 36.10 & 86.38 & 6.42 & 45.30 & 95.34 \\
\hline
\end{tabular}

\end{table}

\FloatBarrier

\section{Additional simulation results under MAR: \newline robustness against violations of the MCAR assumption}
\label{app:MAR}

We investigated the robustness of the developed tests against violations of the MCAR missingness Assumption~\ref{ass:MCAR}. 
To investigate this, we conduct the same simulations as in the main paper, also for an MAR missingness mechanism. In doing so, the missing probabilities are adapted in a way that the first measurement always exists, and therefore $\lambda_1.=n$. This can be seen as a sort of baseline value which is always observable before any intervention.
Next, we let the other missingness probabilities depend on the performance in this initial measurement:
let $\boldsymbol{\Lambda} = (\Lambda_1, \dots, \Lambda_d) = (P(\lambda_1 = 0),\dots,P( \lambda_d=0))$  denote the vector of component-specific missingness probabilities.
 Using this, we define
\textcolor{black}{
\[P(\lambda_{ik}=1\mid X_{1k})= \begin{cases} 1, & i=1,\\ 1-\Lambda_i, & i\geq 2 \text{ and } X_{1k}\geq 0.75,\\ 1-(\Lambda_i+0.05), & i\geq 2 \text{ and } X_{1k}<0.75. \end{cases}\]}
for $i=1,...,d$ and $k=1,...,n$,

where again $\boldsymbol{\Lambda}=(0.1, 0.1, 0.12, 0.19, 0.13, 0.18, 0.08, 0.1, 0.19, 0.11, 0.13)$ in the one-way layout and $\boldsymbol{\Lambda}=(0.1, 0.12, 0.19, 0.12, 0.08, 0.18)$ in the two way layout.
Note that, formally, we have always split the component index into two indices in the two-way layout. 
For simplicity, we avoid this in the present section so as not to overload the notation.

This represents the idea that pupils with good results are more likely to be present during further tests. The observations $\vX$ remain unchanged, as in Section~\ref{simulation}.

The results are displayed in Tables~\ref{tab:simulation_mar_11x1_tests_rows}--\ref{tab:simulation_mar_3x2_ab_tests_rows} and illustrated in Figures~\ref{fig:MAR_size_n}--\ref{fig:MAR_power}.
The tests seem to be slightly more liberal than under the MCAR assumption.
However, the results are overall quite similar to those under the MCAR regime; the quasi-randomization tests appear quite robust against certain deviations from the MCAR assumption.
Thus, no additional description of the results is provided.

\begin{table}[h]
\caption{Empirical rejection rates in percent for the one-factor setting with $d=11$ under the MAR missingness mechanism with component-specific missingness probabilities. The hypothesis $H_{0,\textnormal{equal}}$ is tested at the nominal significance level $\alpha=5\%$. The case $\delta=0$ corresponds to the null hypothesis, whereas $\delta>0$ represents a trend alternative.}
\label{tab:simulation_mar_11x1_tests_rows}
\centering
\small
\setlength{\tabcolsep}{3.5pt}
\renewcommand{\arraystretch}{1.05}

\begin{tabular}{@{}lll*{9}{r}@{}}
\hline
&&&\multicolumn{3}{c}{$n=20$} & \multicolumn{3}{c}{$n=35$} & \multicolumn{3}{c}{$n=50$} \\
\cline{4-6}
\cline{7-9}
\cline{10-12}
Copula & $\rho_X$ & Test & \multicolumn{3}{c}{$\delta$} & \multicolumn{3}{c}{$\delta$} & \multicolumn{3}{c}{$\delta$} \\
\cline{4-6}
\cline{7-9}
\cline{10-12}
&&& 0.00 & 0.15 & 0.30 & 0.00 & 0.15 & 0.30 & 0.00 & 0.15 & 0.30 \\
\hline

Clayton & $0.1$ & $\Psi^\pi_n$ & \textbf{5.00} & 8.62 & 19.82 & \textbf{4.92} & 12.68 & 49.96 & \textbf{5.14} & 19.60 & 73.78 \\
 &  & $\Psi^\textnormal{boot}_n$ & \textbf{5.06} & 8.78 & 20.26 & \textbf{5.00} & 12.94 & 50.38 & \textbf{5.12} & 19.92 & 73.72 \\
 &  & $\Psi^\textnormal{asy}_n$ & 57.32 & 67.56 & 86.08 & 28.16 & 47.88 & 85.68 & 20.00 & 46.48 & 92.30 \\[4pt]
 & $0.2$ & $\Psi^\pi_n$ & \textbf{5.02} & 8.04 & 20.66 & \textbf{5.36} & 11.72 & 48.64 & \textbf{4.72} & 17.84 & 72.48 \\
 &  & $\Psi^\textnormal{boot}_n$ & \textbf{5.50} & 8.48 & 21.78 & 5.78 & 11.98 & 50.04 & \textbf{4.88} & 18.14 & 72.98 \\
 &  & $\Psi^\textnormal{asy}_n$ & 55.56 & 62.94 & 83.40 & 29.00 & 44.64 & 83.92 & 18.14 & 42.38 & 90.84 \\
\hline

Gumbel & $0.1$ & $\Psi^\pi_n$ & \textbf{5.34} & 9.60 & 25.10 & \textbf{5.14} & 16.34 & 53.04 & \textbf{5.28} & 21.02 & 76.10 \\
 &  & $\Psi^\textnormal{boot}_n$ & \textbf{5.60} & 9.92 & 24.98 & \textbf{5.18} & 16.20 & 53.16 & \textbf{5.16} & 21.12 & 76.22 \\
 &  & $\Psi^\textnormal{asy}_n$ & 58.14 & 67.96 & 87.66 & 28.52 & 51.78 & 88.24 & 19.80 & 48.24 & 93.18 \\[4pt]
 & $0.2$ & $\Psi^\pi_n$ & \textbf{5.48} & 11.72 & 29.00 & \textbf{5.42} & 16.60 & 57.60 & \textbf{5.52} & 24.64 & 78.12 \\
 &  & $\Psi^\textnormal{boot}_n$ & 5.80 & 12.22 & 29.62 & \textbf{5.52} & 17.10 & 58.40 & \textbf{5.50} & 25.04 & 78.36 \\
 &  & $\Psi^\textnormal{asy}_n$ & 56.52 & 67.56 & 87.60 & 28.82 & 50.62 & 88.26 & 19.90 & 49.68 & 92.62 \\
\hline
\end{tabular}

\end{table}

\begin{table}[p]
\caption{Empirical rejection rates in percent for the two-way layout  $d_A = 3$ levels for factor $A$ and $d_B = 2$ levels for factor $B$, (resulting in $d=d_Ad_B=6)$ under the MAR missingness mechanism with component-specific missingness probabilities. The hypothesis $H_{0,\textnormal{equal}}$ is tested at the nominal significance level $\alpha=5\%$. The case $\delta=0$ corresponds to the null hypothesis, whereas $\delta>0$ represents a one-component alternative.}
\label{tab:simulation_mar_3x2_equal_tests_rows}
\centering
\small
\setlength{\tabcolsep}{3.5pt}
\renewcommand{\arraystretch}{1.05}

\begin{tabular}{@{}lll*{9}{r}@{}}
\hline
&&&\multicolumn{3}{c}{$n=20$} & \multicolumn{3}{c}{$n=35$} & \multicolumn{3}{c}{$n=50$} \\
\cline{4-6}
\cline{7-9}
\cline{10-12}
Copula & $\rho_X$ & Test & \multicolumn{3}{c}{$\delta$} & \multicolumn{3}{c}{$\delta$} & \multicolumn{3}{c}{$\delta$} \\
\cline{4-6}
\cline{7-9}
\cline{10-12}
&&& 0.00 & 0.15 & 0.30 & 0.00 & 0.15 & 0.30 & 0.00 & 0.15 & 0.30 \\
\hline

Clayton & $0.1$ & $\Psi^\pi_n$ & \textbf{5.00} & 22.24 & 75.94 & \textbf{4.98} & 43.60 & 98.04 & \textbf{5.24} & 62.40 & 99.88 \\
 &  & $\Psi^\textnormal{boot}_n$ & 5.66 & 23.98 & 77.42 & \textbf{5.16} & 44.74 & 98.16 & \textbf{5.58} & 63.24 & 99.86 \\
 &  & $\Psi^\textnormal{asy}_n$ & 21.40 & 53.52 & 94.64 & 12.80 & 63.76 & 99.42 & 11.06 & 75.74 & 99.96 \\[4pt]
 & $0.2$ & $\Psi^\pi_n$ & \textbf{5.22} & 27.04 & 85.52 & \textbf{5.52} & 52.58 & 99.28 & \textbf{5.02} & 72.36 & 99.98 \\
 &  & $\Psi^\textnormal{boot}_n$ & 6.06 & 29.62 & 87.22 & 6.00 & 53.88 & 99.30 & \textbf{5.30} & 73.06 & 99.98 \\
 &  & $\Psi^\textnormal{asy}_n$ & 20.66 & 58.58 & 97.18 & 12.92 & 70.16 & 99.78 & 10.12 & 82.84 & 99.98 \\
\hline

Gumbel & $0.1$ & $\Psi^\pi_n$ & \textbf{5.12} & 22.12 & 76.84 & \textbf{4.58} & 43.56 & 97.78 & \textbf{5.46} & 62.40 & 99.88 \\
 &  & $\Psi^\textnormal{boot}_n$ & \textbf{5.56} & 23.62 & 78.76 & \textbf{4.82} & 44.28 & 98.04 & \textbf{5.56} & 62.74 & 99.88 \\
 &  & $\Psi^\textnormal{asy}_n$ & 21.84 & 53.72 & 95.52 & 12.86 & 64.28 & 99.48 & 10.36 & 75.56 & 100.00 \\[4pt]
 & $0.2$ & $\Psi^\pi_n$ & \textbf{4.58} & 26.26 & 86.24 & \textbf{4.98} & 51.64 & 99.40 & \textbf{4.88} & 71.66 & 100.00 \\
 &  & $\Psi^\textnormal{boot}_n$ & \textbf{5.12} & 27.92 & 87.74 & \textbf{5.34} & 52.80 & 99.42 & \textbf{5.26} & 72.00 & 100.00 \\
 &  & $\Psi^\textnormal{asy}_n$ & 19.80 & 56.98 & 97.42 & 12.18 & 68.72 & 99.82 & 10.30 & 81.60 & 100.00 \\
\hline
\end{tabular}

\end{table}

\begin{table}[p]
\caption{Empirical rejection rates in percent for the two-way layout with $d_A=3$ levels for factor $A$ and $d_B=2$ levels for factor $B$, (resulting in $d=d_Ad_B=6$) under the MAR missingness mechanism with component-specific missingness probabilities. The hypothesis $H_{0A}$ is tested at the nominal significance level $\alpha=5\%$. The case $\delta=0$ corresponds to the null hypothesis, whereas $\delta>0$ represents a one-component alternative.}
\label{tab:simulation_mar_3x2_a_tests_rows}
\centering
\small
\setlength{\tabcolsep}{3.5pt}
\renewcommand{\arraystretch}{1.05}

\begin{tabular}{@{}lll*{9}{r}@{}}
\hline
&&&\multicolumn{3}{c}{$n=20$} & \multicolumn{3}{c}{$n=35$} & \multicolumn{3}{c}{$n=50$} \\
\cline{4-6}
\cline{7-9}
\cline{10-12}
Copula & $\rho_X$ & Test & \multicolumn{3}{c}{$\delta$} & \multicolumn{3}{c}{$\delta$} & \multicolumn{3}{c}{$\delta$} \\
\cline{4-6}
\cline{7-9}
\cline{10-12}
&&& 0.00 & 0.15 & 0.30 & 0.00 & 0.15 & 0.30 & 0.00 & 0.15 & 0.30 \\
\hline

Clayton & $0.1$ & $\Psi^\pi_n$ & \textbf{5.00} & 41.64 & 95.40 & \textbf{4.68} & 67.16 & 99.94 & \textbf{5.50} & 84.50 & 100.00 \\
 &  & $\Psi^{\pi\min}_n$ & \textbf{5.04} & 41.40 & 95.12 & \textbf{4.64} & 67.18 & 99.94 & \textbf{5.40} & 84.76 & 100.00 \\
 &  & $\Psi^\textnormal{boot}_n$ & \textbf{5.50} & 42.44 & 95.66 & \textbf{4.84} & 67.44 & 99.94 & \textbf{5.38} & 84.86 & 100.00 \\
 &  & $\Psi^\textnormal{asy}_n$ & 9.76 & 55.16 & 97.72 & 7.18 & 73.82 & 99.98 & 7.08 & 87.84 & 100.00 \\[4pt]
 & $0.2$ & $\Psi^\pi_n$ & 5.78 & 47.06 & 97.50 & \textbf{5.50} & 74.68 & 100.00 & \textbf{5.12} & 89.76 & 100.00 \\
 &  & $\Psi^{\pi\min}_n$ & 5.88 & 46.68 & 97.26 & \textbf{5.52} & 74.20 & 100.00 & \textbf{5.06} & 89.82 & 100.00 \\
 &  & $\Psi^\textnormal{boot}_n$ & 6.24 & 48.52 & 97.72 & 5.74 & 75.02 & 100.00 & \textbf{5.18} & 90.18 & 100.00 \\
 &  & $\Psi^\textnormal{asy}_n$ & 9.86 & 59.50 & 98.92 & 7.92 & 79.88 & 100.00 & 6.66 & 91.88 & 100.00 \\
\hline

Gumbel & $0.1$ & $\Psi^\pi_n$ & \textbf{5.36} & 41.90 & 95.88 & \textbf{5.56} & 68.24 & 99.74 & \textbf{5.22} & 85.60 & 100.00 \\
 &  & $\Psi^{\pi\min}_n$ & \textbf{5.48} & 41.84 & 95.58 & \textbf{5.48} & 68.06 & 99.78 & \textbf{5.32} & 85.32 & 100.00 \\
 &  & $\Psi^\textnormal{boot}_n$ & 5.68 & 42.62 & 96.00 & 5.68 & 68.60 & 99.72 & \textbf{5.12} & 85.62 & 100.00 \\
 &  & $\Psi^\textnormal{asy}_n$ & 10.48 & 55.00 & 98.26 & 8.06 & 75.16 & 99.86 & 7.00 & 88.14 & 100.00 \\[4pt]
 & $0.2$ & $\Psi^\pi_n$ & 5.70 & 48.12 & 97.50 & 5.90 & 75.18 & 100.00 & \textbf{4.94} & 90.34 & 100.00 \\
 &  & $\Psi^{\pi\min}_n$ & 5.72 & 47.90 & 97.40 & 5.78 & 74.76 & 100.00 & \textbf{5.00} & 90.36 & 100.00 \\
 &  & $\Psi^\textnormal{boot}_n$ & 6.06 & 49.20 & 97.68 & 5.90 & 75.34 & 100.00 & \textbf{5.04} & 90.66 & 100.00 \\
 &  & $\Psi^\textnormal{asy}_n$ & 9.90 & 60.18 & 99.00 & 7.50 & 80.54 & 100.00 & 6.64 & 92.16 & 100.00 \\
\hline
\end{tabular}

\end{table}

\begin{table}[p]
\caption{Empirical rejection rates in percent for the two-way layout with $d_A=3$ levels for factor $A$ and $d_B=2$ levels for factor $B$, resulting in $d=d_Ad_B=6$, under the MAR missingness mechanism with component-specific missingness probabilities. The hypothesis $H_{0AB}$ is tested at the nominal significance level $\alpha=5\%$. The case $\delta=0$ corresponds to the null hypothesis, whereas $\delta>0$ represents a one-component alternative.}
\label{tab:simulation_mar_3x2_ab_tests_rows}
\centering
\small
\setlength{\tabcolsep}{3.5pt}
\renewcommand{\arraystretch}{1.05}

\begin{tabular}{@{}lll*{9}{r}@{}}
\hline
&&&\multicolumn{3}{c}{$n=20$} & \multicolumn{3}{c}{$n=35$} & \multicolumn{3}{c}{$n=50$} \\
\cline{4-6}
\cline{7-9}
\cline{10-12}
Copula & $\rho_X$ & Test & \multicolumn{3}{c}{$\delta$} & \multicolumn{3}{c}{$\delta$} & \multicolumn{3}{c}{$\delta$} \\
\cline{4-6}
\cline{7-9}
\cline{10-12}
&&& 0.00 & 0.15 & 0.30 & 0.00 & 0.15 & 0.30 & 0.00 & 0.15 & 0.30 \\
\hline

Clayton & $0.1$ & $\Psi^\pi_n$ & \textbf{4.94} & 14.92 & 45.54 & \textbf{5.14} & 24.12 & 74.64 & \textbf{4.76} & 35.32 & 89.70 \\
 &  & $\Psi^{\pi\min}_n$ & \textbf{4.96} & 14.74 & 45.80 & \textbf{5.12} & 24.26 & 74.38 & \textbf{4.92} & 34.98 & 89.82 \\
 &  & $\Psi^\textnormal{boot}_n$ & \textbf{5.36} & 15.20 & 46.54 & \textbf{5.04} & 24.30 & 74.46 & \textbf{4.86} & 35.20 & 89.80 \\
 &  & $\Psi^\textnormal{asy}_n$ & 9.74 & 23.78 & 59.30 & 7.86 & 30.14 & 80.46 & 6.58 & 39.70 & 92.30 \\[4pt]
 & $0.2$ & $\Psi^\pi_n$ & \textbf{4.94} & 17.70 & 52.80 & \textbf{4.94} & 28.50 & 81.60 & \textbf{5.28} & 39.34 & 93.46 \\
 &  & $\Psi^{\pi\min}_n$ & \textbf{5.00} & 17.72 & 53.04 & \textbf{4.92} & 28.28 & 81.92 & \textbf{5.16} & 39.30 & 93.70 \\
 &  & $\Psi^\textnormal{boot}_n$ & \textbf{5.34} & 18.28 & 54.06 & \textbf{5.14} & 29.04 & 82.06 & \textbf{5.22} & 39.58 & 93.54 \\
 &  & $\Psi^\textnormal{asy}_n$ & 9.18 & 26.08 & 65.78 & 7.24 & 34.54 & 86.24 & 6.68 & 44.06 & 95.24 \\
\hline

Gumbel & $0.1$ & $\Psi^\pi_n$ & \textbf{5.32} & 14.84 & 46.56 & \textbf{4.74} & 23.66 & 74.90 & \textbf{5.48} & 34.70 & 90.32 \\
 &  & $\Psi^{\pi\min}_n$ & \textbf{5.34} & 14.84 & 46.50 & \textbf{4.88} & 23.80 & 74.98 & \textbf{5.38} & 34.72 & 90.26 \\
 &  & $\Psi^\textnormal{boot}_n$ & \textbf{5.44} & 15.32 & 47.18 & \textbf{4.84} & 23.68 & 75.08 & \textbf{5.56} & 35.18 & 90.30 \\
 &  & $\Psi^\textnormal{asy}_n$ & 9.90 & 23.98 & 58.86 & 6.98 & 29.76 & 80.56 & 7.06 & 40.02 & 92.48 \\[4pt]
 & $0.2$ & $\Psi^\pi_n$ & \textbf{5.08} & 17.20 & 53.88 & \textbf{5.58} & 30.32 & 82.56 & \textbf{5.04} & 40.58 & 93.98 \\
 &  & $\Psi^{\pi\min}_n$ & \textbf{5.00} & 17.10 & 54.38 & \textbf{5.56} & 30.16 & 82.64 & \textbf{5.14} & 40.60 & 94.06 \\
 &  & $\Psi^\textnormal{boot}_n$ & \textbf{5.30} & 17.86 & 54.92 & 5.80 & 30.60 & 82.64 & \textbf{5.08} & 41.18 & 94.08 \\
 &  & $\Psi^\textnormal{asy}_n$ & 9.06 & 25.16 & 66.24 & 7.70 & 36.86 & 86.84 & 6.32 & 45.84 & 95.12 \\
\hline
\end{tabular}

\end{table}
\FloatBarrier

\color{blue}
\begin{figure}
\centering
\includegraphics[width=0.65\textwidth]{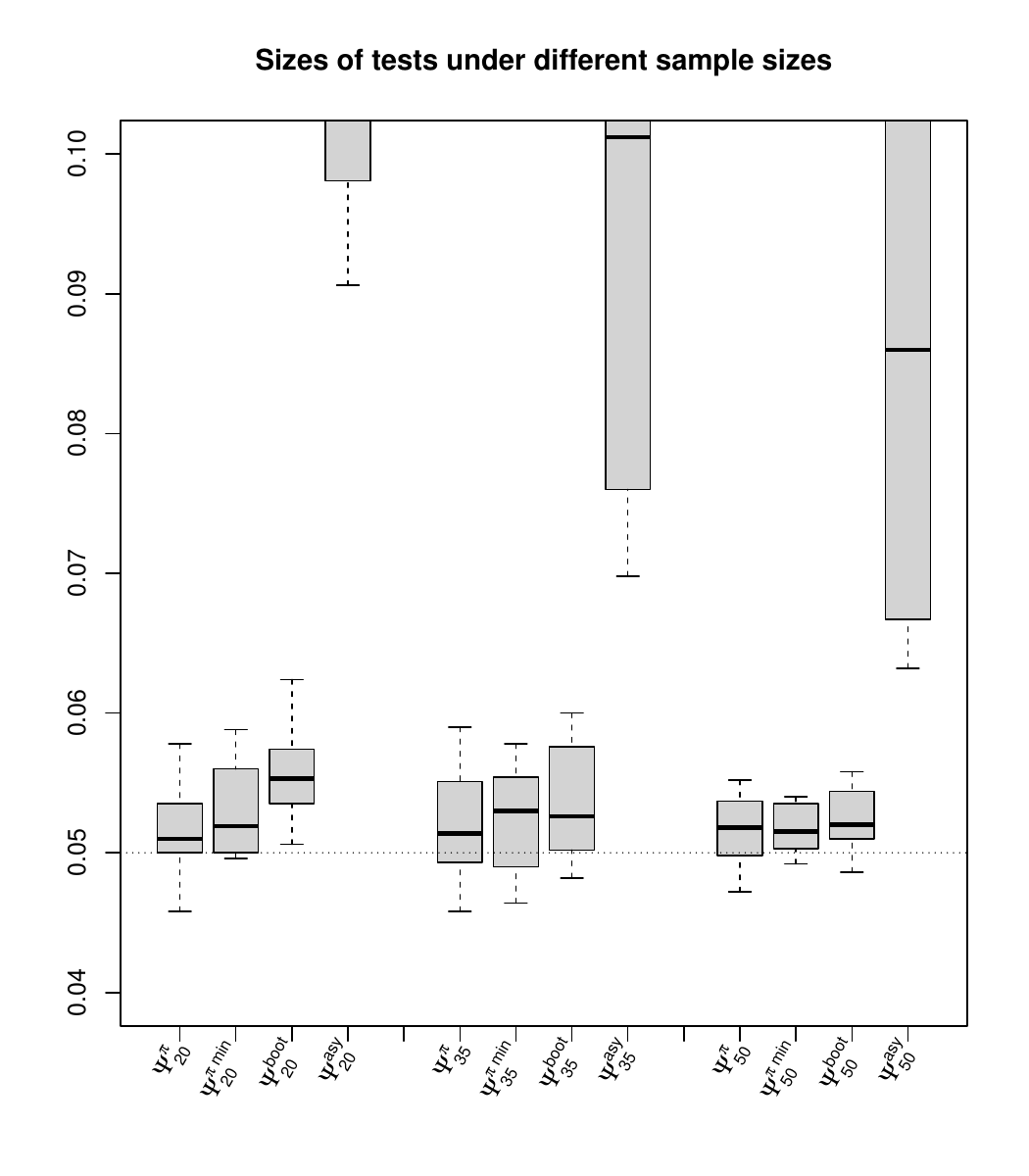}
\caption{Simulated sizes of all tests aggregated over all copulae and null hypotheses, under MAR. 
The dotted line indicates the nominal significance level $\alpha=5\%$. Note that  $\Psi_n^{\pi\textnormal{min}}$ is not available for testing all hypotheses, so comparisons with this test demands caution.}
\label{fig:MAR_size_n}
\end{figure}

\begin{figure}[h]
\centering
\includegraphics[width=0.65\textwidth]{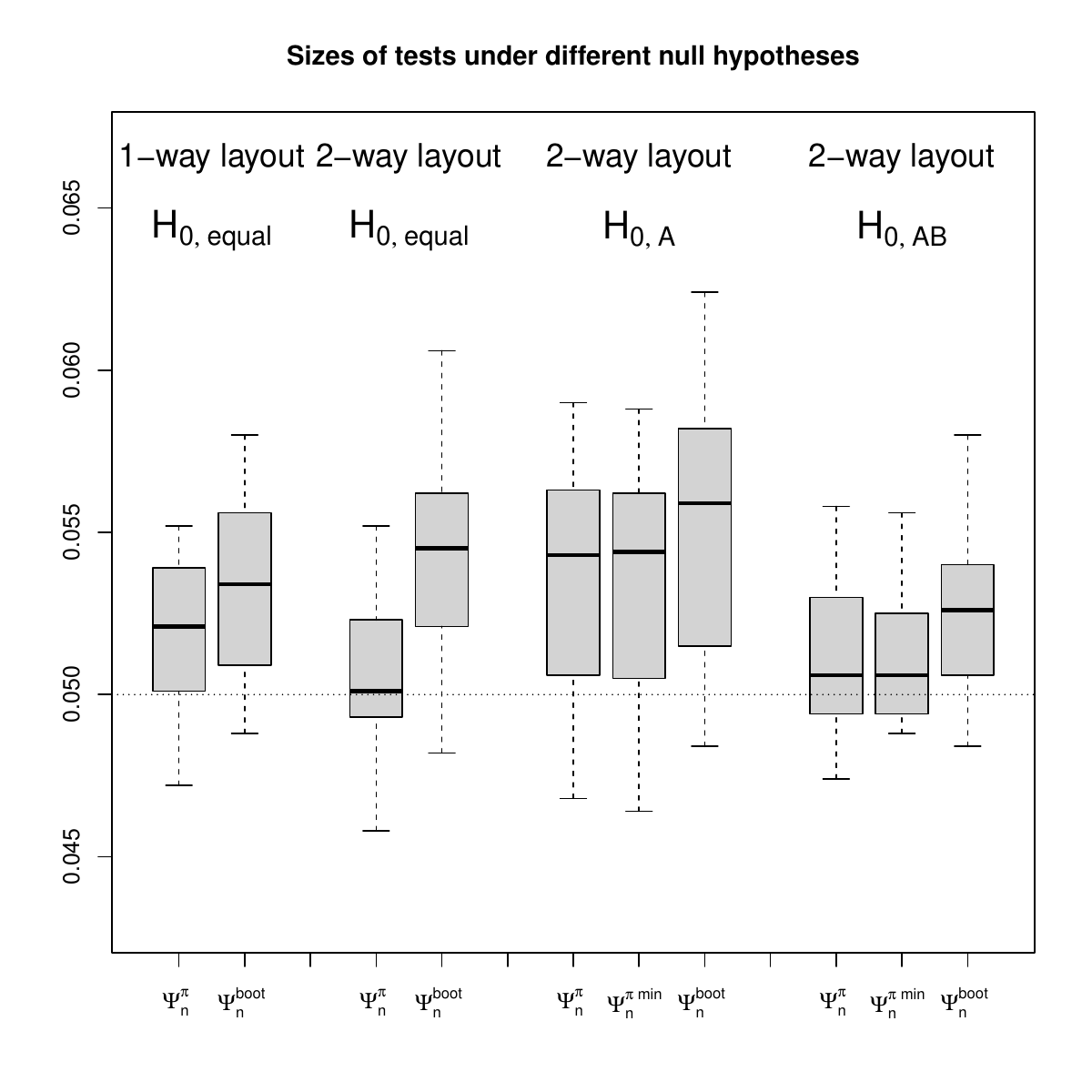}
\caption{Simulated sizes of all tests aggregated over all copulae and sample sizes, under MAR. 
The dotted line indicates the nominal significance level $\alpha=5\%$. Note that  $\Psi_n^{\pi\textnormal{min}}$ is not available for testing $H_{0,\textnormal{equal}}$.}
\label{fig:MAR_size_H0}
\end{figure}

\begin{figure}[h]
\centering
\includegraphics[width=0.45\textwidth]{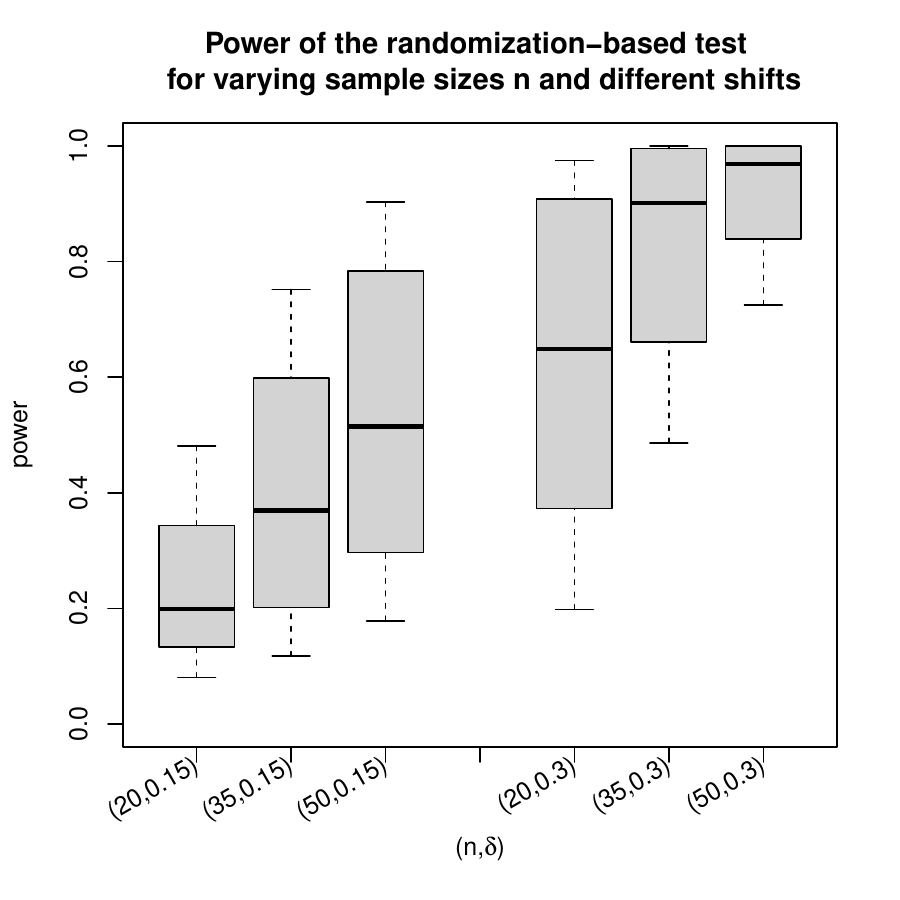}
\caption{Simulated power of the test $\Psi_n^\pi$ aggregated over all copulae, with varying sample sizes $n \in \{20,35,50\}$ (both of three adjacent boxplots) and shifts $\delta \in \{0.15, 0.3\}$ of the raw data, under MAR.}
\label{fig:MAR_power}
\end{figure}
\color{black}

\newpage
\clearpage
\newpage

\bibliographystyle{abbrvnat}
\bibliography{References}

\end{document}